\definecolor{DarkGray}{rgb}{0.1,0.1,0.5}
\newcommand{\bra}[1]{{\langle#1|}}
\newcommand{\ket}[1]{{|#1\rangle}}
\newcommand{\braket}[2]{{\langle#1|#2\rangle}}
\newcommand{\ketbra}[2]{{\ket{#1}\!\bra{#2}}}
\newcommand{\abs}[1]{{\lvert #1\rvert}}	% since the delimiters do not scale, it might be a good idea to add a dummy {} at the end, so \abs{big expression}^2 has the superscript at a low height
\newcommand{\norm}[1]{{\| #1 \|}}
\newcommand{\beq}{\begin{equation}}
\newcommand{\eeq}{\end{equation}}
\def\adjoint{\dagger} %{*}
\def\C {{\bf C}}
\def\N {{\bf N}}
\def\L {{\mathcal L}}
\def\D {{\mathcal D}}
\def\cS {{\ensuremath{\cal S}}}
\DeclareMathOperator{\Span}{\operatorname{Span}}
\DeclareMathOperator{\Range}{\operatorname{Range}}
\newcommand{\identity}{\ensuremath{\boldsymbol{1}}} %\mathbb{I}
\newcommand{\ADV} {\mathrm{Adv}}
\newcommand{\ADVpm} {\mathrm{Adv}^{\pm}}
\newcommand{\B}{\{0,1\}}
\DeclareMathOperator{\wsizeop}{{\operatorname{wsize}}}		% short for span program witness size
\newcommand{\wsize}[1]{{\wsizeop({#1})}}
\newcommand{\wsizeS}[2]{{\wsizeop_{#2}({#1})}}
\newcommand{\wsizexS}[3]{{\wsizeop_{#3}({#1},{#2})}}
\DeclareMathOperator{\wsizefop}{{\operatorname{fwsize}}}	\newcommand{\Sf}{S^f}
\newcommand{\wsizef}[1]{{\wsizefop({#1})}}
\newcommand{\wsizefS}[2]{{\wsizefop_{#2}({#1})}}
\newcommand{\wsizefx}[2]{{\wsizefop({#1},{#2})}}
\newcommand{\wsizefxS}[3]{{\wsizefop_{#3}({#1},{#2})}}
\def\Ifree{I_\mathrm{free}}		% note: there are no extra brackets around I_free, so \Ifree' will have a prime next to the I
\def\jc {\varsigma}	% \iota ?  this is used in the proof of \thmref{t:spanprogramcomposition} to map i in I_{j,b} to (j,b)
\DeclareMathOperator{\abst}{\operatorname{abs}}
\def\biadj {B}	% A_G is the adjacency matrix, \biadj_G the biadjacency matrix
\DeclareMathOperator{\NAND}{\ensuremath{\operatorname{NAND}}}
\DeclareMathOperator{\AND}{\ensuremath{\operatorname{AND}}}
\DeclareMathOperator{\PARITY}{\ensuremath{\operatorname{PARITY}}}
\DeclareMathOperator{\OR}{\ensuremath{\operatorname{OR}}}
\DeclareMathOperator{\MAJ}{{\operatorname{MAJ}_3}}
\DeclareMathOperator{\EQUAL}{{\operatorname{EQUAL}}}
\newcounter{sprows}
\newlength{\spheight}
\newlength{\spraise}
\newcommand{\comment}[1]{\emph{\color{blue}Comment:\color{black} #1}} % use for simply removing comments
\newlength{\commentslength}
\newcommand{\comments}[1]{
\hspace{-2\parindent}
\addtolength{\commentslength}{-\commentslength}
\addtolength{\commentslength}{\linewidth}
\addtolength{\commentslength}{-\parindent}
\fcolorbox{blue}{white}{\smallskip\begin{minipage}[c]{\commentslength}
\emph{Comments:}\begin{itemize}#1\end{itemize}\end{minipage}}\bigskip
}
\renewcommand{\comment}[1]{}\renewcommand{\comments}[1]{}
\newcommand{\rem}[1]{}
\newtheorem{theorem}{Theorem}[section]
\newtheorem{lemma}[theorem]{Lemma}
\newtheorem{corollary}[theorem]{Corollary}
\newtheorem{claim}[theorem]{Claim}
\newtheorem{definition}[theorem]{Definition}
\newfont{\subsubsecfnt}{ptmri8t at 10pt}
\renewcommand{\subparagraph}[1]{\smallskip{\subsubsecfnt #1.}}
\numberwithin{equation}{section} % makes Eq. numbers (section.number)
\newcommand{\eqnref}[1]{\hyperref[#1]{{(\ref*{#1})}}}
\newcommand{\thmref}[1]{\hyperref[#1]{{Theorem~\ref*{#1}}}}
\newcommand{\shortthmref}[1]{\hyperref[#1]{{Thm.~\ref*{#1}}}}
\newcommand{\lemref}[1]{\hyperref[#1]{{Lemma~\ref*{#1}}}}
\newcommand{\corref}[1]{\hyperref[#1]{{Corollary~\ref*{#1}}}}
\newcommand{\defref}[1]{\hyperref[#1]{{Definition~\ref*{#1}}}}
\newcommand{\secref}[1]{\hyperref[#1]{{Section~\ref*{#1}}}}
\newcommand{\figref}[1]{\hyperref[#1]{{Figure~\ref*{#1}}}}
\newcommand{\tabref}[1]{\hyperref[#1]{{Table~\ref*{#1}}}}
\newcommand{\remref}[1]{\hyperref[#1]{{Remark~\ref*{#1}}}}
\newcommand{\appref}[1]{\hyperref[#1]{{Appendix~\ref*{#1}}}}
\newcommand{\claimref}[1]{\hyperref[#1]{{Claim~\ref*{#1}}}}
\newcommand{\propref}[1]{\hyperref[#1]{{Proposition~\ref*{#1}}}}
\newcommand{\exampleref}[1]{\hyperref[#1]{{Example~\ref*{#1}}}}
\newcommand{\conjref}[1]{\hyperref[#1]{{Conjecture~\ref*{#1}}}}
\newcommand{\threshold}[2]{{T_{{#1}}^{{#2}}}}  %{{\Thr_{\text{#1 of #2}}}}
\begin{document}

\title{Span-program-based quantum algorithm\\ for evaluating unbalanced formulas}
\author{%
Ben W.~Reichardt%
\thanks{School of Computer Science and Institute for Quantum Computing, University of Waterloo.}}
\date{}

\maketitle

\begin{abstract}
The formula-evaluation problem is defined recursively.  A formula's evaluation is the evaluation of a gate, the inputs of which are themselves independent formulas.  Despite this pure recursive structure, the problem is combinatorially difficult for classical computers.  

A quantum algorithm is given to evaluate formulas over any finite boolean gate set.  Provided that the complexities of the input subformulas to any gate differ by at most a constant factor, the algorithm has optimal query complexity.  After efficient preprocessing, it is nearly time optimal.  The algorithm is derived using the span program framework.  It corresponds to the composition of the individual span programs for each gate in the formula.  Thus the algorithm's structure reflects the formula's recursive structure.  
\end{abstract}

\section{Introduction}

A $k$-bit \emph{gate} is a function $f : \{0,1\}^k \rightarrow \{0,1\}$.  A \emph{formula} $\varphi$ over a set of gates $\cS$ is a rooted tree in which each node with $k$ children is associated to a $k$-bit gate from~$\cS$, for $k = 1, 2, \ldots$.  Any such tree with $n$ leaves naturally defines a function $\varphi: \{0,1\}^n \rightarrow \{0,1\}$, by placing the input bits on the leaves in a fixed order and evaluating the gates recursively toward the root.  Such functions are often called \emph{read-once} formulas, as each input bit is associated to one leaf only.  

The formula-evaluation problem is to evaluate a formula $\varphi$ over $\cS$ on an input $x \in \{0,1\}^n$.  The formula is given, but the input string $x$ must be queried one bit at a time.  How many queries to $x$ are needed to compute $\varphi(x)$?  We would like to understand this complexity as a function of $\cS$ and asymptotic properties of $\varphi$.  Roughly, larger gate sets allow $\varphi$ to have less structure, which increases the complexity of evaluating $\varphi$.  Another important factor is often the balancedness of the tree $\varphi$.  Unbalanced formulas often seem to be more difficult to evaluate.  

For applications, the most important gate set consists of all AND and OR gates.  Formulas over this set are known as AND-OR formulas.  Evaluating such a formula solves the decision version of a MIN-MAX tree, also known as a two-player game tree.  Unfortunately, the complexity of evaluating formulas, even over this limited gate set, is unknown, although important special cases have been solved.  The problem over much larger gate sets appears to be combinatorially intractable.  For some formulas, it is known that ``non-directional" algorithms that do not work recursively on the structure of the formula perform better than any recursive procedure.  

In this article, we show that the formula-evaluation problem becomes dramatically simpler when we allow the algorithm to be a bounded-error quantum algorithm, and allow it coherent query access to the input string $x$.  Fix $\cS$ to be any finite set of gates.  We give an optimal quantum algorithm for evaluating ``almost-balanced" formulas over $\cS$.  
The balance condition states that the complexities of the input subformulas to any gate differ by at most a constant factor, where complexity is measured by the general adversary bound $\ADVpm$.  In general, $\ADVpm$ is the value of an exponentially large semi-definite program (SDP).  For a formula $\varphi$ with constant-size gates, though, $\ADVpm(\varphi)$ can be computed efficiently by solving constant-size SDPs for each~gate.  

To place this work in context, some classical and quantum results for evaluating formulas are summarized in \tabref{f:formulastable}.  The stated upper bounds are on query complexity and not time complexity.  However, for the $\OR_n$ and balanced $\AND_2$-$\OR_2$ formulas, the quantum algorithms' running times are only slower by a poly-logarithmic factor.  For the other formulas, the quantum algorithms' running times are slower by a poly-logarithmic factor provided that: 
\begin{enumerate}
\item
A polynomial-time classical preprocessing step, outputting a string $s(\varphi)$, is not charged for. 
\item
The algorithms are allowed unit-cost coherent access to $s(\varphi)$.  
\end{enumerate}

\begin{table}
\begin{center}
\begin{tabular}{|c|c@{~}c|c@{$\!\!\!\!\!\!$}c|}
\hline \hline
  &  \multicolumn{2}{c|}{Randomized, zero-error} & \multicolumn{2}{c|}{Quantum bounded-error} \\
Formula $\varphi$ & \multicolumn{2}{c|}{query complexity $R(\varphi)$} & \multicolumn{2}{c|}{query complexity $Q(\varphi)$} \\
\hline
$\OR_n$ & $n$& & $\Theta(\sqrt n)$&\cite{grover:search, BennettBernsteinBrassardVazirani97upper} \\
\hline
Balanced $\AND_2$-$\OR_2$ & $\Theta(n^\alpha)$&\cite{sw:and-or} & $\Theta(\sqrt n)$&\cite{fgg:and-or, AmbainisChildsReichardtSpalekZhang07andor} \\
\hline
Well-balanced AND-OR & tight recursion&\cite{sw:and-or} & & \\
\hline
Approx.-balanced AND-OR & &
& $\Theta(\sqrt n)$&\cite{AmbainisChildsReichardtSpalekZhang07andor}, $\!$(\shortthmref{t:approxbalancedandor})$\!\!$ \\
\hline
Arbitrary AND-OR & $\Omega(n^{0.51})$&\cite{HeimanWigderson91generalANDOR} & 
\begin{minipage}[l]{0.7in}\begin{center}$\Omega(\sqrt n)$\\$O(\sqrt n \log n)$\end{center}\end{minipage}&\begin{minipage}[l]{0.7in}\begin{center}\cite{BarnumSaks04readonce}\\ \cite{Reichardt09andorfaster}\end{center}\end{minipage} \\
\hline 
Balanced $\MAJ$ ($n = 3^d$) & $\Omega\big((7/3)^d\big)$, $O(2.654^d)$&\cite{JayramKumarSivakumar03majority} & $\Theta(2^d)$&\cite{ReichardtSpalek08spanprogram} \\
\hline
Balanced over $\cS$
 & 
 & & $\Theta(\ADVpm(\varphi))$&\cite{Reichardt09spanprogram} \\
\hline
Almost-balanced over $\cS$
 & 
 & & $\Theta(\ADVpm(\varphi))$&(\shortthmref{t:unbalancedformulaevaluation}) \\
\hline \hline
\end{tabular}
\end{center}
\caption{Comparison of some classical and quantum query complexity results for formula evaluation.  Here $\cS$ is any fixed, finite gate set, and the exponent $\alpha$ is given by $\alpha = \log_2 (\frac{1 + \sqrt{33}}{4}) \approx 0.753$.  Under certain assumptions, the algorithms' running times are only poly-logarithmically slower.} 
\label{f:formulastable}
\end{table}

Our algorithm is based on the framework relating span programs and quantum algorithms from~\cite{Reichardt09spanprogram}.  Previous work has used span programs to develop quantum algorithms for evaluating formulas~\cite{ReichardtSpalek08spanprogram}.  Using this and the observation that the optimal span program witness size for a boolean function $f$ equals the general adversary bound $\ADVpm(f)$, Ref.~\cite{Reichardt09spanprogram} gives an optimal quantum algorithm for evaluating ``adversary-balanced" formulas over an arbitrary finite gate set.  The balance condition is that each gate's input subformulas have %exactly %% DEBUG: arxiv incorrectly formats this page
equal general adversary bounds.  

In order to relax this strict balance requirement, we must maintain better control in the recursive analysis.  To help do so, we define a new span program complexity measure, the ``full witness size."  This complexity measure has implications for developing time- and query-efficient quantum algorithms based on span programs.  Essentially, using a second result from~\cite{Reichardt09spanprogram}, that properties of eigenvalue-zero eigenvectors of certain bipartite graphs imply ``effective" spectral gaps around zero, it allows quantum algorithms to be based on span programs with free inputs.  This can simplify the implementation of a quantum walk on the corresponding graph.  

Besides allowing a relaxed balance requirement, our approach has the additional advantage of making the constants hidden in the big-$O$ notation more explicit.  The formula-evaluation quantum algorithms in~\cite{ReichardtSpalek08spanprogram, Reichardt09spanprogram} evaluate certain formulas $\varphi$ using $O\big(\ADVpm(\varphi)\big)$ queries, where the hidden constant depends on the gates in $\cS$ in a complicated manner.  It is not known how to upper-bound the hidden constant in terms of, say, the maximum fan-in $k$ of a gate in $\cS$.  In contrast, the approach we follow here allows bounding this constant by an exponential in $k$.  

It is known that the general adversary bound is a nearly tight lower bound on quantum query complexity for \emph{any} boolean function~\cite{Reichardt09spanprogram}, including in particular boolean formulas.  However, this comes with no guarantees on time complexity.  The main contribution of this paper is to give a nearly time-optimal algorithm for formula evaluation.  The algorithm is also tight for query complexity, removing the extra logarithmic factor from the bound in~\cite{Reichardt09spanprogram}.  

Additionally, we apply the same technique to study AND-OR formulas.  For this special case, special properties of span programs for AND and for OR gates allow the almost-balance condition to be significantly weakened.  Ambainis et al.~\cite{AmbainisChildsReichardtSpalekZhang07andor} have studied this case previously.  By applying the span program framework, we identify a slight weakness in their analysis.  Tightening the analysis extends the algorithm's applicability to a broader class of AND-OR formulas.  

A companion paper~\cite{Reichardt09andorfaster} applies the span program framework to the problem of evaluating \emph{arbitrary} AND-OR formulas.  By studying the full witness size for span programs constructed using a novel composition method, it gives an $O(\sqrt n \log n)$-query quantum algorithm to evaluate a formula of size $n$, for which the time complexity is poly-logarithmically worse after preprocessing.  This nearly matches the $\Omega(\sqrt n)$ lower bound, and improves a $\sqrt n 2^{O(\sqrt {\log n})}$-query quantum algorithm from~\cite{AmbainisChildsReichardtSpalekZhang07andor}.  Ref.~\cite{Reichardt09andorfaster} shares the broader motivation of this paper, to study span program properties and design techniques that lead to time-efficient quantum algorithms.  

Sections~\ref{s:classicalhistory} and~\ref{s:quantumhistory} below give further background on the formula-evaluation problem, for classical and quantum algorithms.  \secref{s:quantumresults} precisely states our main theorem, the proof of which is given in \secref{s:approxbalancedproof} after some background on span programs.  The theorem for approximately balanced AND-OR formulas is stated in \secref{s:quantumANDORapproxbalanceresults}, and proved in \secref{s:approxbalancedandor}.  An appendix revisits the proof from~\cite{AmbainisChildsReichardtSpalekZhang07andor} to prove our extension directly, without using the span program framework.

\subsection{History of the formula-evaluation problem for classical algorithms} \label{s:classicalhistory}

For a function $f : \{0,1\}^n \rightarrow \{0,1\}$, let $D(f)$ be the least number of input bit queries sufficient to evaluate $f$ on any input with zero error.  $D(f)$ is known as the deterministic decision-tree complexity of $f$, or the deterministic query complexity of $f$.  Let the randomized decision-tree complexity of $f$, $R(f) \leq D(f)$, be the least \emph{expected} number of queries required to evaluate $f$ with zero error (i.e., by a Las Vegas randomized algorithm).  Let the Monte Carlo decision-tree complexity, $R_2(f) = O\big(R(f)\big)$, be the least number of queries required to evaluate $f$ with error probability at most $1/3$ (i.e., by a Monte Carlo randomized algorithm).  

Classically, formulas over the gate set $\cS = \{ \NAND_k : k \in \N \}$ have been studied most extensively, where $\NAND_k(x_1, \ldots, x_k) = 1 - \prod_{j=1}^k x_j$.  By De Morgan's rules, any formula over $\NAND$ gates can also be written as a formula in which the gates at an even distance from the formula's root are $\AND$ gates and those an odd distance away are $\OR$ gates, with some inputs or the output possibly complemented.  Thus formulas over $\cS$ are also known as AND-OR formulas.  	

For any AND-OR formula $\varphi$ of \emph{size} $n$, i.e., on $n$ inputs, $D(\varphi) = n$.  However, randomization gives a strict advantage; $R(\varphi)$ and $R_2(\varphi)$ can be strictly smaller.  Indeed, let $\varphi_d$ be the complete, binary AND-OR formula of depth $d$, corresponding to the tree in which each internal vertex has two children and every leaf is at distance $d$ from the root, with alternating levels of AND and OR gates.  Its size is $n = 2^d$.  
Snir~\cite{snir:dec} has given a randomized algorithm for evaluating $\varphi_d$ using in expectation $O(n^\alpha)$ queries, where $\alpha = \log_2 (\frac{1+\sqrt{33}}{4}) \approx 0.753$~\cite{sw:and-or}.  This algorithm, known as randomized alpha-beta pruning, evaluates a random subformula recursively, and only evaluates the second subformula if necessary.  Saks and Wigderson~\cite{sw:and-or} have given a matching lower bound on $R(\varphi_d)$, which Santha has extended to hold for Monte Carlo algorithms, $R_2(\varphi_d) = \Omega(n^\alpha)$~\cite{santha:and-or}.  

Thus the query complexities have been characterized for the complete, binary AND-OR formulas.  In fact, the tight characterization works for a larger class of formulas, called ``well balanced" formulas by~\cite{santha:and-or}.  This class includes, for example, alternating $\AND_2$-$\OR_2$ formulas where for some $d$ every leaf is at depth $d$ or $d-1$, Fibonacci trees and binomial trees~\cite{sw:and-or}.  It also includes skew trees, for which the depth is the maximal $n-1$.  

For arbitrary AND-OR formulas, on the other hand, little is known.  It has been conjectured that complete, binary AND-OR formulas are the easiest to evaluate, and that in particular $R(\varphi) = \Omega(n^\alpha)$ for any size-$n$ AND-OR formula $\varphi$~\cite{sw:and-or}.  However, the best general lower bound is $R(\varphi) = \Omega(n^{0.51})$, due to Heiman and Wigderson~\cite{HeimanWigderson91generalANDOR}.  Ref.~\cite{HeimanWigderson91generalANDOR} also extends the result of \cite{sw:and-or} to allow for AND and OR gates with fan-in more than two.  

It is perhaps not surprising that formulas over most other gate sets $\cS$ are even less well understood.  For example, Boppana has asked the complexity of evaluating the complete ternary majority ($\MAJ$) formula of depth $d$~\cite{sw:and-or}, and the best published bounds on its query complexity are $\Omega\big((7/3)^d\big)$ and $O\big((2.6537\ldots)^d\big)$~\cite{JayramKumarSivakumar03majority}.  In particular, the na{\" i}ve, ``directional," generalization of the randomized alpha-beta pruning algorithm is to evaluate recursively two random immediate subformulas and, if they disagree, then also the third.  This algorithm uses $O\big((8/3)^d\big)$ expected queries, and is suboptimal.  This suggests that the complete $\MAJ$ formulas are significantly different from the complete AND-OR formulas.  

\newcommand{\TCzero}{\mathsf{TC^0}}
\newcommand{\NCone}{\mathsf{NC^1}}

Heiman, Newman and Wigderson have considered read-once threshold formulas in an attempt to separate the complexity classes $\TCzero$ from $\NCone$~\cite{HeimanNewmanWigderson90threshold}.  That is, they allow the gate set to be the set of Hamming-weight threshold gates $\{ \threshold{m}{k} : m, k \in \N \}$ defined by $\threshold{m}{k}: \{0, 1\}^k \rightarrow \{0,1\}$, $\threshold{m}{k}(x) = 1$ if and only if the Hamming weight of $x$ is at least $m$.  AND, OR and majority gates are all special cases of threshold gates.  Heiman et al.\ prove that $R(\varphi) \geq n/2^d$ for $\varphi$ a threshold formula of depth $d$, and in fact their proof extends to gate sets in which every gate ``contains a flip"~\cite{HeimanNewmanWigderson90threshold}.  This implies that a large depth is necessary for the randomized complexity to be much lower than the deterministic complexity.  

Of course there are some trivial gate sets for which the query complexity is fully understood, for example, the set of parity gates.  Overall, though, there are many more open problems than results.  Despite its structure, formula evaluation appears to be combinatorially complicated.  However, there is another approach, to try to leverage the power of quantum computers.  Surprisingly, the formula-evaluation problem simplifies considerably in this different model of computation.

\subsection{History of the formula-evaluation problem for quantum algorithms} \label{s:quantumhistory}

In the quantum query model, the input bits can be queried coherently.  That is, the quantum algorithm is allowed unit-cost access to the unitary operator $O_x$, called the input oracle, defined by 
\beq \label{e:query}
O_x: \, \ket \varphi \otimes \ket j \otimes \ket b \mapsto 
\ket \varphi \otimes \ket j \otimes \ket{b \oplus x_j}
 \enspace .
\eeq
Here $\ket \varphi$ is an arbitrary pure state, $\{\ket j : j = 1,2,\ldots,n\}$ is an orthonormal basis for $\C^{n}$, $\{\ket b : b = 0, 1\}$ is an orthonormal basis for $\C^2$, and $\oplus$ denotes addition mod two.  $O_x$ can be implemented efficiently on a quantum computer given a classical circuit that computes the function $j \mapsto x_j$~\cite{NielsenChuang00}.  For a function $f : \{0,1\}^n \rightarrow \{0,1\}$, let $Q(f)$ be the number of input queries required to evaluate $f$ with error probability at most $1/3$.  It is immediate that $Q(f) \leq R_2(f)$.  

Research on the formula-evaluation problem in the quantum model began with the $n$-bit OR function, $\OR_n$.  Grover gave a quantum algorithm for evaluating $\OR_n$ with bounded one-sided error using $O(\sqrt n)$ oracle queries and $O(\sqrt n \log \log n)$ time~\cite{grover:search, grover:search-time}.  In the classical case, on the other hand, it is obvious that $R_2(\OR_n)$, $R(\OR_n)$ and $D(\OR_n)$ are all~$\Theta(n)$.  

Grover's algorithm can be applied recursively to speed up the evaluation of more general AND-OR formulas.  Call a formula \emph{layered} if the gates at the same depth are the same.  Buhrman, Cleve and Wigderson show that a layered, depth-$d$, size-$n$ AND-OR formula can be evaluated using $O(\sqrt n \log^{d-1} n)$ queries~\cite{BuhrmanCleveWigderson98}.  The logarithmic factors come from using repetition at each level to reduce the error probability from a constant to be polynomially small.  

H\o yer, Mosca and de Wolf \cite{HoyerMoscaDeWolf03berror-search} consider the case of a unitary input oracle $\tilde O_x$ that maps 
\beq
\tilde O_x: \, \ket \varphi \otimes \ket j \otimes \ket b \otimes \ket 0
\mapsto 
\ket \varphi \otimes \ket j \otimes \big( \ket{b \oplus x_j} \otimes \ket{\psi_{x,j,x_j}} + \ket{b \oplus \overline x_j} \otimes \ket{\psi_{x,j,\overline x_j}} \big)
 \enspace ,
\eeq
where $\ket{\psi_{x,j,x_j}}$, $\ket{\psi_{x,j,\overline x_j}}$ are pure states with $\norm{\ket{\psi_{x,j,x_j}}}^2 \geq 2/3$.  Such an oracle can be implemented when the function $j \mapsto x_j$ is computed by a bounded-error, randomized subroutine.  H\o yer et al.\ allow access to $\tilde O_x$ and $\tilde O_x^{-1}$, both at unit cost, and show that $\OR_n$ can still be evaluated using $O(\sqrt n)$ queries.  This \emph{robustness} result implies that the $\log n$ steps of repetition used by~\cite{BuhrmanCleveWigderson98} are not necessary, and a depth-$d$ layered AND-OR formula can be computed in $O(\sqrt n \, c^{d-1})$ queries, for some constant $c > 1000$.  If the depth is constant, this gives an $O(\sqrt n)$-query quantum algorithm, but the result is not useful for the complete, binary AND-OR formula, for which $d = \log_2 n$.  

In 2007, Farhi, Goldstone and Gutmann presented a quantum algorithm for evaluating complete, binary AND-OR formulas~\cite{fgg:and-or}.  Their breakthrough algorithm is not based on iterating Grover's algorithm in any way, but instead runs a quantum walk---analogous to a classical random walk---on a graph based on the formula.  The algorithm runs in time $O(\sqrt n)$ in a certain continuous-time query model.  

Ambainis et al.\ discretized the~\cite{fgg:and-or} algorithm by reinterpreting a correspondence between (discrete-time) random and quantum walks due to Szegedy~\cite{Szegedy04walkfocs} as a correspondence between continuous-time and discrete-time quantum walks~\cite{AmbainisChildsReichardtSpalekZhang07andor}.  Applying this correspondence to quantum walks on certain \emph{weighted} graphs, they gave an $O(\sqrt n)$-query quantum algorithm for evaluating ``approximately balanced" AND-OR formulas.  For example, $\MAJ(x_1, x_2, x_3) = (x_1 \wedge x_2) \vee \big((x_1 \vee x_2) \wedge x_3\big)$, so there is a size-$5^d$ AND-OR formula that computes $\MAJ^d$ the complete ternary majority formula of depth $d$.  Since the formula is approximately balanced, $Q(\MAJ^d) = O(\sqrt 5^d)$, better than the $\Omega\big((7/3)^d\big)$ classical lower bound.  

The~\cite{AmbainisChildsReichardtSpalekZhang07andor} algorithm also applies to arbitrary AND-OR formulas.  If $\varphi$ has size $n$ and depth $d$, then the algorithm, applied directly, evaluates $\varphi$ using $O(\sqrt n \, d)$ queries.\footnote{Actually, \cite[Sec.~7]{AmbainisChildsReichardtSpalekZhang07andor} only shows a bound of $O(\sqrt n \, d^{3/2})$ queries, but this can be improved to $O(\sqrt n \, d )$ using the bounds on $\sigma_\pm(\varphi)$ below \cite[Def.~1]{AmbainisChildsReichardtSpalekZhang07andor}.}  This can be as bad as $O(n^{3/2})$ if the depth is $d = n$.  However, Bshouty, Cleve and Eberly have given a formula rebalancing procedure that takes AND-OR formula $\varphi$ as input and outputs an equivalent AND-OR formula $\varphi'$ with depth $d' = 2^{O(\sqrt{\log n})}$ and size $n' = n \, 2^{O(\sqrt {\log n})}$~\cite{bce:size-depth, bb:size-depth}.  The formula $\varphi'$ can then be evaluated using $O(\sqrt {n'} \, d' ) = \sqrt n \, 2^{O(\sqrt{\log n})}$ queries.  

\medskip

Our understanding of \emph{lower} bounds for the formula-evaluation problem progressed in parallel to this progress on quantum algorithms.  There are essentially two techniques, the \emph{polynomial} and \emph{adversary} methods, for lower-bounding quantum query complexity.  
\begin{itemize}
\item
The polynomial method, introduced in the quantum setting by Beals et al.~\cite{BealsBuhrmanCleveMoscaWolf98}, 
is based on the observation that after making $q$ oracle $O_x$ queries, the probability of any measurement result is a polynomial of degree at most $2q$ in the variables~$x_j$.  
\item
Ambainis generalized the classical hybrid argument, to consider the system's entanglement when run on a superposition of inputs~\cite{Ambainis00adversary}.  A number of variants of Ambainis's bound were soon discovered, including weighted versions~\cite{HoyerNeerbekShi02adv, BarnumSaks04readonce, Ambainis06polynomial, Zhang05adv}, a spectral version~\cite{BarnumSaksSzegedy03adv}, and a version based on Kolmogorov complexity~\cite{LaplanteMagniez04kolmogorov}.  These variants can be asymptotically stronger than Ambainis's original unweighted bound, but are equivalent to each other~\cite{SpalekSzegedy04advequivalent}.  We therefore term it simply ``the adversary bound," denoted by $\ADV$.  
\end{itemize}

The adversary bound is well-suited for lower-bounding the quantum query complexity for evaluating formulas.   For example, Barnum and Saks proved that for any size-$n$ AND-OR formula $\varphi$, $\ADV(\varphi) = \sqrt n$, implying the lower bound $Q(\varphi) = \Omega(\sqrt n)$~\cite{BarnumSaks04readonce}.  Thus the \cite{AmbainisChildsReichardtSpalekZhang07andor} algorithm is optimal for approximately balanced AND-OR formulas, and is nearly optimal for arbitrary AND-OR formulas.  This is a considerably more complete solution than is known classically.  

\medskip

It is then natural to consider formulas over larger gate sets.  The adversary bound continues to work well, because it transforms nicely under function composition: 

\begin{theorem}[Adversary bound composition {\cite{Ambainis06polynomial, LaplanteLeeSzegedy06adversary, HoyerLeeSpalek05compose}}] \label{t:weaknonnegativeadversarycomposition}
Let $f : \{0,1\}^k \rightarrow \{0,1\}$ and let $f_j : \{0,1\}^{m_j} \rightarrow \{0,1\}$ for $j = 1, 2, \ldots, k$.  Define $g : \{0,1\}^{m_1} \times \cdots \times \{0,1\}^{m_k} \rightarrow \{0,1\}$ by $g(x) = f\big(f_1(x_1), \ldots, f_k(x_k)\big)$.  Let $s = (\ADV(f_1), \ldots, \ADV(f_k))$.  Then 
\begin{align} \label{e:weaknonnegativeadversarycomposition}
\ADV(g) &= \ADV_s(f)
 \enspace .
\end{align}
\end{theorem}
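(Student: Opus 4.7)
My plan is to prove the two inequalities $\ADV(g) \geq \ADV_s(f)$ and $\ADV(g) \leq \ADV_s(f)$ separately, using the primal form of the adversary bound for the first and the dual (min-max) form for the second. Recall that the (weighted) adversary bound $\ADV_s(f)$ is the maximum of $\|\Gamma\|/\max_j (\|\Gamma \circ D_j\|/s_j)$ over nonnegative matrices $\Gamma$ supported on pairs $(a,b)$ with $f(a)\neq f(b)$, where $D_j$ marks pairs differing in coordinate $j$; when $s = (1,\ldots,1)$, this reduces to $\ADV(f)$.

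For the lower bound $\ADV(g) \geq \ADV_s(f)$, I would start from an optimal $\Gamma$ for $\ADV_s(f)$ and optimal matrices $\Gamma_1,\ldots,\Gamma_k$ with $\|\Gamma_j\|/\max_i \|\Gamma_j \circ D_i^{(j)}\| = \ADV(f_j) = s_j$, normalized so $\|\Gamma_j \circ D_i^{(j)}\| \leq 1$ for all $i$. The idea is to form a matrix $\Gamma_g$ on inputs to $g$ by, informally, taking the tensor product of the $\Gamma_j$'s on the inner inputs and overlaying the outer matrix $\Gamma$ on coordinates where $f_j(x_j)\neq f_j(y_j)$. Concretely, I would define $\Gamma_g[(x_1,\ldots,x_k),(y_1,\ldots,y_k)] = \Gamma[(f_1(x_1),\ldots,f_k(x_k)),(f_1(y_1),\ldots,f_k(y_k))]\cdot\prod_j (\Gamma_j / s_j)[x_j,y_j]$ on the appropriate rows/columns, and then verify (i) $\|\Gamma_g\| \geq \|\Gamma\|\cdot\prod_j 1$ using tensor product structure together with principal eigenvectors of each factor, and (ii) for each inner coordinate $i$ of some block $j$, $\|\Gamma_g \circ D_i\| \leq \max_j \|\Gamma \circ D_j\|/s_j$, because the factor $\Gamma_j \circ D_i^{(j)}/s_j$ has norm at most $1/s_j$ and the outer block contributes $\|\Gamma\circ D_j\|$.

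For the upper bound $\ADV(g) \leq \ADV_s(f)$, I would use the dual SDP, where $\ADV(g)$ is the minimum over vector assignments $\{u_{x,j}\},\{v_{y,j}\}$ (satisfying certain inner-product constraints on each pair $(x,y)$ with $g(x)\neq g(y)$) of the maximum of $\sum_{j: x_j \neq y_j} \|u_{x,j}\|^2$ and $\sum \|v_{y,j}\|^2$. Take an optimal dual solution for each $f_j$ (witnessing $\ADV(f_j) = s_j$), and an optimal dual solution for $\ADV_s(f)$; combine them by assigning to each leaf $(j,i)$ of $g$ a vector that is the tensor of the outer vector corresponding to coordinate $j$ (rescaled by $1/s_j$) with the inner vector for coordinate $i$ in the dual of $f_j$. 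The verification requires that the inner-product constraints of $g$ are satisfied: when $g(x)\neq g(y)$, the tuples $(f_1(x_1),\ldots)$ and $(f_1(y_1),\ldots)$ differ in $f$, so the outer constraint gives the right inner product on blocks $j$ where $f_j(x_j) \neq f_j(y_j)$, while on blocks where $f_j(x_j) = f_j(y_j)$, the inner constraints guarantee orthogonality. The objective value telescopes: on block $j$ with $f_j(x_j)\neq f_j(y_j)$, the contribution is $s_j^2 \cdot (\text{inner weight})/s_j^2 = $ inner weight summed to at most $1$ times the outer weight, producing a total bounded by $\ADV_s(f)$.

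The main obstacle I expect is in the upper bound: ensuring that the combined dual solution is genuinely feasible, i.e., that the inner-product constraints for $g$ hold on all pairs $(x,y)$ simultaneously, including pairs where most blocks agree on $f_j$ but differ internally. This requires carefully separating the two cases (blocks where $f_j$ differs versus agrees) and confirming the tensor construction has the right rank/orthogonality structure — essentially that the outer solution's constraints pull back along the $f_j$'s without loss. Once the feasibility is established, the norm-squared calculation and the matching of scales via the weights $s_j$ is a bookkeeping exercise.
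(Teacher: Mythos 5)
The paper does not prove \thmref{t:weaknonnegativeadversarycomposition} at all---it is quoted from \cite{Ambainis06polynomial, LaplanteLeeSzegedy06adversary, HoyerLeeSpalek05compose}---so your proposal can only be measured against those standard proofs, whose two-direction outline (primal construction for $\geq$, dual/minimax for $\leq$) you correctly identify. The genuine gap is in your concrete lower-bound construction. By \defref{t:adversarydef}, an adversary matrix $\Gamma_j$ for $f_j$ satisfies $\bra{x_j}\Gamma_j\ket{y_j}=0$ whenever $f_j(x_j)=f_j(y_j)$, so the product $\prod_j(\Gamma_j/s_j)[x_j,y_j]$ vanishes unless \emph{every} block has $f_j(x_j)\neq f_j(y_j)$; your $\Gamma_g$ is therefore supported only on pairs whose images under $(f_1,\ldots,f_k)$ differ in all $k$ coordinates. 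Already for $f=\OR_2$ (with trivial inner functions) this discards the pairs $(10,00)$ and $(01,00)$ and the norm drops from $\sqrt 2$ to $1$, so your claim (i), $\norm{\Gamma_g}\geq\norm{\Gamma}$, fails for the matrix as defined. The standard construction instead uses, on each block $j$ where the outer inputs agree ($a_j=b_j=c$), the rank-one matrix $\delta_j^c(\delta_j^c)^\dagger$ built from the restriction to $f_j^{-1}(c)$ of a principal eigenvector $\delta_j$ of $\Gamma_j$, and uses the off-diagonal block of $\Gamma_j$ only where $a_j\neq b_j$; with that fix, both the norm lower bound and the constraint bounds $\norm{\Gamma_g\circ\Delta_{(j,i)}}\leq\norm{\Gamma\circ\Delta_j}/s_j$ still require a genuine argument, because the rank-one blocks also contribute to $\Gamma_g\circ\Delta_{(j,i)}$. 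This is the real content of the cited composition theorems, not bookkeeping.

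Your upper-bound route is workable, but the obstacle you yourself flag---feasibility on pairs with $g(x)\neq g(y)$ in which some blocks have $f_j(x_j)=f_j(y_j)$ yet differ internally---is exactly the step that needs an argument, and the proposal leaves it unresolved. For the nonnegative adversary it is handled cleanly by the weighted minimax formulation with probability distributions (the form quoted in Eq.~\eqnref{e:minimaxadv}): take optimal families $p$ for $\ADV_s(f)$ and $p^j$ for $\ADV(f_j)$, set $q_x(j,i)=p_{(f_1(x_1),\ldots,f_k(x_k))}(j)\,p^j_{x_j}(i)$, drop the manifestly nonnegative contributions of blocks with $f_j(x_j)=f_j(y_j)$, and use inner feasibility $\sum_{i:(x_j)_i\neq(y_j)_i}\sqrt{p^j_{x_j}(i)\,p^j_{y_j}(i)}\geq 1/s_j$ on the blocks where $f_j$ differs, which yields $\ADV(g)\leq\ADV_s(f)$ directly. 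If you insist on the vector form of the dual SDP, you must additionally show the dual vectors can be chosen so that the cross terms coming from agreeing blocks are nonnegative (or vanish); that is where the argument genuinely uses nonnegativity of the adversary bound, and it is not delivered by the tensor construction as you describe it.
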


See \defref{t:adversarydef} for the definition of the adversary bound with ``costs," $\ADV_s$.  The $\ADV$ bound equals $\ADV_s$ with uniform, unit costs $s = \vec 1$.  For a function $f$, $\ADV(f)$ can be computed using a semi-definite program in time polynomial in the size of $f$'s truth table.  Therefore, \thmref{t:weaknonnegativeadversarycomposition} gives a polynomial-time procedure for computing the adversary bound for a formula $\varphi$ over an arbitrary finite gate set: compute the bounds for subformulas, moving from the leaves toward the root.  At an internal node $f$, having computed the adversary bounds for the input subformulas $f_1, \ldots, f_k$, Eq.~\eqnref{e:weaknonnegativeadversarycomposition} says that the adversary bound for $g$, the subformula rooted at $f$, equals the adversary bound for the \emph{gate} $f$ with certain costs.  Computing this requires $2^{O(k)}$ time, which is a constant if $k = O(1)$.  For example, if $f$ is an $\OR_k$ or $\AND_k$ gate, then $\ADV_{(s_1, \ldots, s_k)}(f) = \sqrt{\sum_j s_j^2}$, from which follows immediately the~\cite{BarnumSaks04readonce} result $\ADV(\varphi) = \sqrt n$ for a size-$n$ AND-OR formula $\varphi$.  

A special case of \thmref{t:weaknonnegativeadversarycomposition} is when the functions $f_j$ all have equal adversary bounds, so $\ADV(g) = \ADV(f) \ADV(f_1)$.  In particular, for a function $f : \{0,1\}^k \rightarrow \{0,1\}$ and a natural number $d \in \N$, let $f^d : \{0,1\}^{k^d} \rightarrow \{0,1\}$ denote the complete, depth-$d$ formula over $f$.  That is, $f^1 = f$ and $f^d(x) = f \big(f^{d-1}(x_1,\ldots, x_{k^{d-1}}), \ldots, f^{d-1}(x_{k^d-k^{d-1}+1}, \ldots, x_{k^d}) \big)$ for $d > 1$.  Then we obtain: 

\begin{corollary} \label{t:nonnegativeadversarycompleteformula}
For any function $f : \{0,1\}^k \rightarrow \{0,1\}$, 
\beq
\ADV(f^d) = \ADV(f)^d
 \enspace .
\eeq
\end{corollary}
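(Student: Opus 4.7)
The plan is to prove the corollary by induction on $d$, using \thmref{t:weaknonnegativeadversarycomposition} as the recursive step. The base case $d = 1$ is immediate since $f^1 = f$.

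For the inductive step, I would write $f^d$ as the composition $g(x) = f\big(f^{d-1}(x_1), \ldots, f^{d-1}(x_k)\big)$, where each $x_j \in \{0,1\}^{k^{d-1}}$. By the inductive hypothesis, each inner function $f_j = f^{d-1}$ satisfies $\ADV(f_j) = \ADV(f)^{d-1}$, so the cost vector $s = (\ADV(f^{d-1}), \ldots, \ADV(f^{d-1}))$ that appears in \thmref{t:weaknonnegativeadversarycomposition} is uniform, equal to $\ADV(f)^{d-1} \cdot \vec 1$. Invoking the theorem gives
\[
\ADV(f^d) = \ADV_s(f) = \ADV_{\ADV(f)^{d-1} \cdot \vec 1}(f).
\]

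The one remaining ingredient is a scaling property: for any scalar $c \geq 0$ and any function $f$,
\[
\ADV_{c \cdot \vec 1}(f) = c \cdot \ADV_{\vec 1}(f) = c \cdot \ADV(f).
\]
This is a standard homogeneity property of the adversary SDP with costs (as in \defref{t:adversarydef}): scaling every cost uniformly just rescales the objective by the same factor, since the cost vector enters linearly in the constraints defining feasible adversary matrices. Applying this with $c = \ADV(f)^{d-1}$ yields $\ADV(f^d) = \ADV(f)^{d-1} \cdot \ADV(f) = \ADV(f)^d$, completing the induction.

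I do not expect any genuine obstacle here; the corollary is essentially a one-line consequence of \thmref{t:weaknonnegativeadversarycomposition} together with homogeneity of $\ADV_s$ in $s$. The only thing to be slightly careful about is making sure that the composition theorem is applied at the correct level (decomposing $f^d$ as one $f$ at the root with $k$ copies of $f^{d-1}$ below, rather than the other way around), so that the inductive hypothesis supplies the costs and \thmref{t:weaknonnegativeadversarycomposition} handles the outer gate.
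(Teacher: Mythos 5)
Your proposal is correct and follows essentially the same route as the paper: the paper derives the corollary from \thmref{t:weaknonnegativeadversarycomposition} via the special case of equal inner adversary bounds, $\ADV(g) = \ADV(f)\,\ADV(f_1)$, applied recursively, which is exactly your induction with the homogeneity $\ADV_{c \cdot \vec 1}(f) = c\,\ADV(f)$ made explicit. The paper simply leaves the induction and the scaling property implicit, so your write-up just fills in those routine details.
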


In particular, Ambainis defined a boolean function $f : \{0,1\}^4 \rightarrow \{0,1\}$ that can be represented exactly by a polynomial of degee two, but for which $\ADV(f) = 5/2$~\cite{Ambainis06polynomial}.  Thus $f^d$ can be represented exactly by a polynomial of degree $2^d$, but by \corref{t:nonnegativeadversarycompleteformula}, $\ADV(f^d) = (5/2)^d$.  
For this function, the adversary bound is strictly stronger than any bound obtainable using the polynomial method.  Many similar examples are given in~\cite{HoyerLeeSpalek07negativeadvurl}.  However, for other functions, the adversary bound is asymptotically worse than the polynomial method~\cite{SpalekSzegedy04advequivalent, AaronsonShi04collisioned, Ambainis05polynomialmethod}.  

In 2007, though, H{\o}yer et al.\ discovered a strict generalization of $\ADV$ that also lower-bounds quantum query complexity~\cite{HoyerLeeSpalek07negativeadv}.  We call this new bound the \emph{general} adversary bound, or $\ADVpm$.  For example, for Ambainis's four-bit function $f$, $\ADVpm(f) \geq 2.51$~\cite{HoyerLeeSpalek07negativeadvurl}.  Like the adversary bound, $\ADVpm_s(f)$ can be computed in time polynomial in the size of $f$'s truth table, and also composes nicely: 

\begin{theorem}[{\cite{HoyerLeeSpalek07negativeadv, Reichardt09spanprogram}}] \label{t:adversarycomposition}
Under the conditions of \thmref{t:weaknonnegativeadversarycomposition}, 
\begin{align}
\ADVpm(g) &= \ADVpm_s(f)
 \enspace .
\end{align}
In particular, if $\ADVpm(f_1) = \cdots = \ADVpm(f_k)$, then $\ADVpm(g) = \ADVpm(f) \, \ADVpm(f_1)$.  
\end{theorem}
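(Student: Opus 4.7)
The plan is to establish both inequalities $\ADVpm(g) \le \ADVpm_s(f)$ and $\ADVpm(g) \ge \ADVpm_s(f)$ by leveraging the SDP formulation of the general adversary bound together with Reichardt's identification of $\ADVpm$ with optimal span program witness size. This reduces the composition question to two complementary constructions: a primal adversary-matrix construction for the lower bound, and a span-program composition for the upper bound.

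For the lower bound $\ADVpm(g) \ge \ADVpm_s(f)$, I would build an explicit feasible adversary matrix for $g$ from optimal adversary matrices for $f$ and the $f_j$'s. Let $\Gamma$ be an optimal adversary matrix witnessing $\ADVpm_s(f)$, satisfying $\|\Gamma \circ \Delta_j\| \le s_j$ for each coordinate $j$, where $s_j = \ADVpm(f_j)$ and $\Delta_j$ is the coordinate-$j$ sensitivity indicator. For each $j$, pick an optimal adversary matrix $\Gamma_j$ for $f_j$ with $\|\Gamma_j\| = s_j$. The composed matrix $\Gamma_g$ is defined by replacing each block $\Gamma[u,v]$ indexed by $u, v \in \{0,1\}^k$ with an appropriately scaled tensor product whose $j$-th factor is $\Gamma_j$ when $u_j \neq v_j$ and a diagonal indicator block for inputs mapping to $u_j = v_j$ otherwise. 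One then checks that $\Gamma_g$ vanishes on pairs with $g(x) = g(y)$, that $\|\Gamma_g \circ \Delta_{(j,i)}\| \le 1$ for every leaf coordinate $(j,i)$, and that $\|\Gamma_g\| = \|\Gamma\| = \ADVpm_s(f)$ by a spectral analysis of the tensor decomposition.

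For the upper bound $\ADVpm(g) \le \ADVpm_s(f)$, I would invoke the span-program framework: by Reichardt's duality theorem, $\ADVpm(h)$ equals the minimum witness size over all span programs computing $h$. Pick a span program $P$ for $f$ achieving witness size $\ADVpm_s(f)$ under input-cost vector $s$, and span programs $P_j$ for $f_j$ achieving witness size $s_j$. The composed program $P \circ (P_1, \ldots, P_k)$ --- obtained by grafting each $P_j$ onto the corresponding input wire of $P$ --- computes $g$, and its witness size equals $\ADVpm_s(f)$ by a direct calculation in which the cost $s_j$ charged to each input wire of $P$ is exactly realized by the witness size $s_j$ of the plugged-in program $P_j$. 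Since any span program for $g$ upper-bounds $\ADVpm(g)$, this yields the claimed inequality.

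The main obstacle is the book-keeping in the lower-bound direction: I need the block structure of $\Gamma_g$ to factor cleanly so that the norm of $\Gamma_g \circ \Delta_{(j,i)}$ is bounded by $\|\Gamma \circ \Delta_j\|/s_j \cdot \|\Gamma_j \circ \Delta_i\| \le 1$. This requires a careful choice of scaling and of the diagonal blocks used on coordinates at which $u_j = v_j$, so that the two cost contributions multiply rather than add and the spectral norm collapses to $\|\Gamma\|$. The ``in particular'' clause then follows immediately from the homogeneity $\ADVpm_{c \vec 1}(f) = c \cdot \ADVpm(f)$, visible directly from the SDP, applied with $c = \ADVpm(f_1) = \cdots = \ADVpm(f_k)$.
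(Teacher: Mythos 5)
First, note that the paper does not prove this theorem at all: it is imported verbatim from the cited works, with the lower-bound direction $\ADVpm(g) \geq \ADVpm_s(f)$ due to H{\o}yer--Lee--{\v S}palek and the upper-bound direction $\ADVpm(g) \leq \ADVpm_s(f)$ due to Reichardt's span-program duality. Your two-pronged strategy is exactly the route those papers take, so in outline you have reconstructed the intended proof. The upper-bound half is essentially sound: by \thmref{t:spanprogramSDPintro} one takes an optimal span program for $f$ with costs $s$ and optimal programs for the inner functions, composes, and uses the fact that the witness size of any program computing $g$ upper-bounds $\ADVpm(g)$. One small omission: the composition theorem (cf.\ the setup before \defref{t:directsumcomposedef}) requires span programs for both $f_j$ and $\neg f_j$ achieving witness size $s_j$; this is harmless because $\ADVpm(\neg f_j) = \ADVpm(f_j)$ and \thmref{t:spanprogramSDPintro} supplies both, but it should be said. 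Also, the composition gives $\wsizeop_s(Q) \leq \wsizeop_r(P)$, an inequality, which is all you need.

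The genuine gap is in the lower-bound half, and it sits precisely where you placed your ``main obstacle.'' The construction you describe---tensor factors equal to $\Gamma_j$ on coordinates where $u_j \neq v_j$ and ``diagonal indicator blocks'' where $u_j = v_j$, with the claim $\norm{\Gamma_g} = \norm{\Gamma}$---does not work as stated. With identity-type blocks on agreeing coordinates the composed matrix is a sum over $(u,v)$ of different tensor products, so $\Gamma_g \circ \Delta_{(j,i)}$ does not factor and the bound $\norm{\Gamma_g \circ \Delta_{(j,i)}} \leq \frac{1}{s_j}\norm{\Gamma \circ \Delta_j}\,\norm{\Gamma_j \circ \Delta_i}$ cannot simply be read off norm multiplicativity. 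The known proof instead uses rank-one projectors onto the principal eigenvectors of the $\Gamma_j$ on agreeing coordinates, together with a balancing lemma stating that an optimal $\Gamma_j$ may be assumed to have its principal eigenvector carrying equal mass on $f_j^{-1}(0)$ and $f_j^{-1}(1)$; this is what makes the sandwiching argument give $\norm{\Gamma_g} \geq \norm{\Gamma}$ (equality is neither needed nor what one obtains) and what makes the Hadamard-product constraint analysis go through, including the contribution of the agreeing-coordinate blocks, which do not vanish under $\circ\,\Delta_{(j,i)}$ in that construction. In other words, the step you defer to ``careful choice of scaling and of the diagonal blocks'' is not book-keeping but the substantive content of the H{\o}yer--Lee--{\v S}palek composition theorem; as written, your sketch neither supplies it nor would the specific choice you name survive it. The final ``in particular'' clause, via homogeneity of $\ADVpm_s$ in $s$, is fine.
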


Define a formula $\varphi$ to be \emph{adversary balanced} if at each internal node, the general adversary bounds of the input subformulas are equal.  In particular, by \thmref{t:adversarycomposition} this implies that $\ADVpm(\varphi)$ is equal to the product of the general adversary bounds of the gates along any path from the root to a leaf.  Complete, {layered} formulas are an example of adversary-balanced formulas.  

\medskip

Returning to upper bounds, Reichardt and {\v S}palek~\cite{ReichardtSpalek08spanprogram} generalized the algorithmic approach started by~\cite{fgg:and-or}.  They gave an optimal quantum algorithm for evaluating adversary-balanced formulas over a considerably extended gate set, including in particular all functions $\{0,1\}^k \rightarrow \{0,1\}$ for $k \leq 3$, $69$ inequivalent four-bit functions, and the gates $\AND_k$, $\OR_k$, $\PARITY_k$ and $\EQUAL_k$, for $k = O(1)$.  For example, $Q(\MAJ^d) = \Theta(2^d)$.  

The~\cite{ReichardtSpalek08spanprogram} result follows from a framework for developing formula-evaluation quantum algorithms based on \emph{span programs}.  A span program, introduced by Karchmer and Wigderson~\cite{KarchmerWigderson93span}, is a certain linear-algebraic way of defining a function, which corresponds closely to eigenvalue-zero eigenvectors of certain bipartite graphs.  \cite{ReichardtSpalek08spanprogram} derived a quantum algorithm for evaluating certain concatenated span programs, with a query complexity upper-bounded by the span program \emph{witness size}, denoted wsize.  In particular, a special case of~\cite[Theorem~4.7]{ReichardtSpalek08spanprogram} is: 

\begin{theorem}[\cite{ReichardtSpalek08spanprogram}] \label{t:spanprogramalgorithm}
Fix a function $f : \{0,1\}^k \rightarrow \{0,1\}$.  
If span program $P$ computes $f$, then 
\beq
Q(f^d) = O\big(\wsize{P}^d\big)
 \enspace .
\eeq
\end{theorem}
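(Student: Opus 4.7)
The plan is to build a span program $P^d$ for $f^d$ by recursive composition of $P$ with itself, bound its witness size by $\wsize{P}^d$, and then invoke the generic span-program-to-quantum-algorithm translation. The composition step, at a high level, substitutes a fresh copy of $P$ for each input variable of $P$, identifying the target/source vectors of the inner copies with the input vectors they replace in the outer copy. Iterating this construction $d$ times yields a span program $P^d$ whose accept/reject behavior recursively mirrors that of $f^d$, so $P^d$ computes $f^d$.

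The next step is to verify that the witness size composes multiplicatively, i.e.\ $\wsize{P^d} = \wsize{P}^d$. This is the heart of the argument, but it follows from a standard ``plug-in'' calculation for span programs: given an optimal witness for the outer copy of $P$ on input $(f_1(x_1),\ldots,f_k(x_k))$, one rescales and concatenates with optimal witnesses from each inner copy of $P$ on the subinputs, and the squared witness sizes multiply through by construction. The analogous statement for one level of composition is essentially Theorem~\ref{t:adversarycomposition} translated into the span-program language, and the $d$-fold statement follows by induction on $d$; the parallel to Corollary~\ref{t:nonnegativeadversarycompleteformula} for $\ADVpm$ makes this unsurprising.

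Having established $\wsize{P^d}=\wsize{P}^d$, the final step is to apply the generic algorithmic result of \cite{ReichardtSpalek08spanprogram}, which turns any span program $\tilde P$ computing a boolean function $g$ into a bounded-error quantum algorithm evaluating $g$ using $O(\wsize{\tilde P})$ queries. Applied to $\tilde P = P^d$ and $g = f^d$, this gives the claimed bound $Q(f^d)=O(\wsize{P}^d)$.

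The main obstacle is the multiplicativity of the witness size under composition: one must carefully set up how the input vectors of the outer program are replaced by the target vectors of the inner programs, and verify that an optimal composite witness can be assembled from optimal witnesses at each level with no loss. Once this bookkeeping is in place, the algorithmic step is a black-box invocation and the final bound drops out immediately.
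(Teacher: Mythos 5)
The paper does not prove this statement itself; it imports it as a special case of~\cite[Theorem~4.7]{ReichardtSpalek08spanprogram}, so the comparison is with how that result is actually established. Your first two steps (compose $P$ with itself along the complete depth-$d$ formula and check that the witness size composes, giving $\wsize{P^d}\leq\wsize{P}^d$ by the standard plug-in of inner witnesses into an outer witness) are sound and match the cited construction; note only that the plug-in argument gives an upper bound, which is all you need, not the equality you assert.

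The genuine gap is in your final step. There is no black-box theorem in~\cite{ReichardtSpalek08spanprogram} (nor elsewhere, unconditionally) stating that \emph{any} span program $\tilde P$ computing $g$ yields a bounded-error quantum algorithm with $O(\wsize{\tilde P})$ queries. The algorithm of~\cite{ReichardtSpalek08spanprogram} is a quantum walk on the graph of a span program obtained by composing fixed, constant-size span programs along an (adversary-)balanced formula, and the real content of its proof is the spectral analysis of that graph --- an eigenvalue-zero witness argument plus a spectral-gap argument --- whose hidden constants depend on the constituent span program $P$ (this is tolerable here only because a single fixed $P$ is reused at every node, so the per-level overhead is an input-independent constant). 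The present paper stresses exactly this point: the query-optimal walk-based algorithm ``does not apply to an arbitrary span program''; one needs, e.g., a bound on $\norm{\abst(A_{G_{\tilde P}})}$ and on the full witness size (\thmref{t:generalspanprogramalgorithmnonblackbox}), and controlling those quantities for a composed program is nontrivial bookkeeping (cf.\ \lemref{t:abstAGPnorm}, \lemref{t:wsizefcompose}, \lemref{t:directsumnorm}). The only truly generic span-program evaluation result, from~\cite{Reichardt09spanprogram}, carries a logarithmic query overhead, which for $\tilde P = P^d$ costs an extra factor of order $d$ and does not give $O(\wsize{P}^d)$. So your reduction bottoms out in a nonexistent black box; to close the argument you must either reproduce the balanced-composition spectral-gap analysis of~\cite{ReichardtSpalek08spanprogram} or supply the norm and full-witness-size bounds that make \thmref{t:generalspanprogramalgorithmnonblackbox} applicable to $P^d$.
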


From \thmref{t:adversarycomposition}, this result is optimal if $\wsize{P} = \ADVpm(f)$.  The question therefore becomes how to find optimal span programs.  Using an ad hoc search, \cite{ReichardtSpalek08spanprogram} found optimal span programs for a variety of functions with $\ADVpm = \ADV$.  Further work automated the search, by giving a semi-definite program (SDP) for the optimal span program witness size for any given function~\cite{Reichardt09spanprogram}.  
Remarkably, the SDP's value always equals the general adversary bound: 

\begin{theorem}[\cite{Reichardt09spanprogram}] \label{t:spanprogramSDPintro}
For any function $f : \{0,1\}^n \rightarrow \{0,1\}$,
\beq
\inf_{P} \wsize P  = \ADVpm(f)
 \enspace ,
\eeq
where the infimum is over span programs $P$ computing $f$.  Moreover, this infimum is achieved.  
\end{theorem}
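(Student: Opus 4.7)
The plan is to establish $\inf_P \wsize{P} = \ADVpm(f)$ by casting both quantities as semi-definite programs and exhibiting explicit correspondences between their feasible solutions, so that any adversary solution yields a span program of equal witness size and conversely.

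First I would record the SDP for $\ADVpm(f)$ as a maximization over symmetric matrices $\Gamma$ indexed by $\{0,1\}^n \times \{0,1\}^n$ with $\Gamma[x,y] = 0$ whenever $f(x) = f(y)$, maximizing $\|\Gamma\|$ subject to $\|\Gamma \circ D_j\| \leq 1$ for each input coordinate $j$, where $D_j[x,y] = 1$ exactly when $x_j \neq y_j$ and $\circ$ denotes the Hadamard product. In parallel I would write $\wsize{P}$ as an optimization over the vectors defining~$P$: for a span program with target $\ket{\tau}$ and input vectors $\{\ket{v_i}\}_{i \in I}$ each labeled by a literal $(j,b)$, the positive witness size $\wsize[+]$ for $x$ with $f(x)=1$ is the minimum of $\|\ket{w}\|^2$ over $\ket{w}$ supported on inputs available under $x$ with $V \ket{w} = \ket{\tau}$, and the negative witness size $\wsize[-]$ for $y$ with $f(y)=0$ is the minimum of $\|V^\dagger \ket{w'}\|^2$ restricted to the unavailable inputs, over $\ket{w'}$ orthogonal to all inputs available under $y$ and satisfying $\braket{\tau}{w'} = 1$. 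After the standard rescaling balancing $\wsize[+]$ and $\wsize[-]$, $\wsize{P}$ is the maximum of these witness sizes over inputs.

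For the inequality $\inf_P \wsize{P} \leq \ADVpm(f)$, I would take a matrix $\Gamma$ attaining $\ADVpm(f)$ together with a principal eigenvector $\ket{\delta}$ of $\Gamma$ at its top eigenvalue. From $\ket{\delta}$ I would read off, for each coordinate $j$ and bit $b$, input vectors built from the rows of $\Gamma \circ D_j$ restricted to those $x$ with $x_j = b$, and set the target vector to $\ket{\delta}$. The constraints $\|\Gamma \circ D_j\| \leq 1$ then translate directly into the positive witnesses reconstructing $\ket{\tau}$ with small norm, while the eigenvector relation for $\Gamma$ gives the negative witnesses. This explicit reading off is the step I expect to be the main obstacle, since one must simultaneously verify that positive witnesses recover $\ket{\tau}$ exactly on accepting inputs, that negative witnesses are orthogonal to the correct subspace on rejecting inputs, and that both sides of the balanced $\wsize$ match $\ADVpm(f)$ rather than a larger quantity.

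For the reverse inequality $\inf_P \wsize{P} \geq \ADVpm(f)$, I would take any span program $P$ with positive and negative witnesses $\{\ket{w_x}\}$, $\{\ket{w'_y}\}$ and form the adversary-candidate matrix $\Gamma_P[y,x] = \bra{w'_y} V \ket{w_x}$, supported on $f^{-1}(0) \times f^{-1}(1)$. The defining relations $V\ket{w_x} = \ket{\tau}$ and $\braket{\tau}{w'_y} = 1$ force $\ket{\tau}$ to witness a large norm of $\Gamma_P$, while the orthogonality of $\ket{w'_y}$ to inputs available under $y$ lets one decompose $\Gamma_P \circ D_j$ as a product whose operator norm is controlled by $\sqrt{\wsize[+] \cdot \wsize[-]}$, yielding $\|\Gamma_P\| \geq \wsize{P}$ after rescaling. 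Finally, attainment of the infimum follows from the upper-bound construction itself, which directly produces a span program matching the optimal $\Gamma$; alternatively, after bounding the ambient dimension in terms of $|f^{-1}(0)| + |f^{-1}(1)|$, the span program optimization becomes a finite-dimensional SDP with compact sublevel sets, so its optimum is attained.
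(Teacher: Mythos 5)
First, note that this paper does not prove \thmref{t:spanprogramSDPintro} at all; it imports it from~\cite{Reichardt09spanprogram}, where the proof goes through SDP duality: the \emph{dual} (minimization) form of $\ADVpm(f)$ supplies families of vectors $\ket{v_{xj}}$ with $\sum_{j : x_j \neq y_j} \braket{v_{xj}}{v_{yj}} = 1$ whenever $f(x) \neq f(y)$, and these are plugged directly into a canonical span program whose witness size is the dual objective; the reverse inequality is obtained by turning the witnesses of an arbitrary span program into a feasible solution of that same dual SDP (equivalently, by bounding $\norm{\Gamma}$ against the witnesses for every feasible adversary matrix $\Gamma$). Your proposal deviates from this in both directions, and in both places the deviation is a genuine gap, not a detail. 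For the upper bound you try to build the span program from the \emph{primal} optimum: an optimal $\Gamma$ and its principal eigenvector $\ket\delta$, with input vectors read off the rows of $\Gamma \circ \Delta_j$. The constraints $\norm{\Gamma \circ \Delta_j} \leq 1$ are norm bounds only; nothing in them forces the positive witnesses to reconstruct the target exactly on $1$-inputs or the negative witnesses to be orthogonal to the available input vectors on $0$-inputs, and no complementary-slackness argument is offered that would extract this exactness from primal optimality. You flag this yourself as ``the main obstacle,'' but it is precisely the content of the theorem: the known construction needs the dual solution, and passing from the primal optimum to a dual optimum is exactly what strong duality provides and what your sketch omits.

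The lower-bound direction is broken more fundamentally. The matrix you define, $\Gamma_P[y,x] = \bra{w'_y} A \ket{w_x}$, is identically $1$ on $f^{-1}(0) \times f^{-1}(1)$, because $A\ket{w_x}$ \emph{equals} the target and $\braket{t}{w'_y} = 1$; it carries no information about $P$, and its norm is $\sqrt{\abs{f^{-1}(0)}\,\abs{f^{-1}(1)}}$ regardless of the witness sizes. Worse, the logic points the wrong way: exhibiting an adversary matrix of norm at least $\wsize{P}$ satisfying the Hadamard-product constraints would prove $\ADVpm(f) \geq \wsize{P}$, i.e.\ the opposite of $\wsize{P} \geq \ADVpm(f)$, and it cannot hold for every $P$ since $\wsize{P}$ is unbounded over span programs computing $f$ while $\ADVpm(f)$ is finite. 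The salvageable germ in your sketch is the pairing identity: for $f(x)=1$, $f(y)=0$, expanding $1 = \braket{t}{w'_y}^* \cdot 1 = \bra{w'_y}A\Pi(x)\ket{w_x}$ and using $\Pi(y)A^\dagger\ket{w'_y} = 0$ shows the sum collapses onto indices $i \in I_{j,x_j}$ with $x_j \neq y_j$. The correct argument then takes an \emph{arbitrary} feasible $\Gamma$, writes $\bra\delta \Gamma \ket\delta$ with this identity inserted entrywise, and applies Cauchy--Schwarz coordinate-by-coordinate against $\norm{\Gamma \circ \Delta_j} \leq 1$ to get $\norm{\Gamma} \leq \wsize{P}$ (equivalently, the witnesses form a feasible dual solution of value $\wsize{P}$). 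Finally, attainment cannot be delegated to your upper-bound construction as it stands; in the actual proof it follows because the dual SDP optimum is attained and the canonical span program built from it achieves that value exactly.
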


This result greatly extends the gate set over which the formula-evaluation algorithm of~\cite{ReichardtSpalek08spanprogram} works optimally.  For example, combined with \thmref{t:spanprogramalgorithm}, it implies that $\lim_{d \rightarrow \infty} Q(f^d)^{1/d} = \ADVpm(f)$ for every boolean function $f$.  More generally, \thmref{t:spanprogramSDPintro} allows the~\cite{ReichardtSpalek08spanprogram} algorithm to be run on formulas over any finite gate set $\cS$.  A factor is lost that depends on the gates in $\cS$, but it will be a constant for $\cS$ finite.  Combining \thmref{t:spanprogramSDPintro} with \cite[Theorem~4.7]{ReichardtSpalek08spanprogram} gives: 

\begin{theorem}[\cite{Reichardt09spanprogram}] \label{t:formulaevaluation}
Let $\cS$ be a finite set of gates.  
Then there exists a quantum algorithm that evaluates an adversary-balanced formula $\varphi$ over $\cS$ using $O\big(\ADVpm(\varphi)\big)$ input queries.  After efficient classical preprocessing independent of the input $x$, and assuming unit-time coherent access to the preprocessed classical string, the running time of the algorithm is $\ADVpm(\varphi) \big(\log \ADVpm(\varphi)\big)^{O(1)}$.  
\end{theorem}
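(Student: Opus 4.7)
The plan is to combine \thmref{t:spanprogramSDPintro} (every boolean function admits a span program whose witness size equals its general adversary bound) with a generalization of \thmref{t:spanprogramalgorithm} (span programs yield quantum formula-evaluation algorithms whose query complexity is controlled by witness size) in a recursive manner governed by the adversary-balance condition.

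First, since $\cS$ is finite, I would for each gate $f \in \cS$ solve the span-program SDP described in~\cite{Reichardt09spanprogram} to obtain, in preprocessing, an optimal span program $P_f$ with $\wsize{P_f}=\ADVpm(f)$. This is a constant-sized computation per gate and produces a fixed catalogue independent of $\varphi$. Given the formula $\varphi$, I would then form the composed span program $P_\varphi$ obtained by substituting $P_f$ at each internal node labelled $f$, in the manner used in~\cite{ReichardtSpalek08spanprogram}. The composition rule for witness size—essentially the span-program analogue of \thmref{t:adversarycomposition}—gives a recursion in which a node evaluating $f$ with children of common witness size $w$ produces a span program of witness size $\wsize{P_f}\cdot w$. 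Because $\varphi$ is adversary-balanced, each internal node's children already have matching general adversary bounds, so the recursion collapses and multiplies out along any root-to-leaf path, yielding
\beq
\wsize{P_\varphi} \;=\; \prod_{f \text{ on a root-to-leaf path}} \ADVpm(f) \;=\; \ADVpm(\varphi),
\eeq
where the final equality is an iterated application of \thmref{t:adversarycomposition}.

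For query complexity, I would invoke the generalization of \cite[Theorem~4.7]{ReichardtSpalek08spanprogram} (of which \thmref{t:spanprogramalgorithm} is the special case $f^d$): the discrete-time quantum walk, or equivalently phase estimation, on the bipartite graph associated with $P_\varphi$ decides $\varphi(x)$ with bounded error using $O(\wsize{P_\varphi})=O(\ADVpm(\varphi))$ queries to $x$. Because $\cS$ is finite, the constant hidden in the $O(\cdot)$ depends on $\cS$ but not on~$\varphi$. This handles the query-complexity claim.

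For the time-complexity claim, I would use the preprocessing step to build a data structure encoding the composed graph: each internal node of $\varphi$ stores pointers to its children and a fixed-size description of the optimal span program for its gate, both computable in classical $\poly(\size{\varphi})$ time. Coherent unit-cost access to this string lets one implement a single step of the walk---a reflection about the rows of the composed span program matrix---in polylogarithmic time, since each basis state touches only one node of $\varphi$ and a constant-sized local span program. Phase estimation to precision $\Theta(1/\wsize{P_\varphi})$ then costs $\wsize{P_\varphi}(\log\wsize{P_\varphi})^{O(1)}$ time, giving the stated bound. The main obstacle I anticipate is the bookkeeping needed to argue that the composition of local span-program reflections really can be implemented in polylog-per-step time from the preprocessed description; everything else is essentially a clean substitution of \thmref{t:spanprogramSDPintro} into the span-program algorithm of~\cite{ReichardtSpalek08spanprogram}, with \thmref{t:adversarycomposition} ensuring that no slack accumulates along the tree.
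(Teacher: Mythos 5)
Your proposal is correct and follows essentially the same route the paper takes: it obtains \thmref{t:formulaevaluation} by combining \thmref{t:spanprogramSDPintro} (an optimal span program per gate from the SDP, which suffices under adversary balance since each node's children have equal adversary bounds) with the composed-span-program algorithm of \cite[Theorem~4.7]{ReichardtSpalek08spanprogram}, losing only a constant factor depending on the finite gate set $\cS$. The time-complexity sketch via preprocessing and coherent access likewise matches the cited treatment.
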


In the discussion so far, we have for simplicity focused on query complexity.  The query complexity is an information-theoretic quantity that does not charge for operations independent of the input string, even though these operations may require many elementary gates to implement.  For practical algorithms, it is important to be able to bound the algorithm's \emph{running time}, which counts the cost of implementing the input-independent operations.  \thmref{t:formulaevaluation} puts an optimal bound on the query complexity, and also puts a nearly optimal bound on the algorithm's time complexity.  In fact, all of the query-optimal algorithms so far discussed are also nearly time optimal.  

In general, though, an upper bound on the query complexity does not imply an upper bound on the time complexity.  Ref.~\cite{Reichardt09spanprogram} also generalized the span program framework of~\cite{ReichardtSpalek08spanprogram} to apply to quantum algorithms not based on formulas.  The main result of~\cite{Reichardt09spanprogram} is: 

\begin{theorem}[\cite{Reichardt09spanprogram}] \label{t:querycomplexitytightnonbinary}
For any function $f : \D \rightarrow \{1, 2, \ldots, m\}$, with $\D \subseteq \{0,1\}^n$, $Q(f)$ satisfies
\begin{align}
Q(f) = \Omega(\ADVpm(f)) 
\quad \text{and} \quad
Q(f) = O\bigg(\ADVpm(f) \, \frac{\log \ADVpm(f)}{\log \log \ADVpm(f)} \log(m) \log \log m \bigg)
 \enspace .
\end{align}
\end{theorem}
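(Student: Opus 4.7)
The statement separates cleanly into a lower bound and an upper bound, and I would treat them independently.

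For the lower bound $Q(f) = \Omega(\ADVpm(f))$, the natural path is to lift the negative-weights adversary lower bound of H\o yer, Lee and \v Spalek from boolean output to $m$-valued output. The bound $\ADVpm$ is defined via a dual feasibility condition that distinguishes inputs on which $f$ takes different values, so the definition already makes sense for arbitrary output alphabets. I would first cite the boolean case from \cite{HoyerLeeSpalek07negativeadv} and then extend to $m$-valued $f$ by essentially the same proof: track the progress function $\sum_{x,y : f(x) \ne f(y)} \Gamma[x,y] \langle \psi_x^t | \psi_y^t \rangle$ along the algorithm's execution and show a query changes it by at most $O(\|\Gamma\|)$, while the final state-distinguishability gives the standard telescoping bound. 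No new idea is required beyond handling more than one ``rejecting'' value at once.

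For the upper bound, the plan is to handle boolean output first and then pay a logarithmic overhead for the output alphabet. In the boolean case, \thmref{t:spanprogramSDPintro} provides a span program $P$ computing $f$ with $\wsize{P} = \ADVpm(f)$. The remaining task is to convert $P$ into a quantum algorithm whose query complexity is $\wsize{P}$ up to small factors. I would build the algorithm by phase estimation on the natural quantum walk operator $U$ associated with $P$'s bipartite graph: \cite{Reichardt09spanprogram} shows that accepting inputs yield an eigenvalue-one eigenvector of $U$ with large overlap on a known starting state, while rejecting inputs produce an effective spectral gap of size $\Omega(1/\wsize{P})$ around one. Each application of $U$ costs $O(1)$ input queries (the $O_x$ queries live on the input edges), so standard phase estimation to precision $1/\wsize{P}$ distinguishes the two cases in $O(\wsize{P})$ queries. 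To sharpen the dependence to $\wsize{P} \cdot \log \wsize{P} / \log\log \wsize{P}$, I would replace vanilla phase estimation by a degree-$d$ polynomial approximation of the indicator of eigenvalue one, with $d = \Theta(\log \wsize{P} / \log\log \wsize{P})$ chosen so that the success probability is boosted to $1 - o(1)$ without an extra repetition loop.

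To handle the $m$-valued case, I would binary-search over the value of $f(x)$. At each step I evaluate a boolean sub-problem ``does $f(x)$ lie in a given subset $S$?''; this sub-problem has general adversary bound at most $\ADVpm(f)$, since any dual-feasible solution for $f$ remains feasible after coarsening the output to $\{S, S^c\}$. The binary search has depth $O(\log m)$, and each of its queries must succeed with error $o(1/\log m)$, which costs an additional $O(\log\log m)$ factor by majority-vote amplification, yielding the stated $\log(m)\log\log m$ overhead.

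The main obstacle is the sharp $\log / \log\log$ factor in the boolean subroutine: one must control phase estimation on $U$ using the effective spectral gap (rather than a true gap on all of $U$'s spectrum), combine this with a carefully chosen low-degree polynomial that both amplifies the eigenvalue-one branch and suppresses eigenvalues away from one, and verify that the resulting circuit still uses $O(1)$ queries per walk step. Everything else — the lower bound, the reduction from $m$-ary to boolean output, and the very existence of a size-$\ADVpm(f)$ span program — is either already in hand from \cite{HoyerLeeSpalek07negativeadv} or from \thmref{t:spanprogramSDPintro}.
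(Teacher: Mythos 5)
Your lower-bound sketch and your reduction from $m$-valued to boolean output (coarsening the output preserves adversary feasibility, so each boolean subproblem has $\ADVpm$ at most $\ADVpm(f)$, and the $\log m \cdot \log\log m$ overhead comes from amplification) are fine and match the cited argument. The gap is in the boolean subroutine. You propose phase estimation on ``the natural quantum walk operator $U$ associated with $P$'s bipartite graph,'' asserting that \cite{Reichardt09spanprogram} supplies both an eigenvalue-one eigenvector with large overlap on the start state (accepting case) and an effective spectral gap $\Omega(1/\wsize{P})$ for $U$ (rejecting case). What \cite{Reichardt09spanprogram} actually proves is an eigenvalue-zero statement and an effective spectral gap for the \emph{weighted adjacency matrix} $A_{G_P}$; transferring this to the eigenphases of a Szegedy-type discrete walk costs a normalization by (roughly) $\norm{\abst(A_{G_P})}$, which is exactly why the discrete-time algorithm of \cite{Reichardt09spanprogram} (\thmref{t:generalspanprogramalgorithmnonblackbox}) carries a hypothesis bounding $\norm{\abst(A_{G_P})}$. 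The span program handed to you by \thmref{t:spanprogramSDPintro} is canonical and, as this paper stresses, typically corresponds to a dense graph with large norm, so that hypothesis fails and your claimed $\Omega(1/\wsize{P})$ phase gap for $U$ is unsupported. This is not a detail: the cited proof of \thmref{t:querycomplexitytightnonbinary} sidesteps precisely this obstruction by evaluating the arbitrary span program with a \emph{continuous-time} quantum walk and then invoking the discrete simulation of continuous-time query algorithms, which is where the $\log \ADVpm(f)/\log\log \ADVpm(f)$ factor originates. Making your discrete-time route work for arbitrary canonical span programs requires a different walk operator (a product of reflections defined directly from the span program, not from the weighted graph) together with an effective-spectral-gap lemma for that operator --- a genuinely new ingredient beyond what you have cited, and one that in fact yields the stronger bound $O(\ADVpm(f))$ per boolean subproblem.

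Relatedly, your explanation of the $\log/\log\log$ factor as a success-amplification device (a degree-$d$ polynomial ``indicator of eigenvalue one'' with $d=\Theta(\log \wsize{P}/\log\log \wsize{P})$) misattributes its origin; it is a simulation overhead, not an amplification cost, and the amplification you need for the $m$-ary reduction is already accounted for by your separate $\log\log m$ factor. Since the stated theorem is an upper bound, proving something stronger would be acceptable, so this misattribution alone would not be fatal --- but as written the central spectral claim about $U$ that your whole boolean subroutine rests on is not established by the references you invoke.
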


\thmref{t:querycomplexitytightnonbinary} in particular allows us to compute the query complexity of formulas, up to the logarithmic factor.  It does \emph{not} give any guarantees on running time.  However, the analysis required to prove \thmref{t:querycomplexitytightnonbinary} also leads to significantly simpler proofs of \thmref{t:formulaevaluation} and the AND-OR formula results of~\cite{AmbainisChildsReichardtSpalekZhang07andor, fgg:and-or}.  Moreover, we will see that it allows the formula-evaluation algorithms to be extended to formulas that are not adversary balanced.

\subsection{Quantum algorithm for evaluating almost-balanced formulas} \label{s:quantumresults}

\def\pathsum #1{{\sigma_+({#1})}}
\def\pathinvsum #1{{\sigma_-(#1)}}
\def\path {\xi}

We give a formula-evaluation algorithm that is both query-optimal, without a logarithmic overhead, and, after an efficient preprocessing step, nearly time optimal.  Define almost balance as follows: 

\begin{definition} \label{t:approxbalancedef}
Consider a formula $\varphi$ over a gate set $\cS$.  For a vertex $v$ in the corresponding tree, let $\varphi_v$ denote the subformula of $\varphi$ rooted at $v$, and, if $v$ is an internal vertex, let $g_v$ be the corresponding gate.  The formula $\varphi$ is \emph{$\beta$-balanced} if for every vertex $v$, with children $c_1, c_2, \ldots, c_k$, 
\beq \label{e:approxbalancedef}
\frac{\max_{j} \ADVpm(\varphi_{c_j})}{\min_{j} \ADVpm(\varphi_{c_j})} \leq \beta
 \enspace .
\eeq
(If $c_j$ is a leaf, $\ADVpm(\varphi_{c_j}) = 1$.)  Formula $\varphi$ is \emph{almost balanced} if it is $\beta$-balanced for some $\beta = O(1)$.  
\end{definition}

In particular, an adversary-balanced formula is $1$-balanced.  We will show: 

\begin{theorem} \label{t:unbalancedformulaevaluation}
Let $\cS$ be a fixed, finite set of gates.  Then there exists a quantum algorithm that evaluates an almost-balanced formula $\varphi$ over $\cS$ using $O\big(\ADVpm(\varphi)\big)$ input queries.  After polynomial-time classical preprocessing independent of the input, and assuming unit-time coherent access to the preprocessed string, the running time of the algorithm is $\ADVpm(\varphi) \big(\log \ADVpm(\varphi)\big)^{O(1)}$.  
\end{theorem}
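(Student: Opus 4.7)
The plan is to build an optimal span program $P_\varphi$ for $\varphi$ by recursively composing fixed optimal span programs for the gates in $\cS$, and then feed $P_\varphi$ into the span-program-to-quantum-walk machinery of \cite{Reichardt09spanprogram}. First, because $\cS$ is finite, I would apply \thmref{t:spanprogramSDPintro} once (off line) to pick, for each $g \in \cS$, a canonical span program $P_g$ with $\wsize{P_g} = \ADVpm(g)$. Each $P_g$ is of constant size, so its full witness size $\wsizef{P_g}$, together with all the linear-algebraic data required to implement one step of the associated quantum walk, is bounded by a constant $C_\cS$ depending only on $\cS$ (and exponential in the maximum fan-in $k$).

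Next, at each internal node $v$ with gate $f = g_v$ and child subformulas carrying span programs $P_1,\dots,P_k$, I would rescale the $P_j$ so their witness sizes equal the costs $s_j = \ADVpm(\varphi_{c_j})$, and then substitute them into $P_f$ via the span-program composition construction underlying \thmref{t:adversarycomposition}. This gives a span program $P_{\varphi_v}$ computing $\varphi_v$ with $\wsize{P_{\varphi_v}} \le \ADVpm_s(f) = \ADVpm(\varphi_v)$, matching the adversary lower bound exactly.

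The main obstacle is the \emph{full} witness size rather than the ordinary witness size: when the $s_j$'s are unequal, the rescaling needed to plug child programs into $P_f$ amplifies the contribution of the ``free'' part of each $P_j$ by a factor depending on the ratios $s_j / s_{j'}$. Naively iterated over a deep unbalanced tree, this amplification could blow up. Here the $\beta$-balance hypothesis enters decisively. I would prove by induction on the height of $v$ that
\begin{equation}
\wsizef{P_{\varphi_v}} \le K_\cS(\beta)\,\ADVpm(\varphi_v),
\end{equation}
where the induction step sets up a linear recurrence for the ratio $\wsizef{P_{\varphi_v}}/\ADVpm(\varphi_v)$ with multiplicative factors controlled by $\beta$ and by the constants $C_\cS$ attached to individual gates. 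Because each $s_j/s_{j'} \in [1/\beta,\beta]$, this recurrence has a \emph{bounded} fixed point $K_\cS(\beta) = O(1)$, and the induction closes.

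Finally, I would convert $P_\varphi$ into the algorithm using the quantum-walk framework of \cite{Reichardt09spanprogram}: run phase estimation on the bipartite-graph walk operator determined by $P_\varphi$, using the effective-spectral-gap-at-zero analysis alluded to in the introduction, which exploits exactly the full-witness-size bound above and removes the logarithmic overhead of \thmref{t:querycomplexitytightnonbinary}. This yields $O(\ADVpm(\varphi))$ queries. For the time bound, the classical preprocessing computes and stores a succinct description $s(\varphi)$ of $P_\varphi$ (the formula tree annotated with the precomputed $P_g$'s and with the values $\ADVpm(\varphi_v)$ used for rescaling), a polynomial-time task since $|\cS|$ and the $P_g$'s are fixed. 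Given unit-cost coherent access to $s(\varphi)$, one walk step acts locally on the star around a single graph vertex and can be implemented in $\mathrm{polylog}(\ADVpm(\varphi))$ elementary operations, giving total running time $\ADVpm(\varphi)(\log \ADVpm(\varphi))^{O(1)}$ as claimed.
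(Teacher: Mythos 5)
There is a genuine gap at the heart of your construction: you fix, once and for all, a single uniform-cost-optimal span program $P_g$ with $\wsize{P_g} = \ADVpm(g)$ for each gate $g \in \cS$, and then claim that composing it with child programs of witness sizes $s_j = \ADVpm(\varphi_{c_j})$ yields $\wsize{P_{\varphi_v}} \leq \ADVpm_s(g_v) = \ADVpm(\varphi_v)$. Composition only gives $\wsize{P_{\varphi_v}} \leq \wsizeS{P_{g_v}}{s}$, and a span program optimized for uniform costs need not be optimal for the nonuniform cost vector $s$; when the children's adversary bounds are unequal you can lose a factor as large as roughly $\max_j s_j / \min_j s_j \leq \beta$ at that node, and these losses multiply down the tree, so for an almost-balanced formula of unbounded depth the composed witness size is not $O(\ADVpm(\varphi))$. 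This is exactly the trap the paper warns about in \secref{s:quantumresults}: the same span program can be reused at two vertices only if the children's adversary bounds agree up to simultaneous scaling, so in general an almost-balanced formula needs an unbounded family of span programs. The paper's proof therefore invokes the costed generalization of \thmref{t:spanprogramSDPintro} (Theorem~6.1 of~\cite{Reichardt09spanprogram}) separately at every vertex $v$, producing canonical span programs $P_v$ and $P_v^\dagger$ with $\wsizeS{P_v}{\alpha} = \ADVpm_\alpha(g_v) = \ADVpm(\varphi_v)$ for the specific cost vector $\alpha_j = \ADVpm(\varphi_{c_j})$, and then must prove bounds on $\norm{\abst(A_{G_{P_v}})}$ and on the full witness size that are \emph{uniform over all possible cost vectors}; your claim that a constant $C_\cS$ controls all the per-gate data is automatic only in your (flawed) fixed-program setup, whereas in the correct setup it requires \lemref{t:abstAGPnorm}, canonicality, and the balance hypothesis to get $\norm{\abst(A_{G_{P_v}})} = \beta\,2^{O(k)}$.

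Your induction for the full witness size is also not substantiated as stated. A ``linear recurrence with multiplicative factors controlled by $\beta$'' would generically blow up with depth; what actually closes the induction in the paper (\lemref{t:unbalancedformulaevaluationtruefalsewitness}, via \lemref{t:wsizefcompose}) is that the ratio $\wsizef{P_{\varphi_v}}/\ADVpm(\varphi_v)$ grows only \emph{additively}, by $1/\ADVpm(\varphi_v)$ per level --- this uses that each $P_v$ is strict (no free-input contribution in the true case) and canonical (the false witness has norm exactly $1$) --- and that these increments sum to $\pathinvsum{\varphi} = O(\beta^2)$ because $\beta$-balance together with every gate depending on at least two bits forces the geometric growth $\ADVpm(\varphi_v) \geq \sqrt{1+1/\beta^2}\,\max_j \ADVpm(\varphi_{c_j})$ (\lemref{t:balancegeometric}); note this last step needs the preliminary reduction eliminating gates that depend on fewer than two bits, and the dual programs $P_v^\dagger$ for negated subformulas, neither of which appears in your sketch. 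Finally, the algorithm used is the discrete-walk evaluation of \thmref{t:generalspanprogramalgorithmnonblackbox}, whose query bound is $O(\wsizef{P_\varphi}\,\norm{\abst(A_{G_{P_\varphi}})})$, so the graph-norm bound enters the query count itself and must be proved for the composed graph (via the even/odd splitting argument behind \lemref{t:directsumnorm}), not only for the time analysis as your proposal suggests.
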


\thmref{t:unbalancedformulaevaluation} is significantly stronger than \thmref{t:formulaevaluation}, which requires exact balance.  There are important classes of exactly balanced formulas, such as complete, layered formulas.  In fact, it is sufficient that the multiset of gates along the simple path from the root to a leaf not depend on the leaf.  Moreover, sometimes different gates have the same $\ADVpm$ bound; see~\cite{HoyerLeeSpalek07negativeadvurl} for examples.  Even still, exact adversary balance is a very strict condition.  

The proof of \thmref{t:unbalancedformulaevaluation} is based on the span program framework developed in Ref.~\cite{Reichardt09spanprogram}.  In particular, \cite[Theorem~9.1]{Reichardt09spanprogram} gives \emph{two} quantum algorithms for evaluating span programs.  The first algorithm is based on a discrete-time simulation of a continuous-time quantum walk.  It applies to arbitrary span programs, and is used, in combination with \thmref{t:spanprogramSDPintro}, to prove \thmref{t:querycomplexitytightnonbinary}.  However, the simulation incurs a logarithmic query overhead and potentially worse time complexity overhead, so this algorithm is not suitable for proving \thmref{t:unbalancedformulaevaluation}.  

The second algorithm in~\cite{Reichardt09spanprogram} is based directly on a discrete-time quantum walk, similar to previous optimal formula-evaluation algorithms~\cite{AmbainisChildsReichardtSpalekZhang07andor, ReichardtSpalek08spanprogram}.  However, this algorithm does not apply to an arbitrary span program.  A bound is needed on the operator norm of the entry-wise absolute value of the weighted adjacency matrix for a corresponding graph.  Further graph sparsity conditions are needed for the algorithm to be time efficient (see \thmref{t:generalspanprogramalgorithmnonblackbox}).  

Unfortunately, the span program from \thmref{t:spanprogramSDPintro} will not generally satisfy these conditions.  \thmref{t:spanprogramSDPintro} gives a \emph{canonical} span program (\cite[Def.~5.1]{Reichardt09spanprogram}).  Even for a simple formula, the optimal canonical span program will typically correspond to a dense graph with large norm.  

An example should clarify the problem.  Consider the AND-OR formula $\psi(x) = \big( [ (x_1 \wedge x_2) \vee x_3 ] \wedge x_4 \big) \vee \big( x_5 \wedge [x_6 \vee x_7] \big)$, and consider the two graphs in \figref{f:graphexamples}.  For an input $x \in \{0,1\}^7$, modify the graphs by attaching dangling edges to every vertex $j$ for which $x_j = 0$.  Observe then that each graph has an eigenvalue-zero eigenvector supported on vertex $0$---called a \emph{witness}---if and only if $\psi(x) = 1$.  The graphs correspond to different span programs computing $\psi$, and the quantum algorithm works essentially by running a quantum walk starting at vertex $0$ in order to detect the witness.  The graph on the left is a significantly simplified version of a canonical span program for $\psi $, and its density still makes it difficult to implement the quantum walk.  

\begin{figure}
\centering
\begin{tabular}{c@{$\quad$}c}
\subfigure[]{\label{f:tensorproductgraphexample}\includegraphics[scale=.75]{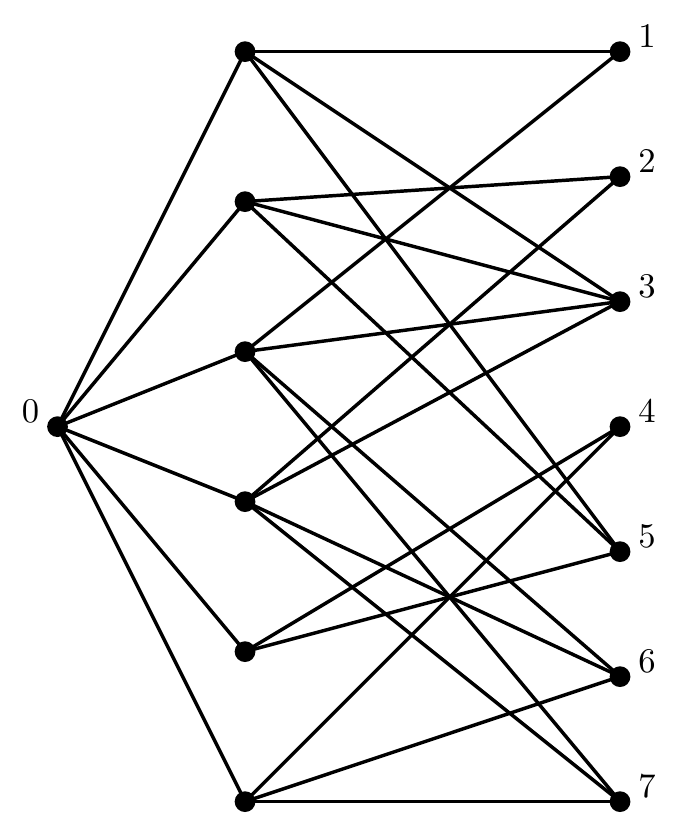}}
&
\subfigure[]{\label{f:directsumgraphexample}\raisebox{1.40cm}{\includegraphics[scale=.75]{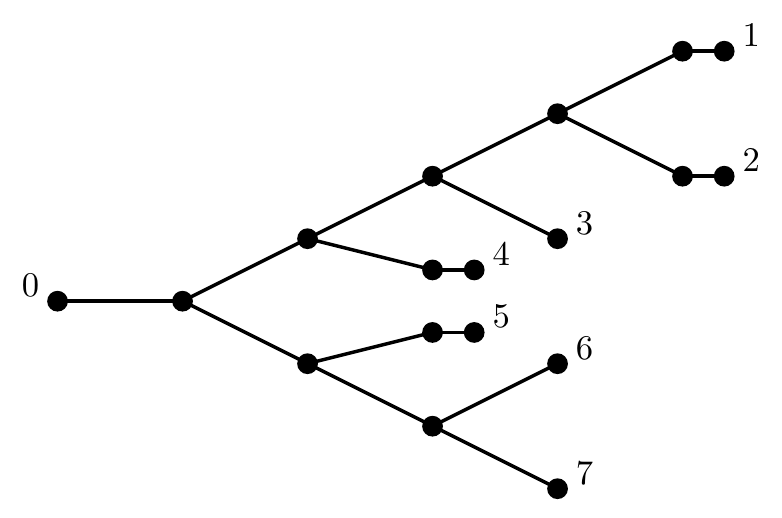}}}
\end{tabular}
\caption{Graphs corresponding to two span programs both computing the same function.} \label{f:graphexamples}
\end{figure}

We will be guided by the second, simpler graph.  Instead of applying \thmref{t:spanprogramSDPintro} to $\varphi$ as a whole, we apply it separately to every gate in the formula.  We then compose these span programs, one per gate, according to the formula, using \emph{direct-sum composition} (\defref{t:directsumcomposedef}).  In terms of graphs, direct-sum composition attaches the output vertex of one span program's graph to an input vertex of the next~\cite{ReichardtSpalek08spanprogram}.  This leads to a graph whose structure somewhat follows the structure of the formula $\varphi$, as the graph in \figref{f:directsumgraphexample} follows the structure of $\psi$.  (However, the general case will be more complicated than shown, as we are plugging together constant-size graph gadgets, and there may be duplication of some subgraphs.)  

Direct-sum composition keeps the maximum degree and norm of the graph under control---each is at most twice its value for the worst single gate.  Therefore the second~\cite{Reichardt09spanprogram} algorithm applies.  However, direct-sum composition also leads to additional overhead.  In particular, a witness in the first graph will be supported only on numbered vertices (note that the graph is bipartite), whereas a witness in the second graph will be supported on some of the internal vertices as well.  This means roughly that the second witness will be harder to detect, because after normalization its overlap on vertex $0$ will be smaller.  Scale both witnesses so that the amplitude on vertex $0$ is one.  The \emph{witness size} ($\wsizeop$) measures the squared length of the witness only on numbered vertices, whereas the \emph{full witness size} ($\wsizefop$) measures the squared length on all vertices.  For~\cite{Reichardt09spanprogram}, it was sufficient to consider only span program witness size, because for canonical span programs like in \figref{f:tensorproductgraphexample} the two measures are equal.  (For technical reasons, we will actually define $\wsizefop$ to be $1 + \wsizeop$ even in this case.)  For our analysis, we will need to bound the full witness size in terms of the witness size.  We maintain this bound in a recursion from the formula's leaves toward its root. 

A span program is called \emph{strict} if every vertex on one half of the bipartite graph is either an input vertex (vertices $1$--$7$ in the graphs of \figref{f:graphexamples}) or the output vertex (vertex $0$).  Thus the first graph in the example above corresponds to a strict span program, and the second does not.  The original definition of span programs, in~\cite{KarchmerWigderson93span}, allowed for only strict span programs.  This was sensible because any other vertices on the input/output part of the graph's bipartition can always be projected away, yielding a strict span program that computes the same function.  For developing time-efficient quantum algorithms, though, it seems important to consider span programs that are not strict.  Unfortunately, going backwards, e.g., from \ref{f:tensorproductgraphexample} to \ref{f:directsumgraphexample}, is probably difficult in general.  

\thmref{t:unbalancedformulaevaluation} does \emph{not} follow from the formula-evaluation techniques of~\cite{ReichardtSpalek08spanprogram}, together with \thmref{t:spanprogramalgorithm} from~\cite{Reichardt09spanprogram}.  This tempting approach falls into intractable technical difficulties.  In particular, the same span program can be used at two vertices $v$ and $w$ in $\varphi$ only if $g_v = g_w$ and the general adversary bounds of $v$'s input subformulas are the same as those for $w$'s inputs up to simultaneous scaling.  In general, then, an almost-balanced formula will require an unbounded number of different span programs.  However, the analysis in~\cite{ReichardtSpalek08spanprogram} loses a factor that depends badly on the individual span programs.  Since the dependence is not continuous, even showing that the span programs in use all lie within a compact set would not be sufficient to obtain an $O(1)$ upper bound.  In contrast, the approach we follow here allows bounding the lost factor by an exponential in $k$, uniformly over different gate imbalances.

\subsection{Quantum algorithm to evaluate approximately balanced AND-OR formulas} \label{s:quantumANDORapproxbalanceresults}

Ambainis et al.~\cite{AmbainisChildsReichardtSpalekZhang07andor} use a weaker balance criterion for AND-OR formulas than \defref{t:approxbalancedef}.  They define an AND-OR formula to be approximately balanced if $\pathinvsum{\varphi} = O(1)$ and $\pathsum{\varphi} = O(n)$.  Here $n$ is the size of the formula, i.e., the number of leaves, and $\pathinvsum{\varphi}$ and $\pathsum{\varphi}$ are defined by: 

\begin{definition} \label{t:approximatelybalanced}
For each vertex $v$ in a formula $\varphi$, let 
\beq\begin{split} \label{e:approximatelybalanced}
\pathinvsum{v} &= \max_\path \, \sum_{w \in \path} \frac1{\ADVpm(\varphi_w)} \\
\pathsum{v} &= \max_\path \, \sum_{w \in \path} \ADVpm(\varphi_w)^2
 \enspace ,
\end{split}\eeq
with each maximum taken over all simple paths $\path$ from $v$ to a leaf.  Let $\sigma_\pm(\varphi) = \sigma_\pm(r)$, where $r$ is the root of $\varphi$.  
\end{definition}

Recall that $\ADVpm(\varphi) = \ADV(\varphi) = \sqrt n$ for an AND-OR formula.  \defref{t:approxbalancedef} is a stricter balance criterion because $\beta$-balance of a formula $\varphi$ implies (by \lemref{t:balancegeometric}) that $\pathinvsum{\varphi}$ and $\pathsum{\varphi}$ are both dominated by geometric series.  However, the same steps followed by the proof of \thmref{t:unbalancedformulaevaluation} still suffice for proving the~\cite{AmbainisChildsReichardtSpalekZhang07andor} result, and, in fact, for strengthening it.  We show: 

\begin{theorem} \label{t:approxbalancedandor}
Let $\varphi$ be an AND-OR formula of size $n$.  Then after polynomial-time classical preprocessing that does not depend on the input $x$, $\varphi(x)$ can be evaluated by a quantum algorithm with error at most $1/3$ using $O\big(\sqrt n \, \pathinvsum{\varphi}\big)$ input queries.  The algorithm's running time is $\sqrt n \, \pathinvsum{\varphi} (\log n)^{O(1)}$ assuming unit-cost coherent access to the preprocessed string.
\end{theorem}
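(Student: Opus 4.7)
The plan is to mirror the approach used for \thmref{t:unbalancedformulaevaluation}, but exploit the particularly simple structure of span programs for AND and OR gates to weaken the balance requirement from $\beta$-balance to the milder path-integral condition involving $\pathinvsum{\varphi}$. The strategy is: build a span program $P_\varphi$ for $\varphi$ by direct-sum composition of single-gate span programs for AND and OR, then upper bound $\wsize{P_\varphi}$ (for query complexity) and $\wsizef{P_\varphi}$ (for implementability by a discrete-time quantum walk).

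First, I would take the canonical optimal span programs for AND and OR gates individually; these are essentially weighted star graphs whose edge weights are chosen so that the witness size on a $k$-input AND (resp.\ OR) gate with input adversary bounds $s_1, \ldots, s_k$ equals $\ADVpm_s(\AND_k)$ (resp.\ $\ADVpm_s(\OR_k)$). I then compose them by direct-sum composition (\defref{t:directsumcomposedef}) along the tree of $\varphi$, with the edge weights at the gate rooted at $v$ chosen using the adversary bounds of the input subformulas (which, for AND-OR formulas, are simply square roots of subformula sizes). By \thmref{t:adversarycomposition} together with the additivity of witness size under direct-sum composition, $\wsize{P_\varphi} = \ADVpm(\varphi) = \sqrt n$, matching the Barnum--Saks lower bound.

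The heart of the argument is to control $\wsizef{P_\varphi}$, which also counts contributions on the internal vertices introduced whenever two single-gate span programs are glued together. For AND/OR span programs these contributions admit a transparent structural description: a negative witness (certifying $\varphi(x)=0$) concentrates at each AND along a single falsifying root-to-leaf path, and dually for positive witnesses through OR gates, so that the internal-vertex mass accumulated at a subformula $\varphi_w$ scales like $1/\ADVpm(\varphi_w)$. Summing along the worst path through $\varphi$ recovers exactly $\pathinvsum{\varphi}$ from \defref{t:approximatelybalanced}, yielding $\wsizef{P_\varphi} = O(\sqrt n \, \pathinvsum{\varphi})$. The resulting bipartite graph has bounded degree and bounded spectral norm (AND/OR gates being of small fan-in and contributing only a constant each), so the discrete-time quantum walk algorithm from~\cite{Reichardt09spanprogram} applies, and after polynomial-time classical preprocessing (to compute all subformula adversary bounds and the corresponding edge weights, and to lay out the sparse-access data structure for the graph) we obtain the claimed query and time bounds.

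The main obstacle is the bookkeeping in the full-witness-size recursion: one must carefully normalize the positive and negative witnesses at each gate as they propagate through direct-sum composition, verifying that the internal-vertex contributions combine additively along the worst-case root-to-leaf path rather than multiplicatively, and that the overall query complexity comes out as $\sqrt{\wsize \cdot \wsizef}$ with the two factors giving $\sqrt n$ and $\pathinvsum{\varphi}$ respectively. A secondary concern is ensuring that the sparsity and norm bounds of the composed graph are preserved uniformly; this follows from the fact that a single AND or OR gate contributes only a constant increment to both, independent of the imbalance of its children—precisely the feature that the stricter $\beta$-balance condition was used to secure in the proof of \thmref{t:unbalancedformulaevaluation} but which AND/OR gates give for free.
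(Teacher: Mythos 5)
Your route is essentially the paper's: compose explicit optimal span programs for the individual AND and OR gates by direct-sum composition with weights given by the input subformula sizes, get $\wsize{P_\varphi}=\sqrt n$, control the full witness size by a recursion in which increments of order $1/\ADVpm(\varphi_w)$ accumulate additively along a worst-case root-to-leaf path to give $\wsizef{P_\varphi}=O(\sqrt n\,\pathinvsum{\varphi})$, observe that the composed graph has $O(1)$ norm and degree, and invoke \thmref{t:generalspanprogramalgorithmnonblackbox}. However, there is one concrete gap: you justify the constant degree and constant $\norm{\abst(A_{G_{P_\varphi}})}$ by ``AND/OR gates being of small fan-in,'' but an AND-OR formula of size $n$ may contain gates of unbounded fan-in, and for a $k$-ary gate both the degree and the norm of the single-gate graph grow with $k$ (already the Frobenius-type bound picks up $\Theta(k)$ terms), so the query bound as you argue it would acquire a fan-in dependence. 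The paper removes this obstacle as a first step by rewriting $\varphi$ as an equivalent formula with all fan-ins at most two via \lemref{t:gateexpansion}; the nontrivial content of that lemma is precisely that the expansion increases $\pathinvsum{\varphi}$ by at most a constant factor. Some such reduction (or substitute argument) is needed before your ``constant per gate'' claim is true, and your proposal does not supply it.

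Two smaller points. First, the framework's query complexity is $O\big(\wsizef {P_\varphi}\,\norm{\abst(A_{G_{P_\varphi}})}\big)$ (\thmref{t:generalspanprogramalgorithmnonblackbox}), not $\sqrt{\wsizeop\cdot\wsizefop}$ as you write at the end; with your own bounds ($\wsizef{P_\varphi}=O(\sqrt n\,\pathinvsum{\varphi})$ and norm $O(1)$) the correct formula still yields $O(\sqrt n\,\pathinvsum{\varphi})$, so this is a misstatement rather than a fatal error, but the quantity to be bounded recursively is the full witness size alone. Second, the delicate step your sketch glosses is the false case at an AND gate: with the weights of \defref{t:andorspanprogramdef} the optimal negative witness has norm $(s_p/s_1)^{1/4}>1$ (which is exactly why \lemref{t:unbalancedformulaevaluationtruefalsewitness} cannot be reused), and after normalizing by $\ADV(\varphi_v)=\sqrt{s_p}$ its contribution is $1/\sqrt{s_1}$, i.e.\ it is charged to the falsified \emph{child}; the paper absorbs this by proving the asymmetric bound $\wsizefx{P_{\varphi_v}}{x}\le 2\pathinvsum{v}\ADV(\varphi_v)-1$ on false inputs (\lemref{t:approxANDORformulaevaluationtruefalsewitness}), losing only a factor $2$. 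Your ``careful normalization'' remark points at the right place, but this case analysis (together with \claimref{t:andorspanprogramwsize}) is where the actual work lies.
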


For the special case of AND-OR formulas with $\pathinvsum{\varphi} = O(1)$, \thmref{t:approxbalancedandor} strengthens \thmref{t:unbalancedformulaevaluation}.  The requirement that $\pathinvsum{\varphi} = O(1)$ allows for some gates in the formula to be very unbalanced.  \thmref{t:approxbalancedandor} also strengthens~\cite[Theorem~1]{AmbainisChildsReichardtSpalekZhang07andor} because it does not require that $\pathsum{\varphi} = O(n)$.  For example, a formula that is biased near the root, but balanced at greater depths can have $\pathinvsum{\varphi} = O(1)$ and $\pathsum{\varphi} = \omega(n)$.  By substituting the bound $\pathinvsum{\varphi} = O(\sqrt d)$ for a depth-$d$ formula~\cite[Def.~3]{AmbainisChildsReichardtSpalekZhang07andor}, a corollary of \thmref{t:approxbalancedandor} is that a depth-$d$, size-$n$ AND-OR formula can be evaluated using $O(\sqrt{n d})$ queries.  This improves the depth-dependence from~\cite{AmbainisChildsReichardtSpalekZhang07andor}, and matches the dependence from an earlier version of that article~\cite{ambainis07nand}.  

The essential reason that the \defref{t:approxbalancedef} balance condition can be weakened is that for the specific gates AND and OR, by writing out the optimal span programs explicitly we can prove that they satisfy stronger properties than are necessarily true for other functions.

\section{Span programs}

\subsection{Definitions} \label{s:spanprogramdef}

We briefly recall some definitions from~\cite[Sec.~2]{Reichardt09spanprogram}.  Additionally, we define a span program complexity measure, the full witness size, that charges even for the ``free" inputs.  This quantity is important for developing quantum algorithms that are time efficient as well as query efficient.  

For a natural number $n$, let $[n] = \{1, 2, \ldots, n\}$.  For a finite set $X$, let $\C^X$ be the inner product space $\C^{\abs X}$ with orthonormal basis $\{ \ket x : x \in X \}$.  For vector spaces $V$ and $W$ over $\C$, let $\L(V, W)$ be the set of linear transformations from $V$ into $W$, and let $\L(V) = \L(V, V)$.  For $A \in \L(V, W)$, $\norm{A}$ is the operator norm of $A$.  For a string $x \in \B^n$, let $\bar x$ denote its bitwise complement.  

\begin{definition}[{\cite{HoyerLeeSpalek05compose, HoyerLeeSpalek07negativeadv}}] \label{t:adversarydef}
For finite sets $C$, $E$ and $\D \subseteq C^n$, let $f: \D \rightarrow E$.  An adversary matrix for $f$ is a real, symmetric matrix $\Gamma \in \L(\C^\D)$ that satisfies $\bra x \Gamma \ket y = 0$ whenever $f(x) = f(y)$.  

The general adversary bound for $f$, with costs $s \in [0, \infty)^n$, is 
\begin{equation}
\ADVpm_s(f) = \max_{\Large \substack{\text{adversary matrices $\Gamma$:} \\
\forall j \in [n], \, \norm{\Gamma \circ \Delta_j} \leq s_j}} \norm{\Gamma}
 \enspace .
\end{equation}
Here $\Gamma \circ \Delta_j$ denotes the entry-wise matrix product between $\Gamma$ and $\Delta_j = \sum_{x, y : x_j \neq y_j} \ketbra x y$.  The (nonnegative-weight) adversary bound for $f$, with costs $s$, is defined by the same maximization, except with $\Gamma$ restricted to have nonnegative entries.  In particular, $\ADVpm_s(f) \geq \ADV_s(f)$.  
\end{definition}

Letting $\vec{1} = (1, 1, \ldots, 1)$, the adversary bound for $f$ is $\ADV(f) = \ADV_{\vec 1}(f)$ and the general adversary bound for $f$ is $\ADVpm(f) = \ADVpm_{\vec 1}(f)$.  By~\cite{HoyerLeeSpalek07negativeadv}, $Q(f) = \Omega(\ADVpm(f))$.  

\begin{definition}[Span program~\cite{KarchmerWigderson93span}] \label{t:spanprogramdef}
A span program $P$ consists of a natural number $n$, a finite-dimensional inner product space $V$ over $\C$, a ``target" vector $\ket t \in V$, disjoint sets $\Ifree$ and $I_{j,b}$ for $j \in [n]$, $b \in \B$, and ``input vectors" $\ket{v_i} \in V$ for $i \in \Ifree \cup \bigcup_{j \in [n], b \in \B} I_{j,b}$.  

To $P$ corresponds a function $f_P : \B^n \rightarrow \B$, defined on $x \in \B^n$ by 
\beq \label{e:spanprogramdef}
f_P(x) = \begin{cases}
1 & \text{if $\ket t \in \Span(\{ \ket{ v_i } : i \in \Ifree \cup \bigcup_{j \in [n]} I_{j, x_j} \})$} \\
0 & \text{otherwise}
\end{cases}
\eeq
\end{definition}

Some additional notation is convenient.  Fix a span program $P$.  Let $I = \Ifree \cup \bigcup_{j \in [n], b \in \B} I_{j,b}$.  Let $A \in \L(\C^I, V)$ be given by $A = \sum_{i \in I} \ketbra{v_i}{i}$.  For $x \in \B^n$, let $I(x) = \Ifree \cup \bigcup_{j \in [n]} I_{j, x_j}$ and $\Pi(x) = \sum_{i \in I(x)} \ketbra i i \in \L(\C^I)$.  Then $f_P(x) = 1$ if $\ket t \in \Range(A \Pi(x))$.  A vector $\ket w \in \C^I$ is said to be a witness for $f_P(x) = 1$ if $\Pi(x) \ket w = \ket w$ and $A \ket w = \ket t$.  A vector $\ket{w'} \in V$ is said to be a witness for $f_P(x) = 0$ if $\braket t {w'} = 1$ and $\Pi(x) A^\dagger \ket{w'} = 0$.  

\begin{definition}[Witness size] \label{t:wsizedef}
Consider a span program $P$, and a vector $s \in [0, \infty)^n$ of nonnegative ``costs."  Let $S = \sum_{j \in [n], b \in \B, i \in I_{j,b}} \sqrt{s_j} \ketbra i i \in \L(\C^I)$.  For each input $x \in \B^n$, define the witness size of $P$ on $x$ with costs $s$, $\wsizexS P x s$, as follows: 
\beq \label{e:wsizedef}
\wsizexS P x s = \begin{cases}
\min_{\ket w : \, A \Pi(x) \ket w = \ket t} \norm{S \ket w}^2 & \text{if $f_P(x) = 1$} \\
\min_{\substack{\ket{w'} : \, \braket{t}{w'} = 1 \\ \Pi(x) A^\adjoint \ket{w'} = 0}} \norm{S A^\adjoint \ket{w'}}{}^2 & \text{if $f_P(x) = 0$}
\end{cases}
\eeq

The witness size of $P$ with costs $s$ is 
\beq
\wsizeS P s =  \max_{x \in \B^n} \wsizexS P x s
 \enspace .
\eeq

Define the full witness size $\wsizefS P s$ by letting $\Sf = S + \sum_{i \in \Ifree} \ketbra i i$ and 
\begin{align}
\wsizefxS P x s &= \begin{cases}
\min_{\ket w : \, A \Pi(x) \ket w = \ket t} (1 + \norm{\Sf \ket w}{}^2) & \text{if $f_P(x) = 1$} \\
\min_{\substack{\ket{w'} : \, \braket{t}{w'} = 1 \\ \Pi(x) A^\adjoint \ket{w'} = 0}} (\norm{\ket{w'}}{}^2 + \norm{S A^\adjoint \ket{w'}}{}^2) & \text{if $f_P(x) = 0$}
\end{cases} \\
\wsizefS P s &=  \max_{x \in \B^n} \wsizefxS P x s
 \enspace .
\end{align}
\end{definition}

When the subscript $s$ is omitted, the costs are taken to be uniform, $s = \vec 1 = (1, 1, \ldots, 1)$, e.g., $\wsizef P = \wsizefS P {\vec 1}$.  The witness size is defined in~\cite{ReichardtSpalek08spanprogram}.  The full witness size is defined in~\cite[Sec.~8]{Reichardt09spanprogram}, but is not named there.  A \emph{strict} span program has $\Ifree = \emptyset$, so $\Sf = S$, and a \emph{monotone} span program has $I_{j, 0} = \emptyset$ for all $j$~\cite[Def.~4.9]{Reichardt09spanprogram}.

\subsection{Quantum algorithm to evaluate a span program based on its full witness~size}

\cite[Theorem~9.3]{Reichardt09spanprogram} gives a quantum query algorithm for evaluating span programs based on the full witness size.  The algorithm is based on a quantum walk on a certain graph.  Provided that the degree of the graph is not too large, it can actually be implemented efficiently.  

\begin{theorem}[{\cite[Theorem~9.3]{Reichardt09spanprogram}}] \label{t:generalspanprogramalgorithmnonblackbox}
Let $P$ be a span program.  Then $f_P$ can be evaluated using 
\beq \label{e:generalspanprogramalgorithmnonblackbox}
T = O\big( \wsizef P \, \norm{\abst(A_{G_P})} \big)
\eeq
quantum queries, with error probability at most $1/3$.  Moreover, if the maximum degree of a vertex in $G_P$ is $d$, then the time complexity of the algorithm for evaluating $f_P$ is at most a factor of $(\log d) \big(\log (T \log d) \big)^{O(1)}$ worse, after classical preprocessing and assuming constant-time coherent access to the preprocessed string.  
\end{theorem}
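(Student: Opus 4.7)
The plan is to follow the span-program-to-quantum-walk strategy: associate to $P$ a bipartite graph $G_P$, modify its weighted adjacency matrix by the input $x$ so that $f_P(x)$ is encoded in the zero eigenspace, and then use phase estimation on a discrete-time Szegedy walk to decide whether the target vertex lies predominantly in that zero eigenspace.

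First I would construct $G_P$ on the vertex set $V \sqcup (\{0\} \cup I)$, with edges weighted by the coefficients of $\ket t$ between vertex $0$ and $V$ and by the coefficients of $\ket{v_i}$ between $i \in I$ and $V$. For an input $x$, I would attach a dangling self-loop to every $i \in \bigcup_j I_{j, 1-x_j}$, forcing any zero-eigenvalue eigenvector of the modified weighted adjacency matrix to vanish on these vertices. A witness $\ket w$ with $A\Pi(x)\ket w = \ket t$ then lifts to a zero eigenvector with amplitude $1$ on vertex $0$ and total squared norm at most $\wsizef P$; dually, a witness $\ket{w'}$ for $f_P(x) = 0$ provides a quantitative certificate that no such zero eigenvector exists near vertex $0$, again with the bound controlled by $\wsizef P$.

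Next I would run phase estimation on the Szegedy walk operator $U$ built from the normalized matrix $A_{G_P}/\norm{\abst(A_{G_P})}$, starting from the state $\ket 0$ localized at the target vertex. On positive instances, $\ket 0$ has overlap at least $1/\sqrt{\wsizef P}$ with an exact zero-phase eigenvector of $U$. On negative instances, the \emph{effective spectral gap} lemma of \cite{Reichardt09spanprogram}, applied to the state derived from the dual witness, shows that $\ket 0$ has weight only $O(\eps^2 \, \wsizef P)$ on eigenspaces of $U$ with phase at most $\eps$. Setting $\eps = \Theta(1/\sqrt{\wsizef P})$ makes phase estimation accept with probability $\Omega(1/\wsizef P)$ on positive instances and an arbitrarily small constant on negative instances; amplitude amplification with $O(\sqrt{\wsizef P})$ iterations boosts this to a constant gap. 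Each iteration uses $O(1/\eps) = O(\sqrt{\wsizef P})$ applications of $U$, for a total of $O(\wsizef P)$ walk steps on the normalized matrix, and restoring the normalization $\norm{\abst(A_{G_P})}$ gives the query bound $T = O\big(\wsizef P \, \norm{\abst(A_{G_P})}\big)$.

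For the time bound I would implement one step of $U$ as two reflections, each of which prepares, for a given vertex, the coherent superposition of its weighted neighbors and then reflects about the uniform-weight register. Given a preprocessed table listing the neighbors and weights of each vertex (at most $d$ of them), this takes $O(\log d)$ coherent accesses plus $(\log d)^{O(1)}$ arithmetic per walk step; combined with $T$ walk steps and the $(\log T)^{O(1)}$ overhead from phase estimation and amplitude amplification, the total running time is $T(\log d)(\log(T\log d))^{O(1)}$. I expect the main technical obstacle is the negative-instance analysis: one must verify that the dual witness $\ket{w'}$, whose associated state on $G_P$ can have support on $\Ifree$ as well, yields an effective-gap certificate for the walk operator $U$ rather than merely for the underlying symmetric matrix $A_{G_P}$. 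It is exactly this step that forces the appearance of $\wsizef P$ rather than $\wsize P$, since the mass on $\Ifree$ must also be charged when bounding $\ket 0$'s projection onto the small-phase subspace.
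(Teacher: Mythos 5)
Your overall plan---build $G_P$, encode $f_P(x)$ in the zero-eigenvalue space of the input-modified adjacency matrix, and decide via phase estimation on a Szegedy-type walk using the positive witness for overlap and the effective spectral gap lemma for negative inputs---is indeed the strategy behind the cited result; the paper itself does not reprove it, but only defers to \cite[Theorem~9.3]{Reichardt09spanprogram} and sketches the time-complexity bookkeeping (coin reflections via the preprocessed neighbor lists, $O(\log d)$ accesses each, Solovay--Kitaev accounting for the $\big(\log(T\log d)\big)^{O(1)}$ factor). Your treatment of the time bound is consistent with that.

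However, the quantitative core of your query argument has a genuine gap. With the unscaled span program the positive-instance acceptance probability of a single phase estimation is only $\Omega(1/\wsizef P)$, while your choice $\eps = \Theta(1/\sqrt{\wsizef P})$ makes the negative-instance weight on small-phase eigenspaces $O(\eps^2 \wsizef P) = O(1)$, i.e.\ a \emph{constant}---which for large $\wsizef P$ exceeds $1/\wsizef P$, so the acceptance probabilities are separated in the wrong direction and amplitude amplification cannot distinguish the two cases. If you instead shrink $\eps$ until false positives are $o(1/\wsizef P)$, you are forced to $\eps = O\big(1/(\wsizef P \, \norm{\abst(A_{G_P})})\big)$ (the norm enters through the precision of phase estimation on the normalized walk, not as a post-hoc ``restoring the normalization'' multiplication), and then the $O(\sqrt{\wsizef P})$ rounds of amplification yield $O\big(\wsizef P^{3/2}\norm{\abst(A_{G_P})}\big)$ queries, worse than the claimed $T$. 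The known proof avoids amplification altogether by rescaling the target vector downward by a factor $\sqrt{\wsizef P}$ (exactly the stronger form of \cite[Theorem~9.3]{Reichardt09spanprogram} alluded to in the paper's proof sketch): this makes the positive-case overlap with the zero-phase eigenvector a constant while the negative witness still certifies an effective gap at scale $1/(\wsizef P\,\norm{\abst(A_{G_P})})$, so a single phase estimation at that precision gives bounded error with $T = O\big(\wsizef P\,\norm{\abst(A_{G_P})}\big)$ queries. Your sketch needs this scaling step (or an equivalent balancing device) to reach the stated bound.
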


\begin{proof}[Proof sketch]
The query complexity claim is actually slightly weaker than~\cite[Theorem~9.3]{Reichardt09spanprogram}, which allows the target vector to be scaled downward by a factor of $\sqrt{\wsizef P}$.  

The time-complexity claim will follow from the proof of~\cite[Theorem~9.3]{Reichardt09spanprogram}, in~\cite[Prop.~9.4, Theorem~9.5]{Reichardt09spanprogram}.  The algorithm for evaluating $f_P(x)$ uses a discrete-time quantum walk on the graph $G_P(x)$.  If the maximum degree of a vertex in $G_P$ is $d$, then each coin reflection can be implemented using $O(\log d)$ single-qubit unitaries and queries to the preprocessed string~\cite{GroverRudolph02superposition, ChiangNagajWocjan09simulate}.  Finally, the $\big(\log (T \log d) \big)^{O(1)}$ factor comes from applying the Solovay-Kitaev Theorem~\cite{ksw:qc-book} to compile the single-qubit unitaries into products of elementary gates, to precision $1/O(T \log d)$.  
\end{proof}

We remark that together with~\cite[Theorem~3.1]{Reichardt09spanprogram}, \thmref{t:generalspanprogramalgorithmnonblackbox} gives a way of transforming a one-sided-error quantum %decision 
algorithm into a span program, and back into a quantum algorithm, such that the time complexity is nearly preserved, after preprocessing.  This is only a weak equivalence, because aside from requiring preprocessing the algorithm from \thmref{t:generalspanprogramalgorithmnonblackbox} also has two-sided error.  To some degree, though, it complements the equivalence results for best span program witness size and bounded-error quantum \emph{query} complexity~\cite[Theorem~7.1, Theorem~9.2]{Reichardt09spanprogram}.

\subsection{Direct-sum span program composition}

Let us study the full witness size of the direct-sum composition of span programs.  We begin by recalling the definition of direct-sum composition.  

Let $f : \B^n \rightarrow \B$ and $S \subseteq [n]$.  For $j \in [n]$, let $m_j$ be a natural number, with $m_j = 1$ for $j \notin S$.  For $j \in S$, let $f_j : \B^{m_j} \rightarrow \B$.  Define $y : \B^{m_1} \times \cdots \times \B^{m_n} \rightarrow \B^n$ by 
\beq
y(x)_j = \begin{cases} f_j(x_j) & \text{if $j \in S$} \\ x_j & \text{if $j \notin S$} \end{cases}
\eeq  
Define $g : \B^{m_1} \times \cdots \times \B^{m_n} \rightarrow \B$ by $g(x) = f(y(x))$.  For example, if $S = [n] \smallsetminus \{1\}$, then 
\beq
g(x) = f\big(x_1, f_2(x_2), \ldots, f_n(x_n)\big)
 \enspace .
\eeq
Given span programs for the individual functions $f$ and $f_j$ for $j \in S$, we will construct a span program for $g$.  We remark that although we are here requiring that the inner functions $f_j$ act on disjoint sets of bits, this assumption is not necessary for the definition.  It simplifies the notation, though, for the cases $S \neq [n]$, and will suffice for our applications.  

Let $P$ be a span program computing $f_P = f$.  Let $P$ have inner product space $V$, target vector $\ket t$ and input vectors $\ket{v_i}$ indexed by $\Ifree$ and $I_{jc}$ for $j \in [n]$ and $c \in \B$.  

For $j \in [n]$, let $s_j \in [0, \infty)^{m_j}$ be a vector of costs, and let $s \in [0, \infty)^{\sum m_j}$ be the concatenation of the vectors $s_j$.  For $j \in S$, let $P^{j0}$ and $P^{j1}$ be span programs computing $f_{P^{j1}} = f_j : \B^{m_j} \rightarrow \B$ and $f_{P^{j0}} = \neg f_j$, with $r_j = \wsizeS{P^{j0}}{s_j} = \wsizeS{P^{j1}}{s_j}$.  For $c \in \B$, let $P^{jc}$ have inner product space $V^{jc}$ with target vector $\ket{t^{jc}}$ and input vectors indexed by $\Ifree^{jc}$ and $I^{jc}_{kb}$ for $k \in [m_j]$, $b \in \B$.  For $j \notin S$, let $r_j = s_j$.  

Let $I_S = \bigcup_{j \in S, c \in \B} I_{jc}$.  Define $\jc : I_S \rightarrow [n] \times \B$ by $\jc(i) = (j,c)$ if $i \in I_{jc}$.  The idea is that $\jc$ maps $i$ to the input span program that must evaluate to $1$ in order for $\ket{v_i}$ to be available in~$P$.  

There are several ways of composing the span programs $P$ and $P^{jc}$ to obtain a span program $Q$ computing the composed function $f_Q = g$ with $\wsizeS Q s \leq \wsizeS P r$~\cite[Defs.~4.4, 4.5, 4.6]{Reichardt09spanprogram}.  We focus on direct-sum composition.  

\begin{definition}[{\cite[Def.~4.5]{Reichardt09spanprogram}}] \label{t:directsumcomposedef}
The direct-sum-composed span program $Q^\oplus$ is defined by: 
\begin{itemize}
\item
The inner product space is $V^\oplus = V \oplus \bigoplus_{j \in S, c \in \B} (\C^{I_{jc}} \otimes V^{jc})$.  Any vector in $V^\oplus$ can be uniquely expressed as $\ket{u}_V + \sum_{i \in I_S} \ket{i} \otimes \ket{u_i}$, where $\ket u \in V$ and $\ket{u_i} \in V^{\jc(i)}$.  
\item
The target vector is $\ket{t^\oplus} = \ket{t}_V$.  
\item
The free input vectors are indexed by $\Ifree^\oplus = \Ifree \cup I_S \cup \bigcup_{j \in S, c \in \B} (I_{jc} \times \Ifree^{jc})$ with, for $i \in \Ifree^\oplus$, 
\beq
\ket{v^\oplus_i} = \begin{cases}
\ket{v_i}_V & \text{if $i \in \Ifree$} \\
\ket{v_i}_V - \ket i \otimes \ket{t^{jc}} & \text{if $i \in I_{jc}$ and $j \in S$} \\
\ket{i'} \otimes \ket{v_{i''}} & \text{if $i = (i', i'') \in I_{jc} \times \Ifree^{jc}$}
\end{cases}
\eeq
\item
The other input vectors are indexed by $I^\oplus_{(jk)b}$ for $j \in [n]$, $k \in [m_j]$, $b \in \B$.  For $j \notin S$, $I^\oplus_{(j1)b} = I_{jb}$, with $\ket{v^\oplus_i} = \ket{v_i}_V$ for $i \in I^\oplus_{(j1)b}$.  For $j \in S$, let $I^\oplus_{(jk)b} = \bigcup_{c \in \B} (I_{jc} \times I^{jc}_{kb})$.  For $i \in I_{jc}$ and $i' \in I^{jc}_{kb}$, let 
\beq
\ket{v^\oplus_{ii'}} = \ket{i} \otimes \ket{v_{i'}}
 \enspace .
\eeq
\end{itemize}
\end{definition}

By~\cite[Theorem~4.3]{Reichardt09spanprogram}, $f_{Q^\oplus} = g$ and $\wsizeS {Q^\oplus} s \leq \wsizeS P r$.  (While that theorem is stated only for the case $S = [n]$, it is trivially extended to other $S \subset [n]$.)  We give a bound on how quickly the full witness size can grow relative to the witness size: 

\begin{lemma} \label{t:wsizefcompose}
Under the above conditions, for each input $x \in \B^{m_1} \times \cdots \times \B^{m_n}$, with $y = y(x)$, 
\begin{itemize}
\item 
If $g(x) = 1$, let $\ket w$ be a witness to $f_P(y) = 1$ such that $\sum_{j \in [n], i \in I_{j y_j}} r_j \abs{w_i}^2 = \wsizexS P y r$.  Then 
\beq\begin{split} \label{e:wsizefcomposetrue}
\frac{ \wsizefxS {Q^\oplus} x s }{ \wsizexS P y r }
&\leq 
\sigma\big(y, \ket w\big) + \frac{1 + \sum_{i \in \Ifree} \abs{w_i}^2}{\wsizexS P y r} \\
&\text{where $\sigma(y, \ket w) = \max_{\substack{j \in S : \\ \text{$\exists i \in I_{j y_j}$ with $\braket{i}{w} \neq 0$}}} \frac{\wsizefS {P^{j y_j}} {s_j}}{\wsizeS {P^{j y_j}} {s_j}}$}
 \enspace .
\end{split}\eeq
\item 
If $g(x) = 0$, let $\ket{w'}$ be a witness to $f_P(y) = 0$ such that $\sum_{j \in [n], i \in I_{j \bar y_j}} r_j \abs{\braket{w'}{v_i}}^2 = \wsizexS P y r$.  Then 
\beq\begin{split} \label{e:wsizefcomposefalse}
\frac{ \wsizefxS {Q^\oplus} x s }{ \wsizexS P y r }
&\leq 
\sigma(\bar y, \ket{w'}) + \frac{\norm{\ket{w'}}^2}{\wsizexS P y r} \\
&\text{where $\sigma(\bar y, \ket{w'}) = \max_{\substack{j \in S : \\ \text{$\exists i \in I_{j \bar y_j}$ with $\braket{v_i}{w'} \neq 0$}}} \frac{\wsizefS {P^{j \bar y_j}} {s_j}}{\wsizeS {P^{j \bar y_j}} {s_j}}$}
 \enspace .
\end{split}\eeq
\end{itemize}
If $S = \emptyset$, then $\sigma(y, \ket w)$ and $\sigma(\bar y, \ket{w'})$ should each be taken to be $1$ in the above equations.  
\end{lemma}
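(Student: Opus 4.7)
The plan is to construct explicit witnesses for $Q^\oplus$ on $x$ by combining the given outer witness $\ket w$ (resp.\ $\ket{w'}$) for $P$ on $y = y(x)$ with optimal witnesses for the inner span programs $P^{j y_j}$ (resp.\ $P^{j \bar y_j}$) on $x_j$, and then to bound the full witness size of the composite. Existence of the inner witnesses is automatic from $y(x)_j = f_j(x_j)$: when $g(x) = 1$ each $P^{j y_j}$ evaluates to $1$ on $x_j$, and when $g(x) = 0$ each $P^{j \bar y_j}$ evaluates to $0$ on $x_j$.

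For the case $g(x) = 1$, define $\ket{w^\oplus} \in \C^{I^\oplus}$ by $w^\oplus_i = w_i$ for $i \in \Ifree \cup I_S \cup \bigcup_{j \notin S} I_{j y_j}$, and $w^\oplus_{(i, i'')} = w_i \cdot w^{j y_j}_{i''}$ for each $(i, i'') \in I_{j y_j} \times \big(\Ifree^{j y_j} \cup \bigcup_k I^{j y_j}_{k, x_{j,k}}\big)$ with $j \in S$, and zero elsewhere. Verifying $A^\oplus \Pi^\oplus(x) \ket{w^\oplus} = \ket{t}_V$ is the crux of direct-sum composition: each free vector $\ket{v_i}_V - \ket i \otimes \ket{t^{j y_j}}$ for $i \in I_{j y_j}$ supplies both the $V$-part (assembling $\sum_i w_i \ket{v_i}_V = \ket t$) and an error $-w_i \ket i \otimes \ket{t^{j y_j}}$ in the $\C^{I_{j y_j}} \otimes V^{j y_j}$ block, and the latter is cancelled exactly by the inner-witness contribution $w_i \ket i \otimes A^{j y_j} \ket{w^{j y_j}} = w_i \ket i \otimes \ket{t^{j y_j}}$. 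Decomposing $1 + \|\Sf \ket{w^\oplus}\|^2$ along the $I^\oplus$ block structure yields four contributions: $1 + \sum_{\Ifree}|w_i|^2$ from $\Ifree$; $\sum_{j \in S,\, i \in I_{j y_j}} |w_i|^2$ from $I_S \subseteq \Ifree^\oplus$; $\sum_{j \notin S,\, i \in I_{j y_j}} s_j |w_i|^2$ from the $j \notin S$ blocks; and $\sum_{j \in S,\, i \in I_{j y_j}} |w_i|^2 \big(\wsizefxS{P^{j y_j}}{x_j}{s_j} - 1\big)$ from the inner pairs. The $\pm$ ones in the second and fourth contributions cancel, leaving $\sum_{j \in S,\, i \in I_{j y_j}} |w_i|^2 \, \wsizefxS{P^{j y_j}}{x_j}{s_j}$. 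Applying $\wsizefxS{P^{j y_j}}{x_j}{s_j} \leq \wsizefS{P^{j y_j}}{s_j} = \sigma_j \cdot r_j$ with $\sigma_j \leq \sigma(y, \ket w)$ on the used $j$, and using $\sigma \geq 1$ to absorb the $j \notin S$ contribution (where $r_j = s_j$), recombines with $\wsizexS P y r = \sum_{j \in [n],\, i \in I_{j y_j}} r_j |w_i|^2$ to give the claimed ratio.

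For the case $g(x) = 0$, set the $V$-component of $\ket{w'^\oplus}$ equal to $\ket{w'}$ and, in each block $\C^{I_{j c}} \otimes V^{j c}$ with $j \in S$ and $c = \bar y_j$, place $\sum_{i \in I_{j \bar y_j}} \overline{\braket{w'}{v_i}}\, \ket i \otimes \ket{w'^{j \bar y_j}}$, where $\ket{w'^{j \bar y_j}}$ is an optimal negative witness for $P^{j \bar y_j}$ on $x_j$ normalized so that $\braket{t^{j \bar y_j}}{w'^{j \bar y_j}} = 1$. Then $\braket{t^\oplus}{w'^\oplus} = \braket{t}{w'} = 1$. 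Orthogonality $\bra{w'^\oplus} v^\oplus_i\rangle = 0$ on the available indices is checked case by case: for the free $I_S$ vectors $\ket{v_i}_V - \ket i \otimes \ket{t^{j c}}$ the two pieces cancel using $\braket{t^{j c}}{w'^{j c}} = 1$ when $c = \bar y_j$ and the fact that $\bra{w'} v_i\rangle = 0$ already when $c = y_j$; for composite indices the check reduces to $\ket{w'^{j \bar y_j}}$ being orthogonal to the corresponding inner input vectors. The norm $\|\ket{w'^\oplus}\|^2 + \|S^\oplus (A^\oplus)^\dagger \ket{w'^\oplus}\|^2$ then splits into $\|w'\|^2$, $\sum_{j \in S,\, i \in I_{j \bar y_j}} |\bra{w'} v_i\rangle|^2 \, \wsizefxS{P^{j \bar y_j}}{x_j}{s_j}$, and $\sum_{j \notin S,\, i \in I_{j \bar y_j}} s_j |\bra{w'} v_i\rangle|^2$; the same substitution $\wsizefxS{P^{j \bar y_j}}{x_j}{s_j} \leq \sigma(\bar y, \ket{w'})\, r_j$ together with $\wsizexS P y r = \sum_{j,\, i \in I_{j \bar y_j}} r_j |\bra{w'} v_i\rangle|^2$ yields the bound.

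The main obstacle is purely bookkeeping: correctly matching each of the many $I^\oplus$ index families to the right block of the squared norm and to the right cost coefficient. The conceptual step---that the $-\ket i \otimes \ket{t^{j c}}$ correction in each $I_S$ free vector is undone on the positive side by $A^{j c} \ket{w^{j c}} = \ket{t^{j c}}$ and on the negative side by $\braket{t^{j c}}{w'^{j c}} = 1$---is the heart of direct-sum composition, and once noted the norm computation is a routine decomposition; the identity $r_j = \wsizeS{P^{j y_j}}{s_j}$ is chosen precisely so that the inner-witness block weight $r_j$ gets multiplied by the ratio $\wsizefS/\wsizeS = \sigma$, producing a single clean factor.
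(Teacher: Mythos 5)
Your proposal is correct and follows essentially the same route as the paper: it builds the same direct-sum-composed positive witness $w^\oplus_i = w_i$, $w^\oplus_{(i,i'')} = w_i w^{j y_j}_{i''}$ and negative witness $\ket{w'}_V + \sum_i \braket{v_i}{w'}\ket i \otimes \ket{u^{j\bar y_j}}$ (the paper simply cites the composition theorem for the witness property, which you verify directly), then decomposes the full witness size blockwise so each inner block contributes $\wsizefxS{P^{j y_j}}{x_j}{s_j} \leq \sigma\, r_j$, exactly as in the paper. The bookkeeping, including the implicit use of $\sigma \geq 1$ to absorb the $j \notin S$ terms, matches the paper's argument.
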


\begin{proof}
We follow the proof of~\cite[Theorem~4.3]{Reichardt09spanprogram}, except keeping track of the full witness size.  
Note that if $S = \emptyset$, then Eqs.~\eqnref{e:wsizefcomposetrue} and~\eqnref{e:wsizefcomposefalse} are immediate by definition of $\wsizefxS {Q^\oplus} x s$.  

Let $I(y)' = I(y) \smallsetminus \Ifree = \bigcup_{j \in [n]} I_{j y_j}$.  

In the first case, $g(x) = 1$, for $j \in S$ let $\ket{w^{j y_j}} \in \C^{I^{j y_j}}$ be a witness to $f_{P^{j y_j}}(x_j) = 1$ such that $\wsizefxS{P^{j y_j}}{x_j}{s} = 1 + \sum_{i \in \Ifree^{jy_j}} \abs{w^{j y_j}_i}^2 + \sum_{k \in [m_j], i \in I^{j y_j}_{k (x_j)_k}} (s_j)_k \abs{w^{j y_j}_i}^2$.  As in~\cite[Theorem~4.3]{Reichardt09spanprogram}, let $\ket{w^\oplus} \in \C^{I^\oplus(x)}$ be given by 
\beq
w^\oplus_i 
= \begin{cases}
w_i & \text{if $i \in I(y)$} \\
w_{i'} w^{\jc(i')}_{i''} & \text{if $i = (i', i'')$ with $i' \in I(y)' \cap I_S$, $i'' \in I^{\jc(i')}(x)$} \\
0 & \text{otherwise}
\end{cases}
\eeq
Then $\ket{w^\oplus}$ is a witness for $f_{Q^\oplus}(x) = 1$, and we compute 
\begin{align}
\wsizefxS{Q^\oplus}{x}{s}
&\leq 1 + \sum_{i \in \Ifree^\oplus} \abs{w^\oplus_i}^2 + \sum_{\substack{j \in [n], k \in [m_j], \\ i \in I^\oplus_{(jk)(x_j)_k}}} (s_j)_k \abs{w^\oplus_i}^2 \nonumber \\
&= 1 + \sum_{i \in \Ifree} \abs{w_i}^2 + \sum_{j \in [n] \smallsetminus S, i \in I_{j x_j}} s_j \abs{w_i}^2 \\ &\qquad + \sum_{j \in S, i \in I_{j y_j}} \abs{w_i}^2 \Bigg( 1 + \sum_{i' \in \Ifree^{j y_j}} \abs{w^{j y_j}_{i'}}{}^2 + \sum_{k \in [m_j], i' \in I^{j y_j}_{k (x_j)_k}} (s_j)_k \abs{w^{j y_j}_{i'}}{}^2 \Bigg) \nonumber \\
&= 1 + \sum_{i \in \Ifree} \abs{w_i}^2 + \sum_{j \in [n] \smallsetminus S, i \in I_{j x_j}} s_j \abs{w_i}^2 + \sum_{j \in S, i \in I_{j y_j}} \abs{w_i}^2 \, \wsizefxS{P^{j y_j}}{x_j}{s_j} \nonumber 
 \enspace .
\end{align}
Eq.~\eqnref{e:wsizefcomposetrue} follows using the bound $\wsizefxS{P^{jy_j}}{x_j}{s_j} \leq \sigma(y, \ket w) r_j$ for $j \in S$, and $s_j = r_j$ for $j \notin S$.  

Next consider the case $g(x) = 0$.  For $j \in S$, let $\ket{u^{j \bar y_j}} \in V^{j \bar y_j}$ be a witness for $f_{P^{j \bar y_j}}(x_j) = 0$ with $\wsizefxS{P^{j \bar y_j}} {x_j} s = \norm{\ket{u^{j \bar y_j}}}{}^2 + \sum_{k \in [m_j], i \in I^{j \bar y_j}_{k \overline{(x_j)_k}}} (s_j)_k \abs{\braket{v_i}{u^{j \bar y_j}}}{}^2$.  As in~\cite[Theorem~4.3]{Reichardt09spanprogram},~let 
\beq
\ket{u^\oplus} = \ket{w'}_V + \sum_{i \in I_S \smallsetminus I(y)} \braket{v_i}{w'} \ket i \otimes \ket{u^{\jc(i)}}
 \enspace .
\eeq
Then $\ket{u^\oplus}$ is a witness for $f_{Q^\oplus}(x) = 0$, and, moreover, 
\begin{align}
\wsizefxS{Q^\oplus}{x}{s}
&\leq \norm{\ket{u^\oplus}}{}^2 + \sum_{j \in [n], k \in [m_j], i \in I^\oplus_{(jk)\overline{(x_j)_k}}} (s_j)_k \abs{\braket{v^\oplus_i}{u^\oplus}}{}^2 \nonumber \\
&= \norm{\ket{u^\oplus}}{}^2 + \sum_{\substack{j \in [n] \smallsetminus S \\ i \in I_{j \bar x_j}}} s_j \abs{\braket{v^\oplus_i}{u^\oplus}}{}^2 + \sum_{\substack{j \in S, k \in [m_j], \\ i \in I_{j \bar y_j}, i' \in I^{j \bar y_j}_{k \overline{(x_j)_k}}}} (s_j)_k \abs{\braket{v^\oplus_{ii'}}{u^\oplus}}{}^2 \nonumber \\
&= \norm{\ket{w'}}^2 + \sum_{\substack{j \in [n] \smallsetminus S \\ i \in I_{j \bar x_j}}} s_j \abs{\braket{v_i}{w'}}{}^2 \\ &\qquad + \sum_{j \in S, i \in I_{j \bar y_j}} \abs{\braket{v_i}{w'}}^2 \Bigg( \norm{\ket{u^{j \bar y_j}}}{}^2 + \sum_{k \in [m_j], i' \in I^{j \bar y_j}_{k \overline{(x_j)_k}}} (s_j)_k \abs{\braket{v_{i'}}{u^{j \bar y_j}}}{}^2 \Bigg) \nonumber \\
&= \norm{\ket{w'}}^2 + \sum_{\substack{j \in [n] \smallsetminus S \\ i \in I_{j \bar x_j}}} r_j \abs{\braket{v_i}{w'}}{}^2 + \sum_{j \in S, i \in I_{j \bar y_j}} \abs{\braket{v_i}{w'}}^2 \, \wsizefxS{P^{j \bar y_j}}{x_j}{s_j} \nonumber
 \enspace .
\end{align}
Eq.~\eqnref{e:wsizefcomposefalse} follows using the bound $\wsizefxS{P^{j \bar y_j}}{x_j}{s_j} \leq \sigma(\bar y, \ket{w'}) r_j$ for $j \in S$.  
\end{proof}

\lemref{t:wsizefcompose} is a key step in the formula-evaluation results in this article and~\cite{Reichardt09andorfaster}.  It is used to track the full witness size for span programs recursively composed in a direct-sum manner along a formula.  The proof of \thmref{t:unbalancedformulaevaluation} will require the lemma with the weaker bounds $\sigma(y, \ket w), \sigma(\bar y, \ket{w'}) \leq \max_{j \in S, c \in \B} \wsizefS {P^{jc}} {s_j} / \wsizeS {P^{jc}} {s_j}$.  \thmref{t:approxbalancedandor} will use only the slightly stronger bounds $\sigma(y, \ket w) \leq \max_{j \in S} \wsizefS {P^{j y_j}} {s_j} / \wsizeS {P^{j y_j}} {s_j}$, $\sigma(\bar y, \ket{w'}) \leq \max_{j \in S} \wsizefS {P^{j \bar y_j}} {s_j} / \wsizeS {P^{j \bar y_j}} {s_j}$.  However, the proof of~\cite[Theorem~1.1]{Reichardt09andorfaster} will require the bounds of Eqs.~\eqnref{e:wsizefcomposetrue} and~\eqnref{e:wsizefcomposefalse}.

\section{Evaluation of almost-balanced formulas} \label{s:approxbalancedproof}

In this section, we will apply the span program framework from~\cite{Reichardt09spanprogram} to prove \thmref{t:unbalancedformulaevaluation}.  Our algorithm will be given by applying \thmref{t:generalspanprogramalgorithmnonblackbox} to a certain span program.  Before beginning the proof, though, we will give two necessary lemmas.  

Consider a span program $P$ with corresponding weighted graph $G_P$, from~\cite[Def.~8.2]{Reichardt09spanprogram}.  We will need a bound on the operator norm of $\abst(A_{G_{P_v}})$, the entry-wise absolute value of the weighted adjacency matrix $A_{G_{P_v}}$.  If $P$ is \emph{canonical}~\cite[Def.~5.1]{Reichardt09spanprogram}, then we can indeed obtain such a bound in terms of the witness size of $P$: 

\begin{lemma} \label{t:abstAGPnorm}
Let $s \in (0, \infty)^k$, and let $P$ be a canonical span program computing a function $f : \{0,1\}^k \rightarrow \{0,1\}$ with input vectors indexed by the set $I$.  
Assume that for each $x \in \{0,1\}^k$ with $f(x) = 0$, an optimal witness to $f_P(x) = 0$ is $\ket x$ itself.  
Then 
\beq \label{e:abstAGPnorm}
\norm{\abst(A_{G_P})} \leq 2^k \Big( 1 + \frac{\wsizeS P s}{\min_{j \in [k]} s_j} \Big) + \abs I
 \enspace .
\eeq
\end{lemma}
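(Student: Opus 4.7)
My approach is to unpack the definition of $A_{G_P}$ from \cite[Def.~8.2]{Reichardt09spanprogram} and bound $\norm{\abst(A_{G_P})}$ by splitting the matrix into pieces that can each be controlled using the canonical structure of $P$ and the witness hypothesis.

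For a canonical span program computing $f : \B^k \rightarrow \B$, the inner product space is $V = \C^{f^{-1}(0)}$, of dimension at most $2^k$, and the target $\ket t$ has a standard form on this space. The graph $G_P$ is bipartite: one side contains the target vertex and the basis vectors of $V$, while the other side is indexed by $I$, together with the dangling structure from the quantum walk construction. Thus $A_{G_P}$ decomposes as a sum of three pieces: $M_t$ encoding the target-to-$V$ edges, $M_A$ encoding the $V$-to-$I$ edges corresponding to $A = \sum_{i \in I} \ketbra{v_i}{i}$, and $M_I$ collecting the dangling or diagonal contributions at input vertices.

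I would bound each piece separately via the triangle inequality. A row-sum estimate using $\abs{f^{-1}(0)} \leq 2^k$ gives $\norm{\abst(M_t)} \leq 2^k$; direct inspection of $G_P$ gives $\norm{\abst(M_I)} \leq \abs{I}$. For $M_A$, the key step is the witness hypothesis: optimality of $\ket x$ as a witness for $f(x)=0$ forces $\braket{v_i}{x}=0$ whenever $i \in I(x)$, and the definition of witness size gives
\[ \sum_{j \in [k], \, i \in I_{j \bar x_j}} s_j \, \abs{\braket{v_i}{x}}^2 \leq \wsizeS{P}{s} \]
for every $x \in f^{-1}(0)$. Dividing by $\min_j s_j$ bounds the squared row norms of $\abst(M_A)$; combining with a row-sum inequality and using that there are at most $2^k$ such rows yields $\norm{\abst(M_A)} \leq 2^k \wsizeS{P}{s}/\min_j s_j$. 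Adding the three bounds gives \eqnref{e:abstAGPnorm}.

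The main obstacle is lining up constants precisely against the definition of $A_{G_P}$ in \cite[Def.~8.2]{Reichardt09spanprogram}, especially verifying that the additive $\abs I$ term is genuinely contributed by the dangling or diagonal portion rather than being absorbed loosely into $M_A$. A secondary subtlety is the $M_A$ bound: a naive Frobenius estimate would give $\sqrt{2^k \wsizeS{P}{s}/\min_j s_j}$, which is not yet in the claimed linear form, so one must instead use a row-sum inequality (with Cauchy--Schwarz within each row to pass from the weighted squared sums above to the unweighted first-moment sums required) to extract the desired bound.
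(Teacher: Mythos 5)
Your proposal is correct in substance, and its core coincides with the paper's proof: in canonical form the rows of $A$ are indexed by $x \in f^{-1}(0)$ with blocks $\bra{v_{xj}}$ supported on $I_{j,\bar x_j}$, and the hypothesis that $\ket x$ is an \emph{optimal} witness for $f_P(x)=0$ gives exactly $\sum_{j} s_j \norm{\ket{v_{xj}}}^2 = \wsizexS{P}{x}{s} \leq \wsizeS{P}{s}$, hence $\sum_j \norm{\ket{v_{xj}}}^2 \leq \wsizeS{P}{s}/\min_j s_j$ (note that $\Pi(x)A^\adjoint\ket{x}=0$ is automatic from the canonical structure; optimality is what lets you bound the row norms by the witness size). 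Where you diverge is the final norm bookkeeping. The paper does not split $A_{G_P}$ into pieces and apply the triangle inequality; it invokes \cite[Def.~8.1, Prop.~8.8]{Reichardt09spanprogram} to get $\norm{\abst(A_{G_P})} \leq \norm{\abst(\biadj_{G_P})}^2$ with $\biadj_{G_P} = \left(\begin{smallmatrix} \ket t & A \\ 0 & \identity \end{smallmatrix}\right)$, and then bounds this by the \emph{squared Frobenius norm}, $\norm{\ket t}^2 + \sum_{x,j}\norm{\ket{v_{xj}}}^2 + \abs I \leq 2^k\big(1+\wsizeS{P}{s}/\min_j s_j\big) + \abs I$. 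This yields all three terms in one stroke---in particular the $+\abs I$ is just the Frobenius contribution of the identity block, not an operator-norm bound on a ``dangling'' piece---and it sidesteps what you flag as your main obstacle, since nothing about Def.~8.2 beyond the biadjacency matrix is needed. Two of your remarks are also slightly off, though harmlessly so: the identity block has operator norm $1$, so $\norm{\abst(M_I)}\leq\abs I$ is very loose; and your worry about the ``naive Frobenius estimate'' for the $A$-block is backwards, since $\sqrt{2^k\,\wsizeS{P}{s}/\min_j s_j} \leq 2^k\,\wsizeS{P}{s}/\min_j s_j$ whenever the latter is at least $1$ (which holds for nonconstant $f$, as $\wsizeS{P}{s} \geq \ADVpm_s(f) \geq \min_j s_j$), so the Frobenius bound already suffices; the Cauchy--Schwarz row-sum detour you propose actually introduces $\sqrt{\abs I}$ factors that would need extra handling to recover the stated linear form.
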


\begin{proof}
Recall from~\cite[Def.~5.1]{Reichardt09spanprogram}, that $P$ being in canonical form implies that its target vector is $\ket t = \sum_{x : f(x) = 0} \ket x$, and that the matrix $A$ whose columns are the input vectors of $P$ can be expressed as 
\beq
A
= \sum_{i \in I} \ketbra{v_i}{i}
= \sum_{j \in [k], \, x : f(x) = 0} \ketbra{x}{j, \bar x_j} \otimes \bra{v_{xj}}
 \enspace .
\eeq
By assumption, for each $x \in f^{-1}(0)$, 
\beq
\sum_{j \in [k]} s_j \norm{\ket{v_{xj}}}^2 = \wsizexS P x s \leq \wsizeS P s
 \enspace .
\eeq
In particular, letting $\sigma = \min_{j \in [k]} s_j > 0$, we can bound 
\beq\begin{split} \label{e:abstAGPnormstep}
\sum_{j \in [k]} \norm{\ket{v_{xj}}}^2
&\leq
\frac 1 \sigma \sum_{j \in [k]} s_j \norm{\ket{v_{xj}}}^2 \\
&\leq
\frac{\wsizeS P s}{\sigma}
 \enspace .
\end{split}\eeq

The rest of the argument follows from the definition of the weighted adjacency matrix $A_{G_P}$.  From~\cite[Def.~8.1, Prop.~8.8]{Reichardt09spanprogram}, $\norm{\abst(A_{G_P})} \leq \norm{\abst(\biadj_{G_P})}^2$, where $\biadj_{G_P}$ is the \emph{biadjacency} matrix corresponding to $P$, 
\beq
\biadj_{G_P} = \left( \begin{matrix} \ket t & A \\ 0 & \identity \end{matrix} \right)
 \enspace ,
\eeq
and $\identity$ is an $\abs I \times \abs I$ identity matrix.  
Now bound $\norm{\abst(\biadj_{G_P})}$ by its Frobenius norm: 
\beq\begin{split}
\norm{\abst(A_{G_P})} 
&\leq \norm{\abst(\biadj_{G_P})}^2 \\
&\leq \norm{\abst(\biadj_{G_P})}_F^2 \\
&= \norm{\ket t}^2 + \sum_{\substack{x : f(x) = 0, \\ j \in [k]}} \norm{\ket{v_{xj}}}^2 + \abs I \\
&\leq 2^k + 2^k \max_{x : f(x) = 0} \sum_{j \in [k]} \norm{\ket{v_{xj}}}^2 + \abs I
 \enspace .
\end{split}\eeq
Eq.~\eqnref{e:abstAGPnorm} follows by substituting in Eq.~\eqnref{e:abstAGPnormstep}.  
\end{proof}

An important quantity in the proof of \thmref{t:unbalancedformulaevaluation} will be $\pathinvsum{\varphi}$, from \defref{t:approximatelybalanced}.  For an almost-balanced formula $\varphi$, $\pathinvsum{\varphi} = O(1)$.  

\begin{lemma} \label{t:balancegeometric}
Consider a $\beta$-balanced formula $\varphi$ over a gate set $\cS$ in which every gate depends on at least two input bits.  Then for every vertex $v$, with children $c_1, c_2, \ldots, c_k$, 
\beq \label{e:balancegeometric}
\frac{\ADVpm(\varphi_v)}{\max_j \ADVpm(\varphi_{c_j})} \geq \sqrt{1+\frac{1}{\beta^2}}
 \enspace .
\eeq
In particular, 
\beq \label{e:balancegeometricpathinvsum}
\pathinvsum{\varphi} \leq (2+\sqrt 2)\beta^2
 \enspace .
\eeq
\end{lemma}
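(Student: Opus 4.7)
My plan is to prove the recursive inequality~\eqnref{e:balancegeometric} first and deduce~\eqnref{e:balancegeometricpathinvsum} from it by summing a geometric series along each root-to-leaf path. For~\eqnref{e:balancegeometric} I would apply the tight composition theorem (\thmref{t:adversarycomposition}) to rewrite $\ADVpm(\varphi_v) = \ADVpm_s(g_v)$ with $s_j := \ADVpm(\varphi_{c_j})$. Writing $M := \max_j s_j$ and $m := \min_j s_j$, the $\beta$-balance hypothesis gives $m \geq M/\beta$, hence $M^2 + m^2 \geq M^2(1 + 1/\beta^2)$. It therefore suffices to prove the gate-level inequality $\ADVpm_s(g_v) \geq \sqrt{M^2 + m^2}$. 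I would establish this by exhibiting the explicit ``star'' adversary matrix for $g_v$ supported on a triple $\{z,\, z \oplus e_i,\, z \oplus e_j\}$, where $i,j$ achieve $M$ and $m$; the two off-diagonal entries $s_i = M$ and $s_j = m$ make its spectral norm $\sqrt{M^2 + m^2}$, and each cost-constraint norm $\norm{\Gamma \circ \Delta_k}$ is at most $s_k$, provided $z$ is simultaneously sensitive to $g_v$ in both directions $i$ and $j$.

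For~\eqnref{e:balancegeometricpathinvsum}, iterate~\eqnref{e:balancegeometric} along any simple path $\xi = (v_0, v_1, \ldots, v_\ell)$ from the root $v_0$ to a leaf $v_\ell$. Writing $\rho := \sqrt{1 + 1/\beta^2}$ and using $\ADVpm \geq 1$ at a leaf, we obtain $\ADVpm(\varphi_{v_i}) \geq \rho^{\ell - i}$ for each $i$. Summing the resulting geometric series,
$$\sum_{w \in \xi} \frac{1}{\ADVpm(\varphi_w)} \;\leq\; \sum_{k=0}^{\infty} \rho^{-k} \;=\; \frac{\rho}{\rho-1} \;=\; \frac{\rho(\rho+1)}{\rho^2-1} \;=\; \beta^2\,\rho(\rho+1).$$
Since $\beta \geq 1$ forces $\rho \leq \sqrt{2}$, we get $\rho(\rho+1) \leq \sqrt{2}(\sqrt{2} + 1) = 2 + \sqrt{2}$, proving~\eqnref{e:balancegeometricpathinvsum}.

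The main obstacle is the gate-level inequality in the first step: the star construction as described requires a single vertex $z$ simultaneously sensitive in both prescribed directions $i$ and $j$, which for some multiplexer-type gates does not exist when $i$ and $j$ are fixed in advance. In these cases one must resort to a slightly more elaborate adversary matrix combining two disjoint sensitive edges along a short hypercube path; the key point is that any gate depending on at least two inputs must contain a vertex of sensitivity at least two, which provides enough slack to realise the $\sqrt{M^2 + m^2}$ norm while keeping each $\norm{\Gamma \circ \Delta_k} \leq s_k$.
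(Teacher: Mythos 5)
Your skeleton matches the paper's (compose via \thmref{t:adversarycomposition}, prove a per-gate bound, sum a geometric series), and your derivation of \eqnref{e:balancegeometricpathinvsum} from \eqnref{e:balancegeometric}, including the constant $(2+\sqrt 2)\beta^2$ via $\rho\le\sqrt2$, is correct and essentially the paper's. The gap is in the per-gate inequality. Your primary construction needs an input $z$ sensitive simultaneously in the prescribed directions $i$ (achieving $M$) and $j$ (achieving $m$), and, as you note, such a $z$ need not exist. The fallback you sketch does not close this hole: an adversary matrix supported on two vertex-disjoint sensitive edges is block-diagonal, so its norm is the \emph{maximum} of the two entries, at most $M$, never the Pythagorean sum $\sqrt{M^2+m^2}$; and while it is true that a function depending on at least two bits has an input of sensitivity at least two, that input is sensitive to \emph{some} pair of coordinates, not to your prescribed heavy/light pair, so a star there only yields $\sqrt{s_{i'}^2+s_{j'}^2}$, which can be as small as $\sqrt 2\, m$ --- far below $M\sqrt{1+1/\beta^2}$ when the costs are spread. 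So the main case of your argument is unproven as written.

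The missing idea, which is exactly what the paper supplies, is to drop the requirement that the third point be a hypercube neighbour of the star's center. Fix the heaviest coordinate and take a sensitive edge $(x,y)$ in that coordinate (here one must also note that $g$ depends on that coordinate; the paper dispatches the contrary case separately). Then take any input $z$ that agrees with $x$ (or, symmetrically, with $y$) on the heavy coordinate but has the opposite function value; $z$ may differ from the center in arbitrarily many light coordinates. Such a $z$ exists on one of the two sides precisely because $g$ is not a dictatorship of the heavy bit, i.e., because it depends on at least two bits. Since every light coordinate has cost at least $m\ge M/\beta$, the entry on the pair $(x,z)$ may be taken of size $m$ while keeping every $\norm{\Gamma\circ\Delta_k}\le s_k$ (the pair $(x,y)$ is the only one differing in the heavy coordinate), and the resulting three-vertex star has norm $\sqrt{M^2+m^2}$, which is your target. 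The paper runs this same three-point argument in the dual formulation \eqnref{e:minimaxadv}, normalizing first to one weight $\beta$ and the rest $1$ and using Cauchy--Schwarz over the light coordinates where $x$ and $z$ differ; your primal star version works equally well once the third point is chosen in this relaxed way, but without that relaxation the multiplexer-type obstruction you identified is not overcome.
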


\begin{proof}
Consider a vertex $v$ with corresponding gate $g = g_v : \{0,1\}^k \rightarrow \{0,1\}$.  By \thmref{t:adversarycomposition}, $\ADVpm(\varphi_v) = \ADVpm_s(g)$, where $s_j = \ADVpm(\varphi_{c_j})$.  It is immediate from the definitions that $\ADVpm_s(g) \geq \ADV_s(g)$.  We will show that $\ADV_s(g) \geq \sqrt{1+1/\beta^2} (\max_j s_j)$, using that $\max_j s_j / \min_j s_j \leq \beta$.  

Use the weighted minimax formulation of the adversary bound from~\cite[Theorem~18]{HoyerLeeSpalek07negativeadv}: 
\beq \label{e:minimaxadv}
\ADV_s(g) = \min_{p} \max_{\substack{x, y \in \{0,1\}^k\\ g(x) \neq g(y)}} \frac{1}{\sum_{j : x_j \neq y_j} \sqrt{p_x(j) p_y(j)} / s_j}
 \enspace ,
\eeq
where the minimization is over all choices of probability distributions $p_x$ over $[k]$ for $x \in \{0,1\}^k$.  

Since the adversary bound is monotone increasing in each weight, the worst case is when all but one of the weights are equal to $\max_j s_j / \beta$.  Since for a scalar $c$, $\ADV_{c s}(g) = c \ADV_s(g)$, we may scale so that one weight is $\beta$ and all other weights are $1$.  Assume that the first weight is $s_1 = \beta$; the other $k-1$ cases, $s_2 = \beta$ and so on, are symmetrical.  Assume also that $g$ depends on the first bit; otherwise $\ADVpm_s(g)$ will not depend on $s_1$ so one of the other cases will be worse.  Therefore, there exist inputs $x, y \in \{0,1\}^k$ that differ only on the first bit, but for which $g(x) \neq g(y)$.  

Since the function $g$ depends on at least two input bits, there also exists a third input $z \in \{0,1\}^k$ with $x_1 = z_1$ but $g(z) = g(y) \neq g(x)$.  Indeed, if $g(z) = g(x)$ for every $z$ with $z_1 = x_1$, and if $g(z) = g(y)$ for every $z$ with $z_1 = y_1$, then $g$ depends only on the first bit.  

By Eq.~\eqnref{e:minimaxadv}, 
\beq
\ADVpm_s(g) 
\geq
\min_{p_x, p_y, p_z} \max \Big\{ \frac{1}{\sqrt{p_x(1) p_y(1)} / s_1}, \frac{1}{\sum_{\substack{j \geq 2 \\ x_j \neq z_j}} \sqrt{p_x(j) p_z(j)} / s_j} \Big\}
\eeq
where the minimization is over only the three probability distributions $p_x$, $p_y$ and $p_z$.  In the above expression, we may clearly take $p_y(1) = 1$ and $p_y(j) = 0$ for $j \geq 2$.  We may also use the Cauchy-Schwarz inequality to bound the second term above, and finally substitute $s_1 = \beta$, $s_j = 1$ for $j \geq 2$ to obtain,
\beq
\ADVpm_s(g)
\geq
\min_{p_x} \max \Big\{ \frac{\beta}{\sqrt{p_x(1)}}, \frac{1}{\sqrt{\sum_{j \geq 2} p_x(j)}} \Big\}
 \enspace .
\eeq
The optimum is achieved for $p_x(1) = \beta^2 / (1 + \beta^2)$, so $\ADVpm_s(g) \geq \sqrt{1 + \beta^2}$, as claimed.  

To derive Eq.~\eqnref{e:balancegeometricpathinvsum}, note that $\beta \geq 1$ necessarily.  Then the sum $\pathinvsum{\varphi}$ is dominated by the geometric series 
\beq
\sum_{k=0}^{\infty} \Big( 1 + \frac 1 {\beta^2} \Big)^{-k/2}
 \enspace ,
\eeq
which is at most $(2+\sqrt 2)\beta^2$, with equality at $\beta = 1$.  
\end{proof}

Note that the $1$-balanced formulas over $\cS = \{ \OR_2 \}$ satisfy the inequality~\eqnref{e:balancegeometric} with equality and come arbitrarily close to saturating the inequality~\eqnref{e:balancegeometricpathinvsum}.  

With \lemref{t:abstAGPnorm} and \lemref{t:balancegeometric} in hand, we are ready to prove \thmref{t:unbalancedformulaevaluation}.  

\begin{proof}[Proof of \thmref{t:unbalancedformulaevaluation}]
First of all, we may assume without loss of generality that every gate in $\cS$ depends on at least two input bits.  Indeed, if a gate $g : \{0,1\}^k \rightarrow \{0,1\}$ depends on no input bits, i.e., is the constant $0$ or constant $1$ function, then $g$ can be eliminated from any formula over $\cS$ without changing the adversary balance condition, since $\ADVpm_s(g) = 0$ for all cost vectors $s \in [0,\infty)^k$.  If a gate $g : \{0,1\}^k \rightarrow \{0,1\}$ depends only on one input bit, say the first bit, then $\ADVpm_s(g) = s_1$ for all cost vectors $s$, and therefore similarly $g$ can be eliminated without affecting the adversary balance condition.  

Consider $\varphi$ an $n$-variable, $\beta$-balanced, read-once formula over the finite gate set~$\cS$.  Let $r$ be the root of $\varphi$.  We begin by recursively constructing a span program $P_\varphi$ that computes $\varphi$ and has witness size $\wsize{P_\varphi} = \ADVpm(\varphi)$.  $P_\varphi$ is constructed using direct-sum composition of span programs for each node in $\varphi$.  (Direct-sum composition is also the composition method used in~\cite{ReichardtSpalek08spanprogram}.)  

The construction works recursively, starting at the leaves of $\varphi$ and moving toward the root.  Consider an internal vertex $v$, with children $c_1, \ldots, c_k$.  Let $\alpha_j = \ADVpm(\varphi_{c_j})$, where $\varphi_{c_j}$ is the subformula of $\varphi$ rooted at $c_j$ (\defref{t:approxbalancedef}).  In particular, if $c_j$ is a leaf, then $\alpha_j = 1$.  Assume that for $j \in [k]$ we have inductively constructed span programs $P_{\varphi_{c_j}}$ and $P_{\varphi_{c_j}}^\dagger$ computing $\varphi_{c_j}$ and $\neg \varphi_{c_j}$, respectively, with $\wsize{P_{\varphi_{c_j}}} = \wsize{P_{\varphi_{c_j}}^\dagger} = \alpha_j$.  Apply \cite[Theorem~6.1]{Reichardt09spanprogram}, a generalization of \thmref{t:spanprogramSDPintro}, twice to obtain span programs $P_v$ and $P_v^\dagger$ computing $f_{P_v} = g_v$ and $f_{P_v^\dagger} = \neg g_v$, with $\wsizeS {P_v} \alpha = \wsizeS {P_v^\dagger} \alpha = \ADVpm_\alpha(g_v) = \ADVpm(\varphi_v)$.  

Then let $P_{\varphi_v}$ and $P_{\varphi_v}^\dagger$ be the direct-sum-composed span programs of $P_v$ and $P_v^\dagger$, respectively, with the span programs $P_{\varphi_{c_j}}$, $P_{\varphi_{c_j}}^\dagger$ according to the formula $\varphi$.  By definition of direct-sum composition, the graph $G_{P_{\varphi_v}}$ is built by replacing the input edges of $G_{P_v}$ with the graphs $G_{P_{\varphi_{c_j}}}$ or $G_{P_{\varphi_{c_j}}^\dagger}$; and similarly for $G_{P_{\varphi_v}^\dagger}$.  Some examples are given in~\cite[App.~B]{Reichardt09spanprogram} and in~\cite{ReichardtSpalek08spanprogram}.  By~\cite[Theorem~4.3]{Reichardt09spanprogram}, $P_{\varphi_v}$ (resp.~$P_{\varphi_v}^\dagger$) computes $\varphi_v$ ($\neg \varphi_v$) with $\wsize{P_{\varphi_v}} = \wsize{P_{\varphi_v}^\dagger} = \ADVpm(\varphi_v)$.  

Let $P_\varphi = P_{\varphi_r}$.  We wish to apply \thmref{t:generalspanprogramalgorithmnonblackbox} to $P_\varphi$ to obtain a quantum algorithm, but to do so will need some more properties of the span programs $P_v$ and $P_v^\dagger$.  Recall from~\cite[Theorem~5.2]{Reichardt09spanprogram} that each $P_v$ may be assumed to be in canonical form, satisfying in particular that for any input $y \in \B^k$ with $g_v(y) = 0$ an optimal witness is $\ket y \in \C^{g_v^{-1}(0)}$ itself.  Therefore, \lemref{t:abstAGPnorm} applies, and we obtain 
\beq
\norm{\abst(A_{G_{P_v}})} = 2^k \bigg( 1 + \frac{\wsizeS {P_v} \alpha}{\min_j \alpha_j} \bigg) + \abs I
 \enspace ,
\eeq
where $\abs I$ is the number of input vectors in $P_v$.  
Now use 
\beq\begin{split}
\frac{\wsizeS {P_v} \alpha}{\min_j \alpha_j}
&=
\frac{\max_j \alpha_j}{\min_j \alpha_j} \frac{\ADVpm_\alpha(g_v)}{\max_j \alpha_j} \\
&\leq
\beta k
 \enspace ,
\end{split}\eeq
where we have applied Eq.~\eqnref{e:approxbalancedef} and also $\ADVpm_\alpha(g_v) / \max_j \alpha_j \leq \ADVpm(g_v) \leq k$.  Additionally, by~\cite[Lemma~6.6]{Reichardt09spanprogram}, we may assume that $\abs I \leq 2 k^2 2^k$.  Thus 
\beq
\norm{\abst(A_{G_{P_v}})} = \beta \, 2^{O(k)}
 \enspace .
\eeq
By repeating this argument for the negated function $\neg g_v$ computed by a dual span program $P_v^\dagger$ (\cite[Lemma~4.1]{Reichardt09spanprogram}), we also have $\norm{\abst(A_{G_{P_v^\dagger}})} = \beta \, 2^{O(k)}$.  

A consequence is that 
\beq \label{e:unbalancedformulaevaluationnorm}
\norm{\abst(A_{G_{P_\varphi}})} = \beta \, 2^{O(k_{\text{max}})}
\eeq
where $k_{\text{max}}$ is the maximum fan-in of any gate used in $\varphi$.  Indeed, $G_{P_\varphi}$ is built by ``plugging together" the graphs $G_{P_v}$ and $G_{P_v^\dagger}$ for the different vertices~$v$.  Split the graph $G_{P_\varphi}$ into two pieces, $G_0$ and $G_1$, comprising those subgraphs $G_{P_v}$ and $G_{P_v^\dagger}$ for which the distance of $v$ from $r$ is even or odd, respectively.  Then $\norm{\abst(A_{G_{P_\varphi}})} \leq \norm{\abst(A_{G_0})} + \norm{\abst(A_{G_1})}$.  Since each $G_b$ is the disconnected union of graphs $G_{P_v}$ and $G_{P_v^\dagger}$, $\norm{\abst(A_{G_b})} \leq \max_v \max \{ \norm{\abst(A_{G_{P_v}})}, \norm{\abst(A_{G_{P_v^\dagger}})} \}$.  

Let us bound the full witness size of $P_\varphi$.  

\begin{lemma} \label{t:unbalancedformulaevaluationtruefalsewitness}
Let $v$ be a vertex of $\varphi$.  Then 
\beq
\max \big\{ \wsizef {P_{\varphi_v}}, \wsizef {P_{\varphi_v}^\dagger} \big\} \leq \pathinvsum{v} \ADVpm(\varphi_v)
 \enspace .
\eeq
\end{lemma}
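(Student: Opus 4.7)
I would prove the bound by induction on the depth of $v$ in $\varphi$, working from the leaves toward the root. The base case, where $v$ is a leaf, is immediate: $\varphi_v$ is a single input bit, so $\ADVpm(\varphi_v) = \pathinvsum{v} = 1$, and the trivial span program has full witness size $1$.

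For the inductive step at an internal vertex $v$ with children $c_1, \ldots, c_k$ and $\alpha_j = \ADVpm(\varphi_{c_j})$, the plan is to apply \lemref{t:wsizefcompose} to the direct-sum composition that defines $P_{\varphi_v}$, namely $P_v$ composed with the children span programs $P_{\varphi_{c_j}}$ and $P_{\varphi_{c_j}}^\dagger$. Two features of the canonical form of $P_v$ will be crucial here: canonical span programs are strict, so $\Ifree = \emptyset$ and the ``$\sum_{i \in \Ifree} |w_i|^2$'' term in Eq.~\eqnref{e:wsizefcomposetrue} vanishes, leaving only the additive $+1$; and for each rejecting input $y$ the vector $\ket y$ is an optimal false-witness of unit norm, so the ``$\|\ket{w'}\|^2$'' term in Eq.~\eqnref{e:wsizefcomposefalse} also equals $1$. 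Invoking the weaker bound on $\sigma$ noted after \lemref{t:wsizefcompose}, both cases of the lemma then collapse to
\[
\wsizefx{P_{\varphi_v}}{x} \leq \wsizexS{P_v}{y(x)}{\alpha} \cdot \sigma + 1,
\]
where $\sigma$ is the maximum of $\wsizef{P}/\wsize{P}$ over the inner span programs $P \in \{P_{\varphi_{c_j}}, P_{\varphi_{c_j}}^\dagger : j \in [k]\}$.

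From here the argument is essentially algebraic. The inductive hypothesis together with $\wsize{P_{\varphi_{c_j}}} = \wsize{P_{\varphi_{c_j}}^\dagger} = \alpha_j$ gives $\sigma \leq \max_j \pathinvsum{c_j}$, and by construction and \thmref{t:adversarycomposition}, $\wsizeS{P_v}{\alpha} = \ADVpm_\alpha(g_v) = \ADVpm(\varphi_v)$. Taking the max over $x$ and combining with the elementary tree-recursion $\pathinvsum{v} = 1/\ADVpm(\varphi_v) + \max_j \pathinvsum{c_j}$, which is immediate from \defref{t:approximatelybalanced}, then closes out the induction: $\wsizef{P_{\varphi_v}} \leq \ADVpm(\varphi_v) \max_j \pathinvsum{c_j} + 1 = \pathinvsum{v}\ADVpm(\varphi_v)$. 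The bound on $\wsizef{P_{\varphi_v}^\dagger}$ follows by the same argument with $P_v$ replaced by $P_v^\dagger$.

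The main obstacle is really a delicate bookkeeping check rather than a deep technical difficulty: one has to verify that the additive $+1$ produced at each level of the recursion is absorbed exactly, after multiplication by $\ADVpm(\varphi_v)$, into the extra $1/\ADVpm(\varphi_v)$ contribution to $\pathinvsum{v}$ coming from vertex $v$ itself. This matching is what makes the bound tight, and it relies critically on both canonical-form properties above; without the unit-norm false-witness, for instance, the false case would accumulate a larger additive constant and the recursion would no longer close against $\pathinvsum{v}\ADVpm(\varphi_v)$.
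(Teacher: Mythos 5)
Your inductive step at internal vertices whose children are themselves internal matches the paper's proof exactly: strictness of the canonical $P_v$ kills the $\sum_{i\in\Ifree}\abs{w_i}^2$ term in Eq.~\eqnref{e:wsizefcomposetrue}, canonical form gives $\norm{\ket{w'}}^2=1$ in Eq.~\eqnref{e:wsizefcomposefalse}, and the recursion closes against $\pathinvsum{v} = 1/\ADVpm(\varphi_v) + \max_j \pathinvsum{c_j}$. The genuine gap is in your treatment of the leaves. The trivial span program computing a single bit does \emph{not} have full witness size $1$: by \defref{t:wsizedef} the true case charges $1 + \norm{\Sf\ket w}^2 = 2$ and the false case charges $\norm{\ket{w'}}^2 + \norm{SA^\adjoint\ket{w'}}^2 = 2$, so its full witness size is $2$ while its witness size is $1$. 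Consequently, if leaf children are included among the composed inner programs as you propose, the ratio $\wsizef{P}/\wsize{P}$ for a leaf child is $2$, not at most $\pathinvsum{c_j}=1$, and your induction fails to close at any vertex with leaf children: for a vertex $v$ all of whose children are leaves you would only get $\wsizef{P_{\varphi_v}} \leq 2\ADVpm(\varphi_v)+1$, whereas the lemma demands $\pathinvsum{v}\ADVpm(\varphi_v) = \ADVpm(\varphi_v)+1$.

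Moreover this is not mere slack in the estimate: direct-sum composing a trivial span program onto each leaf genuinely inflates the full witness size (the witness acquires amplitude on the intermediate free vectors \emph{and} on the inner input vectors), e.g.\ for the canonical $\OR_2$ program with unit costs the leaf-composed program has full witness size $1+2\sqrt2$, exceeding the claimed $1+\sqrt2$. So the statement you would be proving is false for your version of $P_{\varphi_v}$. The paper avoids this by never composing at leaves: in the application of \lemref{t:wsizefcompose} it takes $S=\{j\in[k]: c_j \text{ is not a leaf}\}$, so leaf children remain direct cost-$1$ inputs of $P_v$ (contributing to the witness size but incurring no extra overhead), and the base case of the induction is a vertex all of whose children are leaves, where $\pathinvsum{v}=1+1/\ADVpm(g_v)$ and the bound $\wsizef{P_v}\leq 1+\wsize{P_v}$ follows directly from strictness and canonical form. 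With that modification your argument coincides with the paper's; as written, the leaf bookkeeping is an error that the rest of the recursion cannot absorb.
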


\begin{proof}
The proof is by induction in the maximum distance from $v$ to a leaf.  The base case, that all of $v$'s inputs are themselves leaves is by definition of $P_v$ and $P_v^\dagger$, since then $\pathinvsum{v} = 1 + 1/\ADVpm(g_v)$.  

Let $v$ have children $c_1, \ldots, c_k$.  By \lemref{t:wsizefcompose} with $s = \vec 1$ and $S = \{ j \in [k] : \text{$c_j$ is not a leaf} \}$,   
\beq
\frac{\wsizef {P_{\varphi_v}}}{\ADVpm(\varphi_v)} \leq \frac{1}{\ADVpm(\varphi_v)} + \max_{j \in S} \max \Bigg\{ \frac{\wsizef {P_{\varphi_{c_j}}}}{\ADVpm(\varphi_{c_j})}, \frac{\wsizef {P_{\varphi_{c_j}}^\dagger}}{\ADVpm(\varphi_{c_j})} \Bigg\}
 \enspace .
\eeq
In the case $\varphi_v(x) = 1$, this follows since $P_v$ is strict, so in Eq.~\eqnref{e:wsizefcomposetrue} the sum over $\Ifree$ is zero.  In the case $\varphi_v(x) = 0$, this follows since $P_v$ is in canonical form, so in Eq.~\eqnref{e:wsizefcomposefalse}, $\norm{\ket{w'}}^2 = 1$.  

Now by induction, the right-hand side is at most $\ADVpm(\varphi_v)^{-1} + \max_{j \in S} \pathinvsum{\varphi_{c_j}} = \pathinvsum{v}$.  
\end{proof}

In particular, applying \lemref{t:unbalancedformulaevaluationtruefalsewitness} for the case $v = r$, we find 
\beq \label{e:unbalancedformulaevaluationwsize}
\wsizef {P_\varphi} \leq \pathinvsum{\varphi} \ADVpm(\varphi) = O\big(\beta^2 \ADVpm(\varphi) \big)
\eeq
since $\pathinvsum{\varphi} = O(\beta^2)$ by \lemref{t:balancegeometric}.  Combining Eqs.~\eqnref{e:unbalancedformulaevaluationnorm} and~\eqnref{e:unbalancedformulaevaluationwsize} gives 
\beq
\wsizef {P_\varphi} \, \norm{\abst(A_{G_{P_\varphi}})} = \beta^3 \, 2^{O(k_{\text{max}})} \ADVpm(\varphi)
 \enspace .
\eeq
This is $O(\ADVpm(\varphi))$; since the gate set $\cS$ is fixed and finite, $k_{\text{max}} = O(1)$.  \thmref{t:unbalancedformulaevaluation} now follows from \thmref{t:generalspanprogramalgorithmnonblackbox}.  
\end{proof}

Note that the lost constant in the theorem grows cubically in the balance parameter $\beta$ and exponentially in the maximum fan-in $k_{\text{max}}$ of a gate in $\cS$.  It is conceivable that this exponential dependence can be improved.  

For future reference, we state separately the bound used above to derive Eq.~\eqnref{e:unbalancedformulaevaluationnorm}.  

\begin{lemma} \label{t:directsumnorm}
If $P_\varphi$ is the direct-sum composition along a formula $\varphi$ of span programs $P_v$ and $P_v^\dagger$, then 
\beq
\norm{\abst(A_{G_P})} \leq 2 \max_{v \in \varphi} \max \{ \norm{\abst(A_{G_{P_v}})}, \norm{\abst(A_{G_{P_v^\dagger}})} \}
 \enspace .
\eeq
If the span programs $P_v$ are monotone, then $\norm{\abst(A_{G_P})} \leq 2 \max_v \norm{\abst(A_{G_{P_v}})}$.  
\end{lemma}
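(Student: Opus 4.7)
The plan is to restate and formalize the two-coloring argument that already appears in the proof of \thmref{t:unbalancedformulaevaluation} when deriving Eq.~\eqnref{e:unbalancedformulaevaluationnorm}, now for an arbitrary direct-sum composition along a formula.

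The main idea is to partition the edges of $G_{P_\varphi}$ according to which vertex $v$ of $\varphi$ contributed them under the direct-sum construction. From \defref{t:directsumcomposedef}, each edge of $G_{P_\varphi}$ lies in exactly one subgraph $G_{P_v}$ or $G_{P_v^\dagger}$, and vertices are identified only between the target of a child's span program and an input vertex of its parent's span program. Color each gate vertex $v$ of $\varphi$ by the parity of its distance from the root $r$, and let $G_0$ consist of the edges contributed by vertices at even distance, and $G_1$ of those at odd distance. Since sharing of vertices in direct-sum composition occurs only between parent and child, which have opposite parities, the subgraphs $G_{P_v}^{(\dagger)}$ appearing inside a single $G_b$ are pairwise vertex-disjoint. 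Hence $G_0$ and $G_1$ are each the disjoint union of the respective $G_{P_v}$'s and $G_{P_v^\dagger}$'s.

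Next I would write $\abst(A_{G_{P_\varphi}}) = \abst(A_{G_0}) + \abst(A_{G_1})$ (viewing each summand as zero-padded to the full vertex set of $G_{P_\varphi}$), which holds because the edge sets partition. By the triangle inequality for the operator norm,
\beq
\norm{\abst(A_{G_{P_\varphi}})} \leq \norm{\abst(A_{G_0})} + \norm{\abst(A_{G_1})}
 \enspace .
\eeq
Since the operator norm of the adjacency matrix of a disjoint union of graphs is the maximum of the operator norms over the components, each term on the right is bounded by $\max_v \max\{\norm{\abst(A_{G_{P_v}})}, \norm{\abst(A_{G_{P_v^\dagger}})}\}$, yielding the claimed factor of $2$.

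For the monotone case I would observe that if every $P_v$ has $I_{j,0} = \emptyset$ for all $j$, then \defref{t:directsumcomposedef} attaches only span programs of positive polarity to the inputs of $P_v$. Unfolding the recursion, no dual program $P_v^\dagger$ is ever spliced into $G_{P_\varphi}$, so the maximum over $\{\norm{\abst(A_{G_{P_v}})}, \norm{\abst(A_{G_{P_v^\dagger}})}\}$ collapses to $\norm{\abst(A_{G_{P_v}})}$. I do not expect any real obstacle: the only point requiring care is verifying that the vertex identifications in direct-sum composition happen strictly across parent--child pairs (and thus across parities), which is immediate from the definition of $\ket{v_i^\oplus}$ for $i \in I_{jc}$ in \defref{t:directsumcomposedef}, where a single shared tensor factor $\ket i$ glues the parent's input vertex to the child's target.
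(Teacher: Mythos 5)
Your proposal is correct and matches the paper's own argument: the paper likewise splits $G_{P_\varphi}$ into two pieces $G_0$, $G_1$ according to the parity of the distance of $v$ from the root, applies the triangle inequality, uses that each piece is a disjoint union of the graphs $G_{P_v}$, $G_{P_v^\dagger}$ so its norm is the maximum over components, and notes that in the monotone case the dual programs $P_v^\dagger$ are never used. No gaps.
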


The claim for monotone span programs follows because then the dual span programs $P_v^\dagger$ are not used in $P_\varphi$.

\section{Evaluation of approximately balanced AND-OR formulas} \label{s:approxbalancedandor}

The proof of \thmref{t:approxbalancedandor} will again be a consequence of \lemref{t:wsizefcompose} and \thmref{t:generalspanprogramalgorithmnonblackbox}.  

We will use the following strict, monotone span programs for fan-in-two AND and OR gates: 

\begin{definition} \label{t:andorspanprogramdef}
For $s_1, s_2 > 0$, define span programs $P_{\AND}(s_1, s_2)$ and $P_{\OR}(s_1, s_2)$ computing $\AND_2$ and $\OR_2$, $\B^2 \rightarrow \B$, respectively, by 
\begin{align}
P_{\AND}(s_1, s_2): &&
\ket t &= \left( \begin{matrix} \alpha_1 \\ \alpha_2 \end{matrix} \right) ,\; 
&\ket{v_1} &= \left( \begin{matrix} \beta_1 \\ 0 \end{matrix} \right) ,\; &\ket{v_2} &= \left( \begin{matrix} 0 \\ \beta_2 \end{matrix} \right) \\
P_{\OR}(s_1, s_2): &&
\ket t &= \delta ,\; &\ket{v_1} &= \epsilon_1 ,\;& \ket{v_2} &= \epsilon_2 
\end{align}
Both span programs have $I_{1,1} = \{1\}$, $I_{2,1} = \{2\}$ and $\Ifree = I_{1,0} = I_{2,0} = \emptyset$.  Here the parameters $\alpha_j, \beta_j, \delta, \epsilon_j$, for $j \in [2]$, are given by 
\begin{align}
\alpha_j &= (s_j / s_p)^{1/4} & \beta_j &= 1 \\
\delta &= 1 & \epsilon_j &= (s_j / s_p)^{1/4}
 \enspace ,
\end{align}
where $s_p = s_1 + s_2$.  Let $\alpha = \sqrt{\alpha_1^2 + \alpha_2^2}$ and $\epsilon = \sqrt{\epsilon_1^2 + \epsilon_2^2}$.  
\end{definition}

Note that $\alpha, \epsilon \in (1, 2^{1/4}]$.  They are largest when $s_1 = s_2$.  

\begin{claim} \label{t:andorspanprogramwsize}
The span programs $P_{\AND}(s_1, s_2)$ and $P_{\OR}(s_1, s_2)$ satisfy: 
\begin{align}\begin{split}
\wsizexS{P_{\AND}}{x}{(\sqrt{s_1}, \sqrt{s_2})}
&=
\begin{cases}
\sqrt{s_p} & \text{if $x \in \{11, 10, 01\}$} \\
\frac{\sqrt{s_p}}{2} & \text{if $x = 00$}
\end{cases} \\
\wsizexS{P_{\OR}}{x}{(\sqrt{s_1}, \sqrt{s_2})}
&=
\begin{cases}
\sqrt{s_p} & \text{if $x \in \{00, 10, 01\}$} \\
\frac{\sqrt{s_p}}{2} & \text{if $x = 11$}
\end{cases}
\end{split}\end{align}
\end{claim}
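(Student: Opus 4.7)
The plan is to verify the claim by direct case analysis: with only two input vectors and both indices lying in the ``positive'' sets $I_{1,1}$ and $I_{2,1}$, each of the four inputs $x\in\{0,1\}^2$ yields a very small linear system for the witness. The $s_1\leftrightarrow s_2$ symmetry reduces the work to three inputs per span program, and in each case the optimization is either trivial (the witness is uniquely pinned down by the constraints) or reduces to minimizing a two-term weighted sum of squares subject to a single affine constraint, which is handled in one line by Lagrange multipliers (equivalently, Cauchy--Schwarz).

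For $P_{\AND}(s_1,s_2)$, the matrix $A$ whose columns are $\ket{v_1},\ket{v_2}$ is the $2\times 2$ identity. On $x=11$ both vectors are available, so the unique true-witness is $\ket w=\ket t=(\alpha_1,\alpha_2)^T$; using $\alpha_j^2=\sqrt{s_j/s_p}$, the size is $\sqrt{s_1}\alpha_1^2+\sqrt{s_2}\alpha_2^2=(s_1+s_2)/\sqrt{s_p}=\sqrt{s_p}$. On $x=10$ only $\ket{v_1}$ is available, so $\braket{v_1}{w'}=w'_1=0$ and $\braket{t}{w'}=1$ force $\ket{w'}=(0,1/\alpha_2)^T$, giving false-witness size $\sqrt{s_2}/\alpha_2^2=\sqrt{s_p}$. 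On $x=00$ both vectors are missing, so $\ket{w'}$ is constrained only by the single affine equation $\alpha_1 w'_1+\alpha_2 w'_2=1$; minimizing $\sqrt{s_1}|w'_1|^2+\sqrt{s_2}|w'_2|^2$ under this constraint gives $\bigl(\alpha_1^2/\sqrt{s_1}+\alpha_2^2/\sqrt{s_2}\bigr)^{-1}$, and the identities $\alpha_j^2/\sqrt{s_j}=1/\sqrt{s_p}$ collapse this to $\sqrt{s_p}/2$.

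The analysis of $P_{\OR}(s_1,s_2)$ is essentially dual. The inner product space is one-dimensional with $\ket t=1$, and the input vectors are the scalars $\epsilon_j=(s_j/s_p)^{1/4}$. On $x=00$ nothing is available, so the unique 0-witness is $w'=1$, giving $\sqrt{s_1}\epsilon_1^2+\sqrt{s_2}\epsilon_2^2=\sqrt{s_p}$. On $x=10$ only $\ket{v_1}$ is available, so the unique 1-witness is $(1/\epsilon_1,0)^T$ with size $\sqrt{s_1}/\epsilon_1^2=\sqrt{s_p}$. On $x=11$ the 1-witness optimization---minimize $\sqrt{s_1}|w_1|^2+\sqrt{s_2}|w_2|^2$ subject to $\epsilon_1 w_1+\epsilon_2 w_2=1$---is formally identical to the $x=00$ case of $P_{\AND}$ after the substitution $\alpha_j\mapsto\epsilon_j$, and yields $\sqrt{s_p}/2$. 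The only subtlety worth flagging---really the sole point where one could slip---is that the cost vector in the claim is $(\sqrt{s_1},\sqrt{s_2})$, not $(s_1,s_2)$, so the diagonal of the matrix $S$ in \defref{t:wsizedef} consists of the entries $s_j^{1/4}$ and the quantity being minimized is $\sum_i\sqrt{s_{j(i)}}\,|w_i|^2$; once this bookkeeping is set up correctly, the eight cases are routine algebra.
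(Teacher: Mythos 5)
Your proposal is correct and follows essentially the same route as the paper: the paper's proof is exactly this direct computation from \defref{t:wsizedef}, listing for each of the eight cases the same witnesses and the same closed-form values (including the Cauchy--Schwarz/Lagrange minimization giving $\bigl(\alpha_1^2/\sqrt{s_1}+\alpha_2^2/\sqrt{s_2}\bigr)^{-1}=\sqrt{s_p}/2$ and its OR analogue), and your bookkeeping of the cost vector $(\sqrt{s_1},\sqrt{s_2})$ matches the paper's use of $\sigma=(\sqrt{s_1},\sqrt{s_2})$.
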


\begin{proof}
These are calculations using \defref{t:wsizedef} for the witness size.  Letting $\sigma = (\sqrt{s_1}, \sqrt{s_2})$, $Q = P_{\AND}(s_1, s_2)$ and $R = P_{\OR}(s_1, s_2)$, we have 
\begin{align}
\wsizexS{Q}{11}{\sigma} &= \Big(\frac{\alpha_1}{\beta_1}\Big)^2 \sqrt{s_1} + \Big(\frac{\alpha_2}{\beta_2}\Big)^2 \sqrt{s_2} = \sqrt{s_p} &
\wsizexS{Q}{10}{\sigma} &= \Big( \frac{\beta_2}{\alpha_2} \Big)^2 \sqrt{s_2} = \sqrt{s_p} \\
\wsizexS{Q}{00}{\sigma} &= \left( \Big(\frac{\alpha_1}{\beta_1}\Big)^2 \frac1{\sqrt{s_1}} + \Big(\frac{\alpha_2}{\beta_2}\Big)^2 \frac1{\sqrt{s_2}} \right)^{-1} \!\! = \frac{\sqrt{s_p}}{2} &
\wsizexS{Q}{01}{\sigma} &= \Big( \frac{\beta_1}{\alpha_1} \Big)^2 \sqrt{s_1} = \sqrt{s_p} \nonumber
\intertext{and}
\wsizexS{R}{11}{\sigma} &= \delta^2 \Big( \frac{\epsilon_1^2}{\sqrt{s_1}} + \frac{\epsilon_2^2}{\sqrt{s_2}} \Big)^{-1} = \frac{\sqrt{s_p}}{2} &
\wsizexS{R}{10}{\sigma} &= \Big( \frac{\delta}{\epsilon_1} \Big)^2 \sqrt{s_1} = \sqrt{s_p} \\
\wsizexS{R}{00}{\sigma} &= \Big(\frac{\epsilon_1}{\delta}\Big)^2 \sqrt{s_1} + \Big(\frac{\epsilon_2}{\delta}\Big)^2 \sqrt{s_2} = \sqrt{s_p} &
\wsizexS{R}{01}{\sigma} &= \Big( \frac{\delta}{\epsilon_2} \Big)^2 \sqrt{s_2} = \sqrt{s_p} \nonumber
 \enspace .
\end{align}
It is not a coincidence that $\wsizexS{Q}{x}{\sigma} = \wsizexS{R}{\bar x}{\sigma}$ for all $x \in \B^2$.  This can be seen as a consequence of De Morgan's laws and span program duality---see~\cite[Lemma~4.1]{Reichardt09spanprogram}.  
\end{proof}

\begin{proof}[Proof of \thmref{t:approxbalancedandor}]
Let $\varphi$ be an AND-OR formula of size $n$, i.e., on $n$ input bits.  

First expand out the formula so that every AND gate and every OR gate has fan-in two.  This expansion can be carried out without increasing $\pathinvsum{\varphi}$ by more than a factor of~$10$: 

\begin{lemma}[{\cite[Lemma~8]{AmbainisChildsReichardtSpalekZhang07andor}}] \label{t:gateexpansion}
For any AND-OR formula $\varphi$, one can efficiently construct an equivalent AND-OR formula $\varphi'$ of the same size, such that all gates in $\varphi'$ have fan-in at most two, and $\pathinvsum{\varphi'} = O(\pathinvsum{\varphi})$.   
\end{lemma}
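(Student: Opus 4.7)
The plan is to replace each gate of fan-in $k \geq 3$ in $\varphi$ with a balanced binary tree of fan-in-two gates of the same type (AND or OR), designed so that the weights of the new internal subformulas decay geometrically along every descending path. I begin by recalling that for AND-OR formulas $\ADVpm(\varphi_u) = \sqrt{n_u}$, where $n_u$ is the number of leaves of $\varphi_u$; hence $\pathinvsum{u}$ is the maximum over paths from $u$ to a leaf of $\sum_w 1/\sqrt{n_w}$. The construction will leave the multiset of leaves unchanged, so $n_u$ is preserved at every original vertex $u$.

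Second, the local construction: for $v$ with gate of fan-in $k \geq 3$ and children $c_1,\dots,c_k$ of subformula sizes $n_{c_j}$, build a binary tree $T_v$ whose leaves are $c_1,\dots,c_k$ and whose internal nodes are all fan-in-two gates of the same type as $g_v$. Construct $T_v$ recursively: at each internal node, given a subset $S$ of the remaining children of total weight $W = \sum_{j \in S} n_{c_j}$, partition $S$ into non-empty pieces $S_1, S_2$ satisfying either (a) $w(S_1), w(S_2) \leq (2/3)W$, or (b) $S_1 = \{c_{j^\star}\}$ for some $c_{j^\star}$ with $n_{c_{j^\star}} > (2/3)W$ and $S_2 = S \setminus \{c_{j^\star}\}$. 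A longest-processing-time greedy achieves case (a) whenever case (b) does not force itself.

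Third, the path bound. Along any descending path in $T_v$ from its root to a leaf $c_j$, with weights $W_0 > W_1 > \cdots > W_{d-1}$ at internal nodes and $W_d = n_{c_j}$ at the leaf, consecutive weights decrease by at least a factor of $3/2$: in case (a) this is immediate, and in case (b) the step either terminates the descent at the heavy singleton or proceeds into the complementary subtree of weight $< W_i/3$. Consequently $\sum_{i=0}^{d-1} 1/\sqrt{W_i}$ is dominated by a geometric series ending in $1/\sqrt{W_{d-1}} = O(1/\sqrt{n_{c_j}}) = O(1/\ADVpm(\varphi_{c_j}))$.

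Finally, induct from the leaves of $\varphi$ upward to conclude that the $\pathinvsum$ value at any vertex $v$, computed in $\varphi'$, is at most a universal constant $C$ times its value in $\varphi$. A maximum downward path from $v$ in $\varphi'$ traverses $T_v$ down to some $c_j$ and then descends through the transformed subformula rooted at $c_j$; the $T_v$-contribution is $O(1/\ADVpm(\varphi_{c_j}))$, at most a constant times $1/\ADVpm(\varphi_{c_j}) \leq \pathinvsum{c_j}$ (computed in $\varphi$), and this can be absorbed into $C$ via the induction hypothesis. Applied at the root, this gives $\pathinvsum{\varphi'} = O(\pathinvsum{\varphi})$. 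The main obstacle is verifying the uniform $3/2$-decay in step three, requiring care in handling the ``heavy singleton'' case; once that is established, the geometric-series bound and the induction step are routine.
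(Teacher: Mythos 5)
The paper does not actually prove this lemma---it is quoted from \cite[Lemma~8]{AmbainisChildsReichardtSpalekZhang07andor}---and your construction (weight-balanced binarization of each high-fan-in gate so that subtree weights decay geometrically along every descending path) is essentially the standard one used there. The splitting rule, the factor-$3/2$ decay between consecutive internal nodes of $T_v$, and the resulting bound of $O(1/\sqrt{n_{c_j}})$ on the $T_v$-portion of a path ending at child $c_j$ are all correct.

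The gap is in your final induction step. You bound the $T_v$-overhead by a constant times $1/\ADVpm(\varphi_{c_j}) \leq \pathinvsum{c_j}$ and say it ``can be absorbed into $C$.'' Written out, with induction hypothesis $\sigma_-^{\varphi'}(c_j) \leq C\,\sigma_-^{\varphi}(c_j)$, you need $A/\sqrt{n_{c_j}} + C\,\sigma_-^{\varphi}(c_j) \leq C\big(1/\sqrt{n_v} + \max_{j'}\sigma_-^{\varphi}(c_{j'})\big)$; for the maximizing child this forces $A/\sqrt{n_{c_j}} \leq C/\sqrt{n_v}$, which fails for any fixed $C$ once, say, $v$ has $k$ children of equal size $n_v/k$ (the overhead is $\Theta(\sqrt{k}/\sqrt{n_v})$, while only $C/\sqrt{n_v}$ is available at the top). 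Bounding the overhead instead by $A\,\sigma_-^{\varphi}(c_j)$ and folding it into the multiplicative constant does not help either: then the constant becomes $A + C$ at each level and grows linearly with the depth, so it is not uniform. The repair is to charge the per-level overhead $O(1/\sqrt{n_{c_j}})$ not to the term $1/\sqrt{n_v}$ at $v$ but to the term $1/\sqrt{n_{c_j}}$ that the \emph{original} path already pays at $c_j$. Concretely, compare path by path: if $P = (v = u_0, u_1, \ldots, u_m)$ is a root-to-leaf path in $\varphi$ and $P'$ its image in $\varphi'$, your decay estimate gives $\sum_{w \in P'} 1/\sqrt{n_w} \leq A \sum_{t=1}^{m} 1/\sqrt{n_{u_t}} + 1 \leq (A+1) \sum_{w \in P} 1/\sqrt{n_w}$, using that the original sum is at least $1$ (the leaf term); since every path in $\varphi'$ arises this way, $\pathinvsum{\varphi'} \leq (A+1)\,\pathinvsum{\varphi}$. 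Equivalently, strengthen your induction hypothesis to $\sigma_-^{\varphi'}(v) \leq A\big(\sigma_-^{\varphi}(v) - 1/\ADVpm(\varphi_v)\big) + 1$, which closes cleanly: the subtracted credit at $c_j$ is exactly what pays for the $T_v$-overhead $A/\sqrt{n_{c_j}}$. With either of these replacements your argument is complete.
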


Therefore we may assume that $\varphi$ is a formula over fan-in-two $\AND$ and $\OR$ gates.  

Now use direct-sum composition to compose the $\AND$ and $\OR$ gates according to the formula $\varphi$, as in the proof of \thmref{t:unbalancedformulaevaluation}.  Since the span programs for $\AND$ and $\OR$ are monotone, direct-sum composition does not make use of dual span programs computing NAND or NOR.  Therefore there is no need to specify these span programs.  At a vertex $v$, set the weights $s_1$ and $s_2$ to equal the sizes of $v$'s two input subformulas.  Let $P_v$ be the span program used at vertex $v$, $P_{\varphi_v}$ be the span program thus constructed for the subformula $\varphi_v$, and $P_\varphi$ be the span program constructed computing $\varphi$.  
With this choice of weights, it follows from \claimref{t:andorspanprogramwsize} and~\cite[Theorem~4.3]{Reichardt09spanprogram} that $\wsize {P_{\varphi_v}} = \ADVpm(\varphi_v) = \ADV(\varphi_v)$.  

Notice that for all $s_1, s_2 \in [0, \infty)$, $\norm{\abst(A_{G_{P_{\AND}(s_1, s_2)}})} = O(1)$ and $\norm{\abst(A_{G_{P_{\OR}(s_1, s_2)}})} = O(1)$.  Therefore, by \lemref{t:directsumnorm}, we obtain that $\norm{\abst(A_{G_{P_\varphi}})} = O(1)$.  

Thus to apply \thmref{t:generalspanprogramalgorithmnonblackbox} we need only bound $\wsizef {P_\varphi}$.  \lemref{t:unbalancedformulaevaluationtruefalsewitness} does not apply, because for $P_{\AND}(s_1, s_2)$, an optimal witness $\ket{w'}$ to $f_{P_{\AND}}(x) = 0$ might have $\norm{\ket{w'}}^2 > 1$, as each $\alpha_j < 1$.  (\lemref{t:unbalancedformulaevaluationtruefalsewitness} would apply had we set the parameters to be $\alpha_1 = \alpha_2 = 1$, $\beta_j = (s_p / s_j)^{1/4}$, but then $\norm{A_{G_{P_{\AND}}}}$ would not necessarily be $O(1)$.)  However, analogous to \lemref{t:unbalancedformulaevaluationtruefalsewitness}, we will show: 

\begin{lemma} \label{t:approxANDORformulaevaluationtruefalsewitness}
Let $v$ be a vertex of $\varphi$.  Then 
\beq
\wsizefx {P_{\varphi_v}} x
\leq 
\begin{cases}
\pathinvsum{v} \ADV(\varphi_v) & \text{if $\varphi_v(x) = 1$} \\
2 \pathinvsum{v} \ADV(\varphi_v) - 1 & \text{if $\varphi_v(x) = 0$}
\end{cases}
\eeq
\end{lemma}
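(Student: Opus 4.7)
The plan is to induct on the height of $v$ in $\varphi_v$. We adopt the bookkeeping convention $\wsizefx{P_{\varphi_c}}{x_c} := 1$ for a leaf $c$; since then $\pathinvsum{c} \ADV(\varphi_c) = 1$, both claimed inequalities hold trivially at leaves, and this matches the external cost $r_c = 1$ that a leaf contributes to the direct-sum composition at its parent.

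For the inductive step, fix an internal vertex $v$ with children $c_1, c_2$ and gate $g_v \in \{\AND, \OR\}$, let $r_j = \ADV(\varphi_{c_j})$ and $r_p = \ADV(\varphi_v) = \sqrt{r_1^2 + r_2^2}$, and note that $P_v = P_{g_v}(r_1^2, r_2^2)$ with input costs $(r_1, r_2)$. Write $y_j = \varphi_{c_j}(x_j)$ for the effective input seen by $P_v$. Applying the per-input bound from the proof of \lemref{t:wsizefcompose} (with $S = \{j : c_j \text{ is internal}\}$) together with the explicit witnesses of $P_v$ from \defref{t:andorspanprogramdef}, everything reduces to the recursion $\pathinvsum{v}\, r_p = 1 + r_p \max_j \pathinvsum{c_j}$.

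In the true case $\varphi_v(x) = 1$, the bound becomes $\wsizefx{P_{\varphi_v}}{x} \leq 1 + \sum_j |w_j|^2 \, \wsizefx{P_{\varphi_{c_j}}}{x_j}$ for the true-witness $\ket w$ of $P_v$ at $y$; substituting $\wsizefx{P_{\varphi_{c_j}}}{x_j} \leq \pathinvsum{c_j} r_j$ and using $\sum_j |w_j|^2 r_j = \wsizex{P_v}{y} \leq r_p$ from \claimref{t:andorspanprogramwsize} gives the target $1 + r_p \max_j \pathinvsum{c_j} = \pathinvsum{v}\, r_p$. In the false case $\varphi_v(x) = 0$, monotonicity of $P_v$ (i.e.~$I_{j, 0} = \emptyset$) restricts the sum to children with $y_j = 0$; substituting $\wsizefx{P_{\varphi_{c_j}}}{x_j} \leq 2 \pathinvsum{c_j} r_j - 1$ produces
\begin{equation*}
\wsizefx{P_{\varphi_v}}{x} \leq \bigg( \norm{\ket{w'}}^2 - \sum_{j : y_j = 0} |\braket{v_j}{w'}|^2 \bigg) + 2 \sum_{j : y_j = 0} |\braket{v_j}{w'}|^2 \, \pathinvsum{c_j} r_j.
\end{equation*}
The parenthesized residual is nonpositive in every subcase: for $P_\AND$, the $\ket{v_j}$ form the standard basis of $V = \C^2$ and $\braket{v_j}{w'} = 0$ whenever $y_j = 1$, so the residual is exactly $0$; for $P_\OR$ at $y = 00$, $\norm{\ket{w'}}^2 = 1 \leq (r_1 + r_2)/r_p = \sum_j |\braket{v_j}{w'}|^2$. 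The second sum is at most $\max_j \pathinvsum{c_j} \cdot \wsizex{P_v}{y} \leq r_p \max_j \pathinvsum{c_j}$ by \claimref{t:andorspanprogramwsize}, so $\wsizefx{P_{\varphi_v}}{x} \leq 2 r_p \max_j \pathinvsum{c_j} \leq 2 \pathinvsum{v}\, r_p - 1$.

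The main obstacle is the false case: the norm $\norm{\ket{w'}}^2$ of the false-witness for $P_\AND$ at $y = 00$ equals $\frac{r_p}{4}(r_1^{-1} + r_2^{-1})$, which is unbounded as the weight ratio grows, so one cannot bound $\wsizef{P_v}/\wsize{P_v}$ by a constant and directly reuse the argument of \lemref{t:unbalancedformulaevaluationtruefalsewitness}. The $-1$ in the claimed bound $2 \pathinvsum{v} \ADV(\varphi_v) - 1$ is engineered precisely so that on substitution the $-1$'s from the inductive hypothesis absorb $\norm{\ket{w'}}^2$ via the residual cancellation above; this is the algebraic mechanism that allows the bound to propagate through the induction despite the unbounded norm of individual false-witnesses.
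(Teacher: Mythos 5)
Your proof is correct, and it follows the paper's own strategy: induction over the tree, the per-input inequality from the proof of \lemref{t:wsizefcompose}, the explicit span programs of \defref{t:andorspanprogramdef} with costs $(r_1,r_2)$, and \claimref{t:andorspanprogramwsize}; the only genuine divergence is how the false case is closed. The paper splits into OR and AND subcases and, for AND, picks a (generally suboptimal) false witness supported on a single false child, $\ket{w'}=\big((s_p/s_1)^{1/4},0\big)$ with a WLOG choice of which child, so that the $-1/\sqrt{s_1}$ supplied by the inductive hypothesis cancels the contribution of $\norm{\ket{w'}}^2$; you instead keep the natural (optimal) witness and cancel via the residual inequality $\norm{\ket{w'}}^2\le\sum_{j:y_j=0}\abs{\braket{v_j}{w'}}^2$, which is an equality for $P_{\AND}$ (its input vectors are the standard basis and any valid witness is supported on the false coordinates) and follows for $P_{\OR}$ from $\epsilon_1^2+\epsilon_2^2=(r_1+r_2)/r_p\ge 1$; you then bound the remaining term by $2\max_j\pathinvsum{c_j}\,\wsizexS{P_v}{y}{(r_1,r_2)}\le 2r_p\max_j\pathinvsum{c_j}\le 2\pathinvsum{v}\,r_p-1$. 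Both routes use the $-1$ in the inductive hypothesis to absorb the unbounded false-witness norm, exactly as you identify; yours avoids the OR/AND case split and the WLOG witness choice (and in fact gives the slightly stronger intermediate bound $2\pathinvsum{v}\ADV(\varphi_v)-2$), at the cost of invoking the per-witness form of the composition bound rather than the lemma's statement verbatim---which is legitimate since the witnesses you use are optimal, so the weighted sum is indeed bounded by \claimref{t:andorspanprogramwsize}. Your leaf convention $\wsizefx{P_{\varphi_c}}{x_c}=1$ matches the unit-cost leaf terms in the composition and correctly subsumes the paper's base case.
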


\begin{proof}
The proof is by induction in the maximum distance from $v$ to a leaf.  The base case, that $v$'s two inputs are themselves leaves is by definition of $P_v$, since then $\pathinvsum{v} = 1 + 1/\sqrt 2$.  

Let $v$ have children $c_1$ and $c_2$.  We will use \lemref{t:wsizefcompose} with $s = \vec 1$, $S = \{ j \in [2] : \text{$c_j$ is not a leaf} \}$.  

If $\varphi_v(x) = 1$, then since $P_v$ is a strict span program, i.e., $\Ifree = \emptyset$, Eq.~\eqnref{e:wsizefcomposetrue} gives 
\beq
\frac{ \wsizefx {P_{\varphi_v}} x }{ \ADV(\varphi_v) }
\leq 
\frac{1}{\ADV(\varphi_v)} + \max_{j \in S} \frac{\wsizef {P_{\varphi_{c_j}}}} {\ADV(\varphi_{c_j})}
 \enspace .
\eeq
By induction, the right-hand side is at most $1/\ADV(\varphi_v) + \max_j \pathinvsum{c_j} = \pathinvsum{v}$.  

If $\varphi_v(x) = 0$ and $g_v$ is an OR gate, then the unique witness $\ket{w'}$ for $P_v$ has $\norm{\ket{w'}} = 1$, from \defref{t:andorspanprogramdef}.  From Eq.~\eqnref{e:wsizefcomposefalse} and the induction hypothesis, 
\beq\begin{split}
\frac{ \wsizefx {P_{\varphi_v}} x }{ \ADVpm(\varphi_v) }
&\leq 
\frac{ 1 }{\ADV(\varphi_v)} + \max_{j \in S} \Big( 2 \pathinvsum{c_j} - \frac{1}{\ADV(\varphi_{c_j})} \Big) \\
&< 
2 \pathinvsum{v} - \frac{1}{\ADV(\varphi_v)}
 \enspace ,
\end{split}\eeq
as claimed.  

Therefore assume that $\varphi_v(x) = 0$ and $g_v$ is an $\AND$ gate.  Let $s_1$ and $s_2$ be the sizes of the two input subformulas to $v$, $s_p = s_1 + s_2 = \ADV(\varphi_v)^2$, and assume without loss of generality that $\varphi_{c_1}(x) = 0$.  If $\varphi_{c_2}(x) = 0$ as well, then assume without loss of generality that $2 \pathinvsum{c_1} - \frac{1}{\sqrt{s_1}} \geq 2 \pathinvsum{c_2} - \frac{1}{\sqrt{s_2}}$, so $\sigma(\bar y) \leq 2 \pathinvsum{c_1} - \frac{1}{\sqrt{s_1}}$.  Then the witness $\ket{w'}$ may be taken to be $\ket{w'} = (1/\alpha_1, 0) = \big( (s_p/s_1)^{1/4}, 0 \big)$.  From Eq.~\eqnref{e:wsizefcomposefalse}, 
\beq\begin{split}
\frac{ \wsizefx {P_{\varphi_v}} x }{ \ADVpm(\varphi_v) }
&\leq 
\frac{ \sqrt{s_p/s_1} }{\ADVpm(\varphi_v)} + \sigma(\bar y) \\
&\leq 
\frac{1}{\sqrt{s_1}} + \Big( 2 \pathinvsum{c_1} - \frac{1}{\sqrt{s_1}} \Big) \\
&<
2 \pathinvsum{v} - \frac{1}{\sqrt{s_p}}
 \enspace ,
\end{split}\eeq
as claimed.  
\end{proof}

In particular, applying \lemref{t:approxANDORformulaevaluationtruefalsewitness} for the case $v = r$, we find 
\beq
\wsizef {P_\varphi} \leq 2 \pathinvsum{\varphi} \ADV(\varphi) = 2 \pathinvsum{\varphi} \sqrt n
 \enspace .
\eeq
\thmref{t:approxbalancedandor} now follows from \thmref{t:generalspanprogramalgorithmnonblackbox}.  
\end{proof}

\section{Open problems}

In order to begin to relax the balance condition for general formulas, it seems that we need a better understanding of the canonical span programs.  For example, can the norm bound \lemref{t:abstAGPnorm} be improved?  

Although the two-sided bounded-error quantum query complexity of evaluating formulas is beginning to be understood, the zero-error quantum query complexity~\cite{BuhrmanCleveDeWolfZalka99zeroerror} appears to be more complicated.  For example, the exact and zero-error quantum query complexities for $\OR_n$ are both~$n$~\cite{BealsBuhrmanCleveMoscaWolf98}.  On the other hand, Ambainis et al.~\cite{AmbainisChildsLegallTani09witness} use the~\cite{AmbainisChildsReichardtSpalekZhang07andor} algorithm as a subroutine in the construction of a self-certifying, zero-error quantum algorithm that makes $O(\sqrt n \log^2 n)$ queries to evaluate the balanced binary AND-OR formula.  It is not known how to relax the balance requirement or extend the gate set.  

Can we develop further methods for constructing span programs with small full witness size, norm and maximum degree?  A companion paper~\cite{Reichardt09andorfaster} studies reduced tensor-product span program composition in order to complement the direct-sum composition that we have used here.  

The case of formulas over non-boolean gates may be more complicated~\cite{Reichardt09spanprogram}, but is still intriguing.

\section*{Acknowledgements}
I thank Andrew Landahl and Robert {\v S}palek for helpful discussions.  Research supported by NSERC and ARO-DTO.

\bibliographystyle{alpha-eprint}
\bibliography{andor}

\newcommand{\etalchar}[1]{$^{#1}$}
\begin{thebibliography}{BCdWZ99}
\expandafter\ifx\csname urlprefix\endcsname\relax\def\urlprefix{URL }\fi
\providecommand{\arxiv}[2][]{\href{http://arxiv.org/pdf/#2}{\texttt{arXiv:#2}}}
\providecommand{\doi}[2][]{\href{http://dx.doi.org/#2}{\texttt{doi:#2}}}

\bibitem[ACGT09]{AmbainisChildsLegallTani09witness}
Andris Ambainis, Andrew~M. Childs, Fran{\c c}ois~Le Gall, and Seiichiro Tani.
\newblock The quantum query complexity of certification.
\newblock 2009, \href{http://www.arxiv.org/abs/0903.1291}{{\tt arXiv:0903.1291
  [quant-ph]}}.

\bibitem[ACR{\etalchar{+}}07]{AmbainisChildsReichardtSpalekZhang07andor}
Andris Ambainis, Andrew~M. Childs, Ben~W. Reichardt, Robert {\v S}palek, and
  Shengyu Zhang.
\newblock Any {AND-OR} formula of size {$N$} can be evaluated in time
  {${N}^{1/2+o(1)}$} on a quantum computer.
\newblock In {\em Proc. 48th IEEE FOCS}, pages 363--372, 2007.

\bibitem[Amb02]{Ambainis00adversary}
Andris Ambainis.
\newblock Quantum lower bounds by quantum arguments.
\newblock {\em J. Comput. Syst. Sci.}, 64:750--767, 2002.
\newblock Earlier version in STOC'00.

\bibitem[Amb05]{Ambainis05polynomialmethod}
Andris Ambainis.
\newblock Polynomial degree and lower bounds in qu complexity: Collision and
  element distinctness with small range.
\newblock {\em Theory of Computing}, 1:37--46, 2005.

\bibitem[Amb06]{Ambainis06polynomial}
Andris Ambainis.
\newblock Polynomial degree vs. quantum query complexity.
\newblock {\em J. Comput. Syst. Sci.}, 72(2):220--238, 2006,
  \href{http://www.arxiv.org/abs/quant-ph/0305028}{{\tt
  arXiv:quant-ph/0305028}}.
\newblock Earlier version in {\em Proc. 44th IEEE FOCS}, 2003.

\bibitem[Amb07]{ambainis07nand}
Andris Ambainis.
\newblock A nearly optimal discrete query quantum algorithm for evaluating
  {NAND} formulas.
\newblock 2007, \href{http://www.arxiv.org/abs/0704.3628}{{\tt arXiv:0704.3628
  [quant-ph]}}.

\bibitem[AS04]{AaronsonShi04collisioned}
Scott Aaronson and Yaoyun Shi.
\newblock Quantum lower bounds for the collision and the element distinctness
  problem.
\newblock {\em J. ACM}, 51(4):595--605, 2004.

\bibitem[BB94]{bb:size-depth}
Maria~Luisa Bonet and Samuel~R. Buss.
\newblock Size-depth tradeoffs for {Boolean} formulae.
\newblock {\em Information Processing Letters}, 49(3):151--155, 1994.

\bibitem[BBBV97]{BennettBernsteinBrassardVazirani97upper}
Charles~H. Bennett, Ethan Bernstein, Gilles Brassard, and Umesh Vazirani.
\newblock Strengths and weaknesses of quantum computing.
\newblock {\em {SIAM} J. Comput.}, 26(5):1510--1523, 1997,
  \href{http://www.arxiv.org/abs/quant-ph/9701001}{{\tt
  arXiv:quant-ph/9701001}}.

\bibitem[BBC{\etalchar{+}}01]{BealsBuhrmanCleveMoscaWolf98}
Robert Beals, Harry Buhrman, Richard Cleve, Michele Mosca, and Ronald de~Wolf.
\newblock Quantum lower bounds by polynomials.
\newblock {\em J. ACM}, 48(4):778--797, 2001,
  \href{http://www.arxiv.org/abs/quant-ph/9802049}{{\tt
  arXiv:quant-ph/9802049}}.

\bibitem[BCdWZ99]{BuhrmanCleveDeWolfZalka99zeroerror}
Harry Buhrman, Richard Cleve, Ronald de~Wolf, and Christof Zalka.
\newblock Bounds for small-error and zero-error quantum algorithms.
\newblock 1999, \href{http://www.arxiv.org/abs/cs/9904019 [cs.CC]}{{\tt
  arXiv:cs/9904019 [cs.CC]}}.

\bibitem[BCE91]{bce:size-depth}
Nader~H. Bshouty, Richard Cleve, and Wayne Eberly.
\newblock Size-depth tradeoffs for algebraic formulae.
\newblock In {\em Proc. 32nd IEEE FOCS}, pages 334--341, 1991.

\bibitem[BCW98]{BuhrmanCleveWigderson98}
Harry Buhrman, Richard Cleve, and Avi Wigderson.
\newblock Quantum vs.~classical communication and computation.
\newblock In {\em Proc. 30th ACM STOC}, pages 63--68, 1998,
  \href{http://www.arxiv.org/abs/quant-ph/9802040}{{\tt
  arXiv:quant-ph/9802040}}.

\bibitem[BS04]{BarnumSaks04readonce}
Howard Barnum and Michael Saks.
\newblock A lower bound on the quantum query complexity of read-once functions.
\newblock {\em J. Comput. Syst. Sci.}, 69(2):244--258, 2004,
  \href{http://www.arxiv.org/abs/quant-ph/0201007}{{\tt
  arXiv:quant-ph/0201007}}.

\bibitem[BSS03]{BarnumSaksSzegedy03adv}
Howard Barnum, Michael Saks, and Mario Szegedy.
\newblock Quantum decision trees and semidefinite programming.
\newblock In {\em Proc. 18th IEEE Complexity}, pages 179--193, 2003.

\bibitem[CNW09]{ChiangNagajWocjan09simulate}
Chen-Fu Chiang, Daniel Nagaj, and Pawl Wocjan.
\newblock An efficient circuit for the quantum walk update rule.
\newblock 2009, \href{http://www.arxiv.org/abs/0903.3465}{{\tt arXiv:0903.3465
  [quant-ph]}}.

\bibitem[FGG07]{fgg:and-or}
Edward Farhi, Jeffrey Goldstone, and Sam Gutmann.
\newblock A quantum algorithm for the {Hamiltonian} {NAND} tree.
\newblock 2007, \href{http://www.arxiv.org/abs/quant-ph/0702144}{{\tt
  arXiv:quant-ph/0702144}}.

\bibitem[GR02]{GroverRudolph02superposition}
Lov~K. Grover and Terry Rudolph.
\newblock Creating superpositions that correspond to efficiently integrable
  probability distributions.
\newblock 2002, \href{http://www.arxiv.org/abs/quant-ph/0208112}{{\tt
  arXiv:quant-ph/0208112}}.

\bibitem[Gro96]{grover:search}
Lov~K. Grover.
\newblock A fast quantum mechanical algorithm for database search.
\newblock In {\em Proc. 28th ACM STOC}, pages 212--219, 1996,
  \href{http://www.arxiv.org/abs/quant-ph/9605043}{{\tt
  arXiv:quant-ph/9605043}}.

\bibitem[Gro02]{grover:search-time}
Lov~K. Grover.
\newblock Tradeoffs in the quantum search algorithm.
\newblock 2002, \href{http://www.arxiv.org/abs/quant-ph/0201152}{{\tt
  arXiv:quant-ph/0201152}}.

\bibitem[HL{\v S}05]{HoyerLeeSpalek05compose}
Peter H{\o}yer, Troy Lee, and Robert {\v S}palek.
\newblock Tight adversary bounds for composite functions.
\newblock 2005, \href{http://www.arxiv.org/abs/quant-ph/0509067}{{\tt
  arXiv:quant-ph/0509067}}.

\bibitem[HL{\v S}06]{HoyerLeeSpalek07negativeadvurl}
Peter H{\o}yer, Troy Lee, and Robert {\v S}palek.
\newblock Source codes of semidefinite programs for {ADV$^{\pm}$}.
\newblock {\url{http://www.ucw.cz/~robert/papers/adv/}}, 2006.

\bibitem[HL{\v S}07]{HoyerLeeSpalek07negativeadv}
Peter H{\o}yer, Troy Lee, and Robert {\v S}palek.
\newblock Negative weights make adversaries stronger.
\newblock In {\em Proc. 39th ACM STOC}, pages 526--535, 2007,
  \href{http://www.arxiv.org/abs/quant-ph/0611054}{{\tt
  arXiv:quant-ph/0611054}}.

\bibitem[HMW03]{HoyerMoscaDeWolf03berror-search}
Peter H{\o}yer, Michele Mosca, and Ronald~{de} Wolf.
\newblock Quantum search on bounded-error inputs.
\newblock In {\em Proc. 30th ICALP}, pages 291--299, 2003,
  \href{http://www.arxiv.org/abs/quant-ph/0304052}{{\tt
  arXiv:quant-ph/0304052}}.
\newblock LNCS 2719.

\bibitem[HNS02]{HoyerNeerbekShi02adv}
Peter H{\o}yer, Jan Neerbek, and Yaoyun Shi.
\newblock Quantum complexities of ordered searching, sorting, and element
  distinctness.
\newblock {\em Algorithmica}, 34(4):429--448, 2002,
  \href{http://www.arxiv.org/abs/quant-ph/0102078}{{\tt
  arXiv:quant-ph/0102078}}.
\newblock Special issue on Quantum Computation and Cryptography.

\bibitem[HNW93]{HeimanNewmanWigderson90threshold}
Rafi Heiman, Ilan Newman, and Avi Wigderson.
\newblock On read-once threshold formulae and their randomized decision tree
  complexity.
\newblock {\em Theoretical Computer Science}, 107(1):63--76, 1993.

\bibitem[HW91]{HeimanWigderson91generalANDOR}
Rafi Heiman and Avi Wigderson.
\newblock Randomized vs. deterministic decision tree complexity for read-once
  boolean functions.
\newblock {\em Computational Complexity}, 1(4):311--329, 1991.
\newblock Earlier version in Structure in Complexity Theory '91.

\bibitem[JKS03]{JayramKumarSivakumar03majority}
T.~S. Jayram, Ravi Kumar, and D.~Sivakumar.
\newblock Two applications of information complexity.
\newblock In {\em Proc. 35th ACM STOC}, pages 673--682, 2003.

\bibitem[KSV02]{ksw:qc-book}
Alexei~Yu. Kitaev, Alexander~H. Shen, and Mikhail~N. Vyalyi.
\newblock {\em Classical and Quantum Computation}, volume~47 of {\em Graduate
  Studies in Mathematics}.
\newblock American Mathematical Society, Providence, Rhode Island, 2002.

\bibitem[KW93]{KarchmerWigderson93span}
Mauricio Karchmer and Avi Wigderson.
\newblock On span programs.
\newblock In {\em Proc. 8th IEEE Symp. Structure in Complexity Theory}, pages
  102--111, 1993.

\bibitem[LLS06]{LaplanteLeeSzegedy06adversary}
Sophie Laplante, Troy Lee, and Mario Szegedy.
\newblock The quantum adversary method and classical formula size lower bounds.
\newblock {\em Computational Complexity}, 15:163--196, 2006,
  \href{http://www.arxiv.org/abs/quant-ph/0501057}{{\tt
  arXiv:quant-ph/0501057}}.
\newblock Earlier version in Complexity'05.

\bibitem[LM04]{LaplanteMagniez04kolmogorov}
Sophie Laplante and Fr{\'e}d{\'e}ric Magniez.
\newblock Lower bounds for randomized and quantum query complexity using
  {Kolmogorov} arguments.
\newblock In {\em Proc. 19th IEEE Complexity}, pages 294--304, 2004,
  \href{http://www.arxiv.org/abs/quant-ph/0311189}{{\tt
  arXiv:quant-ph/0311189}}.

\bibitem[NC00]{NielsenChuang00}
Michael~A. Nielsen and Isaac~L. Chuang.
\newblock {\em Quantum computation and quantum information}.
\newblock Cambridge University Press, Cambridge, 2000.

\bibitem[Rei09a]{Reichardt09spanprogram}
Ben~W. Reichardt.
\newblock Span programs and quantum query complexity: The general adversary
  bound is nearly tight for every boolean function.
\newblock 2009, \href{http://www.arxiv.org/abs/0904.2759}{{\tt arXiv:0904.2759
  [quant-ph]}}.

\bibitem[Rei09b]{Reichardt09andorfaster}
Ben~W. Reichardt.
\newblock Faster quantum algorithm for evaluating game trees.
\newblock 2009, \href{http://www.arxiv.org/abs/0907.1623}{{\tt arXiv:0907.1623
  [quant-ph]}}.

\bibitem[R{\v S}08]{ReichardtSpalek08spanprogram}
Ben~W. Reichardt and Robert {\v S}palek.
\newblock Span-program-based quantum algorithm for evaluating formulas.
\newblock In {\em Proc. 40th ACM STOC}, pages 103--112, 2008,
  \href{http://www.arxiv.org/abs/0710.2630}{{\tt arXiv:0710.2630 [quant-ph]}}.

\bibitem[San95]{santha:and-or}
Miklos Santha.
\newblock On the {M}onte {C}arlo decision tree complexity of read-once
  formulae.
\newblock {\em Random Structures and Algorithms}, 6(1):75--87, 1995.
\newblock Earlier version in {\em Proc. 6th IEEE Structure in Complexity
  Theory}, 1991.

\bibitem[Sni85]{snir:dec}
Marc Snir.
\newblock Lower bounds on probabilistic linear decision trees.
\newblock {\em Theoretical Computer Science}, 38:69--82, 1985.

\bibitem[{\v S}S06]{SpalekSzegedy04advequivalent}
Robert {\v S}palek and Mario Szegedy.
\newblock All quantum adversary methods are equivalent.
\newblock {\em Theory of Computing}, 2(1):1--18, 2006,
  \href{http://www.arxiv.org/abs/quant-ph/0409116}{{\tt
  arXiv:quant-ph/0409116}}.
\newblock Earlier version in ICALP'05.

\bibitem[SW86]{sw:and-or}
Michael Saks and Avi Wigderson.
\newblock Probabilistic {Boolean} decision trees and the complexity of
  evaluating game trees.
\newblock In {\em Proc. 27th IEEE FOCS}, pages 29--38, 1986.

\bibitem[Sze04]{Szegedy04walkfocs}
Mario Szegedy.
\newblock Quantum speed-up of {M}arkov chain based algorithms.
\newblock In {\em Proc. 45th IEEE FOCS}, pages 32--41, 2004.

\bibitem[Zha05]{Zhang05adv}
Shengyu Zhang.
\newblock On the power of {Ambainis's} lower bounds.
\newblock {\em Theoretical Computer Science}, 339(2-3):241--256, 2005,
  \href{http://www.arxiv.org/abs/quant-ph/0311060}{{\tt
  arXiv:quant-ph/0311060}}.
\newblock Earlier version in ICALP'04.

\end{thebibliography}

\appendix

\section{Spectral gap for approximately balanced AND-OR formulas} \label{s:andorspectralgap}

\renewcommand{\NAND}{\mathop{\overline\wedge}}

It is perhaps of interest to understand why~\cite{AmbainisChildsReichardtSpalekZhang07andor} imposes the unnecessary condition that $\pathsum{\varphi} = O(n)$.  The proof in~\cite{AmbainisChildsReichardtSpalekZhang07andor} has two main cases, an eigenvalue-zero analysis of a certain graph, and a small, nonzero eigenvalue analysis of the graph.  The quantity $\pathsum{\varphi}$ appears only in the small-eigenvalue analysis, and not in the eigenvalue-zero analysis.  However,~\cite[Theorem~8.7]{Reichardt09spanprogram} states, roughly, that the small-eigenvalue analysis of a span program is unnecessary, as it follows from the eigenvalue-zero analysis.  This provides a strong indication that the small-eigenvalue analysis in~\cite{AmbainisChildsReichardtSpalekZhang07andor} is overly conservative.  However, this conclusion is not certain to be the case, since \cite[Theorem~8.7]{Reichardt09spanprogram} only shows the existence of an ``effective" spectral gap based on the eigenvalue-zero analysis, whereas~\cite{AmbainisChildsReichardtSpalekZhang07andor} in fact proves an actual spectral gap.  

In this appendix, we therefore prove a stronger version of~\cite[Lemma~5]{AmbainisChildsReichardtSpalekZhang07andor}, that does not depend on $\pathsum{\varphi}$.  This proof can be seen as an alternative, more direct way of proving \thmref{t:approxbalancedandor}, without relying on the span program framework.  It confirms that the spectral gap is of the same size, up to constants, as the effective spectral gap from~\cite[Theorem~8.7]{Reichardt09spanprogram}.  

Replacing~\cite[Lemma~5]{AmbainisChildsReichardtSpalekZhang07andor} with \lemref{t:ybounds} below gives an alternative proof of \thmref{t:approxbalancedandor}.  We use the notation from~\cite{AmbainisChildsReichardtSpalekZhang07andor}.  

\newcommand{\z}[1]{\sqrt{s_{#1}}}
\begin{lemma} \label{t:ybounds}
Let $0 < E \le 1/\big( 8 \pathinvsum{\varphi}^3 N \big){}^{1/2}$.
For vertices $v \neq r''$ in $T$, define $y_v$ by
\beq \label{eqn:y1v}
y_v = \sqrt{s_v} \cdot \frac{\pathinvsum{v}}{1 - \gamma_v E^2}
 \enspace ,
\eeq
where $\gamma_v = 4 \pathinvsum{\varphi}^2 s_v \, \pathinvsum{v}$.  Then for every vertex $v \neq r''$ in $T$, having parent $p$, either $a_v = a_p = 0$, or
\beq
\begin{array}{l r @{\ } l}
\NAND(v) = 0 \quad \Rightarrow & 0 <  (h_{pv} a_p) / a_v &\leq y_v E \\
\NAND(v) = 1 \quad \Rightarrow & 0 > a_v / (h_{pv} a_p) &\geq -y_v E \enspace .\\
\end{array}
\eeq
\end{lemma}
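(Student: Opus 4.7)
The plan is to prove this by induction on the height of $v$ in $T$, working from leaves to root, using the eigenvalue equation $E a_v = h_{pv} a_p + \sum_{c \,\mathrm{child\,of}\, v} h_{vc} a_c$ supplied by $H\ket{a} = E \ket{a}$. The base case covers vertices $v$ adjacent to the input leaves, where the boundary condition imposed by the value of $x$ at the leaf pins down $a_v/(h_{pv} a_p)$ (resp.\ $h_{pv} a_p/a_v$) exactly, and the definitions give $\pathinvsum{v} = 1/\sqrt{s_v}$, matching the claimed bound with room to spare inside the $1/(1-\gamma_v E^2)$ factor.

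For the inductive step at an internal vertex $v$, I would split on $\NAND(v)$. If $\NAND(v)=0$, every child $c_j$ has $\NAND(c_j)=1$, so the induction hypothesis gives $a_{c_j}/(h_{vc_j}a_v) \in [-y_{c_j}E, 0)$. Rearranging the eigenvalue equation to isolate $h_{pv}a_p/a_v$ and plugging in these child bounds yields
\[
0 < \frac{h_{pv}a_p}{a_v} \le E + \sum_j h_{vc_j}^2\, y_{c_j} E,
\]
and one then checks that $1 + \sum_j h_{vc_j}^2 y_{c_j} \le \sqrt{s_v}\,\pathinvsum{v}/(1-\gamma_v E^2)$ using the recursion $\pathinvsum{v} = 1/\sqrt{s_v} + \max_j \pathinvsum{c_j}$ and the weighting $h_{vc_j}^2 = \sqrt{s_{c_j}/s_v}$ (or the appropriate analogue in the ACRSZ normalization). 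If $\NAND(v)=1$, then at least one child $c_*$ has $\NAND(c_*)=0$, and the IH $h_{vc_*} a_v/a_{c_*} \le y_{c_*} E$ means that the $c_*$-term dominates the eigenvalue equation; the remaining children (with $\NAND=1$) contribute only at order $E^2$ relative to $a_v$, so after dividing by $h_{pv}a_p$ one obtains a small negative ratio, again bounded by $y_v E$ modulo the $(1-\gamma_v E^2)$ correction.

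The main obstacle is the quantitative bookkeeping that makes the geometric-series structure of $\pathinvsum{v}$ survive the recursion cleanly. Concretely, I expect the hardest step to be showing that in the $\NAND(v)=1$ case the ``parasitic'' contributions from the children $c_j$ with $\NAND(c_j)=1$ (each of which is quadratic in $E$ and scaled by $y_{c_j}^2$) can be absorbed into the new $\gamma_v E^2$ correction without degrading the linear-in-$E$ bound. This is where the hypothesis $E \le 1/(8\pathinvsum{\varphi}^3 N)^{1/2}$ is invoked: it keeps $\gamma_v E^2$ uniformly below $1/2$ at every vertex, so repeated composition of the factor $1/(1-\gamma_v E^2)$ along a root-to-leaf path telescopes to a constant. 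Once the inductive bound reaches the root, it implies the spectral gap statement needed to replace~\cite[Lemma~5]{AmbainisChildsReichardtSpalekZhang07andor}, with no dependence on $\pathsum{\varphi}$.
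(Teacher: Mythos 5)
Your plan follows the paper's proof exactly in outline: induction on height via the eigenvalue equation $E a_v = h_{pv}a_p + \sum_c h_{vc}a_c$, a case split on $\NAND(v)$, and the telescoping identity $\pathinvsum{v} = \frac{1}{\sqrt{s_v}} + \max_c \pathinvsum{c}$ together with $\sum_c s_c \le s_v$ and the monotonicity $\gamma_c \le \gamma_v$. The $\NAND(v)=0$ case as you sketch it does go through. But there is a genuine gap at exactly the point you flag as ``the main obstacle'': in the $\NAND(v)=1$ case you only assert that the parasitic contributions (the additive $E$ and the children with $\NAND(c)=1$) ``can be absorbed'' into the factor $1/(1-\gamma_v E^2)$ with the \emph{specific} constant $\gamma_v = 4\pathinvsum{\varphi}^2 s_v \pathinvsum{v}$. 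That absorption is the entire content of the lemma---it is what eliminates the $\pathsum{\varphi}$ dependence of \cite[Lemma~5]{AmbainisChildsReichardtSpalekZhang07andor}---and it does not follow from the structural remarks you give; it needs a concrete computation which your proposal omits.

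Concretely, what is missing is the following chain. With $c^{*}$ a child having $\NAND(c^{*})=0$ and $S_1 = \sum_{c:\NAND(c)=1} s_c$, the eigenvalue equation and the induction hypothesis give $\frac{h_{pv}a_p}{a_v} \le \frac{2\pathinvsum{\varphi}E}{\sqrt{s_v}}\big(\sqrt{s_v}+S_1\big) - \sqrt{\tfrac{s_{c^{*}}}{s_v}}\,\frac{1}{y_{c^{*}}E}$; one must first check this right-hand side is strictly negative (using $\gamma_v E^2 \le 1/2$ and $\sqrt{s_v/s_{c^{*}}}\,y_{c^{*}}E \le y_v E \le 1/\sqrt{2}$, which is where the hypothesis on $E$ enters) in order to conclude $a_p \neq 0$ and to invert legitimately. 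Then, after multiplying numerator and denominator by $\sqrt{s_v/s_{c^{*}}}\,y_{c^{*}}E$ and applying $\frac{1}{1-a}\cdot\frac{1}{1-b} \le \frac{1}{1-(a+b)}$, one must verify that the coefficient of $E^2$ in the denominator, namely $\gamma_{c^{*}} + 4\pathinvsum{\varphi}^2(\sqrt{s_v}+S_1)$, is at most $\gamma_v$, which follows from $s_{c^{*}}+S_1 \le s_v$ and $\pathinvsum{v} = \frac{1}{\sqrt{s_v}} + \max_c\pathinvsum{c}$. None of this bookkeeping appears in your proposal, so the argument stops precisely where the lemma's difficulty lies. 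Two smaller repairs: the base case should be a leaf $v$ itself, where $Ea_v = h_{pv}a_p$ gives the ratio $E \le y_v E$ directly (your claim $\pathinvsum{v} = 1/\sqrt{s_v}$ for a vertex just above the leaves undercounts, since the leaf contributes $1$ to $\sigma_-$), and you must treat $a_v = 0$ separately, since then the induction hypothesis forces all child amplitudes to vanish and the eigenvalue equation yields $a_p = 0$, which is the first branch of the lemma's disjunction.
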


\begin{proof}
The proof is by induction on the height of the tree.  Recall that $h_{pv} = (s_v / s_p)^{1/4}$.  By~\cite[Eq.~(5)]{AmbainisChildsReichardtSpalekZhang07andor}, we have the equation
\beq \label{e:hh}
E a_v = h_{pv} a_p + \sum_c h_{vc} a_c
 \enspace ,
\eeq
where the sum is over the children $c$ of $v$.  By~\cite[Lemma~8]{AmbainisChildsReichardtSpalekZhang07andor}, we may assume that $v$ has at most two children.  

For a leaf $v$, $\NAND(v)=0$ and $E a_v = h_{pv} a_p$.  Thus either $a_v = a_p = 0$, or 
\beq
\frac{h_{pv} a_p}{a_ v} = E \leq y_v E
 \enspace.
\eeq
This settles the base case of the induction argument.  

The induction proceeds as follows.  First, consider the case that $a_v = 0$.  Then the induction hypothesis implies that $a_c = 0$ for every child $c$ of $v$, whether $\NAND(v)$ is $0$ or $1$.  From Eq.~\eqnref{e:hh}, then, indeed $a_p = 0$.  Assume now that $a_v \neq 0$.  Dividing Eq.~\eqnref{e:hh} by $a_v$ and simplifying, we find 
\beq \label{e:hhdividebyav}
\frac{h_{pv} a_p}{a_v} 
= E - \sum_c h_{vc}^2 \frac{a_c}{h_{vc} a_v} 
 \enspace .
\eeq
By the induction assumption, we have
\beq
\frac{-a_c}{h_{vc} a_v} \leq \begin{cases} y_c E & \text{if $\NAND(c) = 1$} \\ \frac{-1}{y_c E} & \text{if $\NAND(c) = 0$} \end{cases}
\eeq
Substituting this bound into Eq.~\eqnref{e:hhdividebyav} gives 
\beq\begin{split} \label{e:hhinequality}
\frac{h_{pv} a_p}{a_v} 
&\leq E + \sum_{c : \NAND(c) = 1} \sqrt{\frac{s_c}{s_v}} y_c E - \sum_{c : \NAND(c) = 0} \sqrt{\frac{s_c}{s_v}} \frac{1}{y_c E} \\
&= E \Bigl(1 + \sum_{c : \NAND(c) = 1} \frac{s_c}{\sqrt{s_v}} \frac{\pathinvsum{c}}{1-\gamma_c E^2} \Bigr) - \sum_{c : \NAND(c) = 0} \sqrt{\frac{s_c}{s_v}} \frac{1}{y_c E} \\
&\leq E \Bigl( \max_c \frac{1}{1 - \gamma_c E^2} \Bigr) \Bigl( 1 + \big(\max_c \pathinvsum{c}\big) \frac{1}{\sqrt{s_v}} \sum_{c : \NAND(c) = 1} s_c \Bigr) - \sum_{c : \NAND(c) = 0} \sqrt{\frac{s_c}{s_v}} \frac{1}{y_c E}
 \enspace .
\end{split}\eeq
In particular, using that $\sum_{c : \NAND(c) = 1} s_c \leq s_v$ and $\pathinvsum{v} = \frac{1}{\sqrt{s_v}} + \max_c \pathinvsum{c}$, we obtain the bound 
\beq \label{e:hhinequality1}
\frac{h_{pv} a_p}{a_v} \leq y_v E - \sum_{c : \NAND(c) = 0} \sqrt{\frac{s_c}{s_v}} \frac{1}{y_c E}
 \enspace .
\eeq
Using instead $\gamma_c E^2 \leq 1/2$ and $\max_c \pathinvsum{c} \leq \pathinvsum{\varphi}$, we obtain the bound 
\beq \label{e:hhinequality2}
\frac{h_{pv} a_p}{a_v} \leq \frac{2 \pathinvsum{\varphi} E}{\sqrt{s_v}} \Big(\sqrt{s_v} + \sum_{c : \NAND(c) = 1} s_c\Big) - \sum_{c : \NAND(c) = 0} \sqrt{\frac{s_c}{s_v}} \frac{1}{y_c E}
 \enspace .
\eeq
We will apply both Eq.~\eqnref{e:hhinequality1} and Eq.~\eqnref{e:hhinequality2} below.  

Now we consider two cases, depending on whether $\NAND(v)$ is $0$ or $1$, i.e., on whether $\{c : \NAND(c) = 0\}$ is empty or not.  

If $\NAND(v) = 0$, then all children of $v$ evaluate to $1$.  Therefore, from Eq.~\eqnref{e:hhinequality1}, $h_{pv} a_p / a_v \leq y_v E$, as claimed.  The induction hypothesis also gives from Eq.~\eqnref{e:hhdividebyav} that $h_{pv} a_p / a_v \geq E > 0$.

If $\NAND(v) = 1$, then there is a child $c^*$ with $\NAND(c^*) = 0$.  Using that $\gamma_v E^2 \leq 1/2$ and $\sqrt{s_v / s_{c^*}} y_{c^*} E \leq y_v E \leq 1/\sqrt 2$, Eq.~\eqnref{e:hhinequality1} gives $h_{pv} a_p / a_v \leq \frac 1 {\sqrt 2} - \sqrt 2 < 0$, so $a_p \neq 0$.  We wish to argue that $-a_v / (h_{pv} a_p) \leq y_v E$.  Indeed, from Eq.~\eqnref{e:hhinequality2}, letting $C = \pathinvsum{\varphi}$ and $S_1 = \sum_{c : \NAND(c) = 1} s_c$, 
\beq\begin{split}
\frac{-a_v}{h_{pv} a_p}
&\leq 
\frac{1}{ \frac{1}{\sqrt{\frac{s_v}{s_{c^*}}} y_{c^*} E} - \frac{2 C E}{\sqrt{s_v}} ( \sqrt{s_v} + S_1 ) } \\
&\leq
\frac{\sqrt{s_v} \pathinvsum{v} E}{1 - \big(\gamma_{c^*} + \sqrt{\frac{s_v}{s_{c^*}}} y_{c^*} \frac{2 C}{\sqrt{s_v}} (\sqrt{s_v} + S_1) \big)E^2}
 \enspace ,
\end{split}\eeq
where in the second step we have multiplied numerator and denominator by $\sqrt{\frac{s_v}{s_{c^*}}} y_{c^*} E$, and applied the inequality $\frac{1}{1-a}\frac{1}{1-b} \leq \frac{1}{1-(a+b)}$.  The coefficient of $E^2$ in the denominator is bounded by
\beq\begin{split}
\gamma_{c^*} + 4 C^2 (\sqrt{s_v} + S_1)
&= 4 C^2 (s_{c^*} \pathinvsum{c^*} + \sqrt{s_v} + S_1) \\
&\leq 4 C^2 (s_v \pathinvsum{c^*} + \sqrt{s_v}) \\
&\leq 4 C^2 s_v \pathinvsum{v} = \gamma_v
 \enspace ,
\end{split}\eeq
where we have used in the first inequality that $s_{c^*} + S_1 \leq s_v$.  Hence $-a_v / (h_{pv} a_p) \leq y_v E$.  
\end{proof}

\end{document}